\documentclass[12pt]{amsart}
\voffset=-1cm
\textwidth=15cm
\textheight=23cm
\oddsidemargin=1cm
\evensidemargin=1cm
\usepackage{amsmath, mathabx}
\usepackage{amsxtra}
\usepackage{amscd}
\usepackage{amsthm}
\usepackage{amsfonts}
\usepackage{amssymb}
\usepackage{mathrsfs}
\usepackage{mathbbol}


\usepackage {pstricks}
\usepackage {tikz}
\usepackage{pstricks,pst-node}
\usepackage[all]{xy}
\usepackage{mathbbol}



\usepackage{verbatim}

\usepackage{dashundergaps}
\usepackage{tikz-cd}
\usepackage{enumerate}
\usepackage{graphicx}

\usepackage{epigraph}
\usepackage{physics}


\newtheorem{theorem}{Theorem}[section]

\newtheorem{lemma}[theorem]{Lemma}

\newtheorem{corollary}[theorem]{Corollary}

\theoremstyle{definition}
\newtheorem{definition}[theorem]{Definition}
\newtheorem{example}[theorem]{Example}

\theoremstyle{remark}
\newtheorem{remark}[theorem]{Remark}

\numberwithin{equation}{section}
\numberwithin{figure}{section}



\newcommand{\mc}{\mathcal}

\newcommand{\be}{\begin{equation}}
\newcommand{\ee}{\end{equation}}

\newcommand{\C}{{\mathbb C}}
\newcommand{\R}{{\mathbb R}}
\newcommand{\Z}{{\mathbb Z}}

\newcommand{\K}{{\mathbb K}}

\newcommand{\X}{{\mathbb X}}

\newcommand{\CA}{{\mathcal A}}

\newcommand{\CE}{{\mathcal E}}
\newcommand{\CF}{{\mathcal F}}

\newcommand{\CH}{{\mathcal H}}
\newcommand{\CP}{{\mathcal P}}
\newcommand{\CO}{{\mathcal O}}

\newcommand{\cD}{\mc D}

\newcommand{\SC}{{\mathscr C}}
\newcommand{\SE}{{\mathscr E}}

\newcommand{\SD}{{\mathscr D}}

\newcommand{\mf}{\mathfrak}

\newcommand{\fg}{{\mf g}}
\newcommand{\fh}{{\mf h}}
\newcommand{\fb}{{\mf b}}

\newcommand{\fl}{{\mathfrak l}}

\newcommand{\fp}{{\mathfrak p}}

\newcommand{\fu}{{\mathfrak u}}

\newcommand{\fso}{{\mathfrak {so}}}

\newcommand{\gl}{{\mathfrak {gl}}}
\newcommand{\osp}{{\mathfrak {osp}}}



\newcommand{\Ker}{{\rm{Ker}}}
\newcommand{\id}{{\rm{id}}}



\newcommand{\U}{{\rm{U}}}
\newcommand{\End}{{\rm{End}}}

\newcommand{\GL}{{\rm{GL}}}

\newcommand{\Sym}{{\rm{Sym}}}

\newcommand{\wt}{\widetilde}





\newcommand{\ol}{\overline}





\newcommand{\ot}{\otimes}







\newcommand{\beq}{\begin{eqnarray}}
\newcommand{\eeq}{\end{eqnarray}}

\newcommand{\baln}{\begin{aligned}}
\newcommand{\ealn}{\end{aligned}}

\newcommand{\lra}{\longrightarrow}

\newcommand{\wh}{\widehat}



\newcommand{\FB}{\mathfrak{B}}

\newcommand{\Lam}{\Lambda}

\newcommand{\W}{\mathcal{W}}

\newcommand{\IW}{{{\mathit I}\W}}
\newcommand{\ILam}{{\mathit I}\Lambda}

\newcommand{\CS}{\mathcal{S}}

\newcommand{\BKP}{\K{\mathbb P}}


\begin{document}

\normalfont

\title[Para-spaces]
{Para-spaces, their differential analysis and \\ an application to  Green's quantisation}

\author{R.B. Zhang}
\address{School of Mathematics and Statistics,
The University of Sydney, Sydney, N.S.W. 2006, Australia}
\email{ruibin.zhang@sydney.edu.au}
\date{\today}

\begin{abstract}
We introduce a class of non-commutative geometries, loosely referred to as para-spaces, 
which are manifolds equipped with sheaves 
of non-commutative algebras called para-algebras.
%
%
A differential analysis on para-spaces is investigated, 
which is reminiscent of that on super manifolds  
and can be readily applied to model physical problems, for example, 
by using para-space analogues of differential equations. 
Two families of examples, the affine para-spaces $\K^{m|n}(p)$  
and para-projective spaces $\mathbb{KP}^{m|n}(p)$, 
with $\K$ being $\R$ and $\C$, 
are treated in detail for all positive integers $p$.  
As an application of such non-commutative geometries, we interpret 
Green's theory of parafermions in terms of para-spaces on a point. 
Other potential applications in quantum field theory are also commented upon. 
\end{abstract}
\maketitle


\section{Introduction}\label{sect:intro}
\noindent{1}.
Canonical quantisation imposes commutation relations among  fields 
and conjugate momenta at equal time.  
However, physically meaningful notions such as time evolution and symmetry properties
all involve commutation relations with various currents, 
which are quadratic (in free theories) in fields and conjugate momenta.  
It is therefore more natural to carry out quantisation by imposing equal time commutations relations of currents 
with fields and conjugate momenta, and Green's generalised quantisation \cite{G} does that. 
[Prof. Green explained this to the author in the late 80s.]  
This simple idea had far reaching consequences, 
leading to the celebrated theory of para-statistics \cite{G}, which includes the usual Bose-Fermi statistics 
as the special case  of ``order $1$''. 
There is a huge literature, which is still steadily growing,  
on the subject of para-statistics from the point of view of quantum theory 
(early references can be found in \cite{OK, OK90}). 
There is also a Lie theoretical perspective of para-statistics  \cite{BG, Ca}, 
which too has been widely studied, see, e.g., \cite{BG, Ca, LSV,  P,  SV08} and references therein. 
In recent years there have been reports of quantum simulations of para-particles using theories of 
systems of trapped ions (see \cite{AG} and references therein). This raises the hope that 
some quasi particles appearing in condensed matter may indeed be para-particles. 
 
 %
%
%
%
In quantum mechanics, i.e., field theory in $1$-dimension, fields 
and conjugate moment are coordinates and momenta of particles. 
The usual description of fermions at the classical level takes their  
coordinates (and momenta) as Grassmann variables,
which are coordinate components in odd dimensions of supermanifolds. 
Clearly this is already outside the territory of analysis on usual manifolds.
We mention, in particular, the description of superparticles  (on which there is extensive literature following the original work of Brink and collaborators \cite{BDZ, BdVP}), which is impossible without using supermanifolds. 

\medskip\noindent{2}.
Parafermions include fermions as a special case, thus one is prompted to ask
for the natural ``spaces"  whose coordinates describe parafermions at the classical level,   
and the algebras in which the coordinates of parfermionic are valued. 
%
Satisfactory answers to both questions are necessary in order to 
treat Green's quantisation with mathematical rigour, ideally at a level 
comparable to geometric quantisation or deformation quantisation. 

These questions were investigated
in a series of papers by Ohnuki and Kamefuchi in the 80s starting with \cite{OK80-1, OK80}
(see \cite{OK90} and references therein). 
The authors considered ``para-Grassmann numbers'' by assuming a Green ansatz  \cite[(1.3)]{OK80}
 to map them to an auxiliary ``generalised Grassmann algebra".
 %
Essentials of para-Grassmann numbers are described in Section \ref{sect:graded} (Section \ref{sect:inhom} in particular), 
where one can recognise the auxiliary generalised Grassmann algebra of loc. cit. as 
a $\Z_2^n$-graded commutative algebra for some $p$ and $n$.
%


The Green ansatz inside an auxiliary generalised Grassmann algebra    
is an extremely effective tool for gaining insights into para-Grassmann numbers, 
but the lack of an intrinsic treatment makes para-Grassmann numbers difficult to use if one wants to build 
generalised geometries with them. In \cite{OK80-1, OK80} 
and subsequent papers by the same authors, there were discussions, 
from a physical point of view,  on generalised spaces with 
para-Grassmann numbers as coordinates, which resided inside some 
auxiliary $\Z_2^n$-graded commutative ``space'' related to
the  generalised Grassmann algebra. 

We feel that there is much room to improve our understanding of the above questions, 
especially the geometric aspects of them, which have not been treated to the degree 
of coherence and sophistication required for a mathematical study of Green's quantisation. 

Here we address the questions by proposing a class of intrinsically defined generalised spaces, 
where the nilpotent coordinates have the required properties for describing
parafermions at the classical level. 
The generalised spaces, loosely referred to as para-spaces, 
are manifolds with functions valued in some non-commutative 
algebras called para-algebras introduced in Section \ref{sect:algebra}, that is, they are
ringed spaces with ``structure sheaves'' being sheaves of para-algebras. 
These non-commutative geometries have pertinent differential analysis suitable for physical applications. 
%

%
%
%
\medskip\noindent{3}. 
 Beside offering a natural framework for studying parafermions, para-spaces have other potential applications to quantum theory, which we briefly comment upon here, and hope to investigate in depth in future work.

One can generalise the study of Schr\"odinger equations on superspaces in \cite{D1, Z} to  
para-spaces, as indicated in Example \ref{eg:Schrodinger} and Remark \ref{rmk:Sch}.  
This can potentially produce interesting quantum mechanical models for real world quantum systems. 

Another possible application is to generalise the work \cite{D3, DJW, DZ-2} on
model building in particle physics to address, e.g., the family problem.  
Instead of adding additional Grassmann coordinates as loc. cit.,  
we now use para-algebraic coordinates instead, and possibly also 
include parafermionic generalisations of Poincaré supersymmetry \cite{J}. 

It will be very interesting to extend the Kaluza–Klein theories on $\Z_2$-graded manifolds
\cite{DS, DZ-1, DZ-2} to para-manifolds, thus to approach gravity from a new angle.  
When considering generalised Kaluza–Klein theories in loc. cit., 
one had the lingering question as whether it was possible to build such theories 
by enlarging spacetime with extra coordinates which are nilpotent but their powers 
do not vanish until some given order $p+1\ge 3$. 
The answer to this question is that one should work with para-manifolds. 
It is fairly straightforward to construct such generalised Kaluza–Klein theories 
if we enlarge usual space-time by adding only one para-coordinate.  

\medskip
We now briefly describe the content of this paper. 

\medskip\noindent{4}.
We develop a detailed theory (see Section \ref{sect:algebra}) for a class of non-commutative algebras, 
called para-algebras. For each pair of positive integers $p$ and $n$, we have a para-algebra $\Xi_n(p)$
of order $p$ and degree $n$ (see Definition \ref{def:Xi}). 
It is generated by an $n$ dimensional subspace ${\bf E}$ such that 
for any $\zeta, \zeta'\in {\bf E}$, 
their commutator $[\zeta, \zeta']$  belongs to the centre of $\Xi_n(p)$, 
and $\zeta^p \zeta'= - \zeta' \zeta^p$. 
%
The relationship of para-algebras with para-Grassmann numbers \cite{OK80, OK80-1} 
will be commented upon, see,  in particular, 
Remark \ref{rmk:compare-OK} and
Section \ref{sect:inhom}. 

Despite its non-commutativity, the para-algebra $\Xi_n(p)$ admits 
an analysis reminiscent of Clifford analysis on Grassmann algebras. 
In particular, Theorem \ref{thm:paralg-1} shows the existence of an algebra $\cD_n(p)$, 
referred to as the para Clifford algebra of $\Xi_n(p)$, that is a specific subalgebra 
of the endomorphism algebra of $\Xi_n(p)$, whose elements include
generalised derivations called para-differential operators (see Remark \ref{rmk:para-diff}), 
which have well behaved actions on $\Xi_n(p)$.    
In the generalised geometries be constructed in this paper, para Clifford algebras will play 
a similar role as that of algebras of differential operators in supergeometry. 
We also construct an integral on $\Xi_n(p)$  in Theorem \ref{thm:int-Xi}, 
that reduces to the Berezin integral when $p=1$. 


We think that this theory of para-algebras is an interesting algebraic theory in its own right.

\medskip\noindent{5}.
Given any affine superspace $\K^{m|n}$ of superdimension $m|n$ over a field $\K$ (which is either $\R$ or $\C$), 
we construct, in Section \ref{sect:affine},  a family 
$\K^{m|n}(p)$ (for $p=1, 2, \dots$) of non-commutative generalisations of the superspace, 
called affine para-spaces of order $p$. When $p=1$,  we recover the superspace $\K^{m|n}$. 
We point out that the para-algebra $\Xi_n(p)$ plays the same role in $\K^{m|n}(p)$, as that played by the Grassmann algebra  of degree $n$ (which is $\Z_2$-graded commutative) in the superspace $\K^{m|n}$.

We are able to generalise the differential analysis on the affine superspace $\K^{m|n}$ to the affine para-spaces $\K^{m|n}(p)$ for all $p$. The crucial input is the para-differential operators,  which replace the Grassmann derivations in the odd coordinates of $\K^{m|n}$.   

Particularly noteworthy is that the notion of differential equations generalises to this non-commutative setting, thus one can formulate  and study physical problems using generalised differential equations on para-spaces in the usual way. 
For the purpose of illustration, we solve such a generalised differential equation on $\R^{1|1}(p)$ in Example \ref{eg:Schrodinger}.  

\medskip\noindent{6}.
Following the general philosophy of the algebraic geometric approach to supermanifolds \cite{L}, 
we  glue affine para-spaces together to
construct more complicated non-commutative geometries
in Section \ref{sect:folds}. 
Given any connected manifold $X$, 
we obtain a family $\{\X(p)\mid p=1, 2, ...\}$ of ringed spaces, where the structure sheaf of 
$\X(p)$ is a sheaf of para-algebras of order $p$ over $X$.
We call $\X(p)$ a para-manifold of order $p$, which reduces to a supermanifold when $p=1$. 
Theorem \ref{thm:split} shows that para-manifolds are split in 
the spirit of Bachelor's theorem for $C^\infty$ supermanifolds \cite{B}.  

There is a well behaved local differential analysis for para-manifolds, 
where differential operators and para-differential operators operate on local sections of the structure sheaf. 
We construct in Theorem \ref{thm:transf-oper} well defined transformation rules 
for differential operators and para-differential operators on different coordinate charts,
 thus are able to extend the locally analysis globally.

A family of examples of para-manifolds, called projective para-spaces, are constructed in 
Section \ref{sect:CPmn}, which have the usual projective spaces as underlying manifolds.
We study the projective para-spaces in detail. 

We point out that as para-differential operators are not derivations, the transformation rules given in Theorem \ref{thm:transf-oper}, especially equation \eqref{eq:dx-dy},  are not something which one could easily come up with. 

%
\medskip\noindent{7}.
We  demonstrate  in Section \ref{sect:quantise} the intimate connection between para-spaces and Green's quantisation. 
In Theorem \ref{thm:p-fermi}, we give a formulation of the theory of perafermions \cite{BG, G} 
(in the case with finite degrees of freedom) using para-spaces. 
This shows that para-spaces provide the natural context for perafermions, 
which should help the development of a rigorous mathematical theory for Green's quantisation.  

\medskip\noindent{8}.
Para-manifolds are of intrinsic interest in addition to their relevance to para statistics.
There was enormous effort in investigating ``non-commutative geometries'' in recent decades, 
and much interesting mathematics has been developed, 
i.e., index theorems in the context of Connes theory \cite{C}.  
However, the original goal of finding a non-commutative geometry 
to describe Planck scale physics is still not reachable.

A technical issue is that in most theories of non-commutative geometry, 
non-commutativity invalidates familiar notions of differential analysis on ordinary manifolds 
such as PDEs  (there are exceptions, see, e.g., \cite{CTZZ, GZ, WZZ, ZZ}).  One may approach matters formally, e.g., by following \cite{CD} to introduce generalised notions 
of connections and curvatures on modules over non-commutative algebras, and this can go quite some distance  at a formal level in the context of Connes theory (see \cite{MRS} and reference therein). It still remains to see utilities of such constructs as now PDE type of tools are no longer available.  
The situation is that in the non-commutative world, it is difficult to start from some Riemannian geometry 
to make modifications to achieve a suitable description of space-time at Planck scale.  
As physics is built by gradually modifying existing conceptions of phenomena to 
obtain simpler explanations,  this ostensibly technical problem is in fact quite fundamental. 

Para-manifolds retain almost all features of, and admit differential analysis 
similar to that on, super-manifolds. Thus the above problem does not arise. 

\medskip\noindent{9}.  
This paper was partially inspired by recent developments in two closely related subjects.  
One is $\Z_2^n$-graded super-manifolds, on which there is already quite a number of papers  
(see, e.g., \cite{BIP,  CGP-1}).  
The other is $\Z_2^2$-graded Lie superalgebras (see, e.g., \cite{AD, AIS, SV18, SV23}), which 
are a special class of generalised Lie algebras \cite{RW, Sch, SchZ} amenable to usual Lie theoretical methods. 
Applications of these two subjects to quantum physics are of considerable currently interest 
\cite{AD, AKT1, AKT2, DA}. 

However, the present work 
drives toward a different direction from the above mentioned research on $\Z_2^n$-graded super-manifolds. 
While $\Z_2^n$-graded super-manifolds are $\Z_2^n$-graded 
commutative, para-manifolds are non-commutative, and they are not even $\Z_2^n$-graded for any $n\ge 1$ 
(see Section \ref{sect:differs} for further discussions).

%
%
%
%
%
%


Perhaps classical  field theories  \cite{AKT2, AKT1} defined on $\Z_2^n$-graded super-manifolds \cite{CGP-1} may be relevant through a mechanism involving the Green ansatz as in \cite{OK80-1, OK80} and related papers. 
However, a field theory on $\Z_2^n$-graded super-manifolds 
would need to include fields for Green components. 
For the purpose of treating parafermions, this is not only unwieldy, 
but also has difficulties from the point of view of physics, 
as any field in a model should correspond to matter (that exists in nature). 

The present paper is physically oriented, aiming to convey ideas about para-manifolds 
relevant for physical applications, particularly in Green's quantisation. 
Thus we have kept the technical requirements for the paper at the bare minimum. 

It will be technically much more demanding to develop a general theory of para-manifolds, 
thus we leave the task to a mathematical sequel \cite{Z23-b} of this paper. 
We hope that an analogue of DeWitt's formulation of super manifolds  \cite{dW} 
can be developed for para-manifolds, 
and some constructs in \cite{CD} (and \cite{MRS}) may be adapted to the para-manifold context. 

\medskip
Throughout $\K$ denotes either $\R$ or $\C$, and $\Z_2=\{0, 1\}$.

\section{Para-algebras}\label{sect:algebra}

We introduce a class of non-commutative algebras and study their structure. 
These algebras play an essential role 
in constructing the non-commutative geometries studied in this work.

\subsection{Para-algebras} 

Given any associative algebra, we denote by $[-, -]$  the usual commutator, that is, the bilinear map such that for any elements $A, B$ in the algebra, we have  
$
[A, B]=A B - B A
$.

\begin{definition} \label{def:Xi}
Fix positive integers $n$ and $p$.  Let $\Xi_n(p)$ be a unital associative algebra generated by an $n$-dimensional vector space ${\bf E}$,  subject to the following relations
\beq\label{eq:tensor}
\xi_1^p \xi_2=- \xi_2 \xi_1^p, \quad
[\xi_1, [\xi_2, \xi_3]]=0, \quad \forall\xi_1, \xi_2, \xi_2\in {\bf E}.
\eeq
Call $\Xi_n(p)$ a {\em para-algebra} of order $p$ and degree $n$. 
\end{definition}

The relations \eqref{eq:tensor} are homogeneous, thus the algebra is $\Z_+$-graded, 
$\Xi_n(p)=\sum_{k\ge 0} \Xi_n(p)_k$, with $\Xi_n(p)_0=\K$ and $\Xi_n(p)_1={\bf E}$. 

Note that the first relation in \eqref{eq:tensor} indicates that elements of ${\bf E}$  
behave like ``$p$-th roots" of Grassmann variables. 
This is the defining property of parafermionic fields in conformal field theory.

Let us now consider some special endomorphisms of $\Xi_n(p)$ as a vector space. 
Note in particular that elements of $\Xi_n(p)$ can be regarded 
as endomorphisms acting by left multiplication.  
The following theorem introduces another set of important endomorphisms. 

\begin{theorem}\label{thm:paralg-1} 
Fix a linear isomorphism 
$
\vartheta: \K^n\lra  {\bf E}$, ${\bf a}\mapsto \vartheta_{\bf a}, 
$
and a non-degenerate symmetric
bilinear form $\langle- , - \rangle: \K^n \times \K^n\lra \K$. 
Then there exists a unique subspace ${\bf D}\subset \End_\K(\Xi_n(p))$ isomorphic to $\K^n$ via the linear map $\delta: \K^n\lra {\bf D}$, ${\bf a}\mapsto \delta_{\bf a}$, such that for any ${\bf a}, {\bf b}, {\bf c}\in\K^n$,  
\beq
&&\delta_{\bf a}(1)=0,
\quad \delta_{\bf a}(\vartheta_{\bf b})= p\langle {\bf a}, {\bf b}\rangle, \label{eq:vac-del-1}\\
&&\delta_{\bf a}^p \delta_{\bf b}=- \delta_{\bf b} \delta_{\bf a}^p,   \quad
[\delta_{\bf c}, [\delta_{\bf a}, \delta_{\bf b}]]=0, \label{eq:tensor-del} \\
&&[\delta_{\bf c}, [\vartheta_{\bf a}, \delta_{\bf b}]]=2\langle {\bf c}, {\bf a}\rangle\delta_{\bf b}, 
\quad [\vartheta_{\bf c}, [\vartheta_{\bf a}, \delta_{\bf b}]]=-2\langle {\bf c}, {\bf b}\rangle\vartheta_{\bf a}. \label{eq:tensor-R-2}
\eeq
\end{theorem}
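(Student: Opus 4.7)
The plan is to construct the operators $\delta_{\bf a}$ explicitly by a Green-ansatz realization of $\Xi_n(p)$, verify the listed relations by routine sign counts in the ambient algebra, and obtain uniqueness from a short induction on degree.

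First I would record the structural consequences of \eqref{eq:tensor}: the commutators $[\vartheta_{\bf a}, \vartheta_{\bf b}]$ lie in the centre, the $p$-th powers $\vartheta_{\bf a}^p$ anticommute with every generator, and in particular $\vartheta_{\bf a}^{p+1} = 0$. Using these relations, every word in the $\vartheta$'s can be rewritten as a product of $\vartheta_{\bf a_1}^{k_1} \cdots \vartheta_{\bf a_n}^{k_n}$ (with $0 \le k_i \le p$, in a fixed ordered basis of ${\bf E}$) multiplied by central commutators. This yields a convenient spanning set graded by total degree, on which the candidate endomorphisms $\delta_{\bf a}$ can be built inductively.

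For the construction itself I would introduce an auxiliary algebra $\widetilde{\Lambda}$ built from $p$ mutually anticommuting copies of the Grassmann algebra of degree $n$, with generators $\eta_{\bf b}^{(\alpha)}$ for $\alpha = 1, \dots, p$, and define the Green-ansatz map
\[
\vartheta_{\bf a} \longmapsto \sum_{\alpha=1}^p \eta_{\bf a}^{(\alpha)}, \qquad \delta_{\bf a} := \sum_{\alpha=1}^p \partial_{\bf a}^{(\alpha)},
\]
where $\partial_{\bf a}^{(\alpha)}$ is the Grassmann derivation on the $\alpha$-th factor dual to $\langle {\bf a}, - \rangle$. An elementary check shows that the images of $\vartheta_{\bf a}$ satisfy the relations \eqref{eq:tensor}, giving a surjection of $\Xi_n(p)$ onto the subalgebra of $\widetilde{\Lambda}$ generated by these sums. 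The identity $\delta_{\bf a}(\vartheta_{\bf b}) = p\langle {\bf a}, {\bf b}\rangle$ is immediate from $\partial_{\bf a}^{(\alpha)}(\eta_{\bf b}^{(\beta)}) = \delta^{\alpha\beta} \langle {\bf a}, {\bf b}\rangle$, and the relations \eqref{eq:tensor-del} for $\delta_{\bf a}$ mirror those for $\vartheta_{\bf a}$ by symmetry of the ansatz. The mixed identities \eqref{eq:tensor-R-2} reduce to a computation of iterated brackets in $\widetilde{\Lambda}$: the inner bracket $[\vartheta_{\bf a}, \delta_{\bf b}]$ already has scalar-plus-central behaviour on each Grassmann factor, so the outer bracket with $\delta_{\bf c}$ or $\vartheta_{\bf c}$ just pulls out the appropriate pairing.

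Uniqueness is then forced by induction on degree: \eqref{eq:vac-del-1} fixes $\delta_{\bf a}$ on $\K \oplus {\bf E}$, and the second identity of \eqref{eq:tensor-R-2}, read as a recursion for $\delta_{\bf b}(\vartheta_{\bf c}\vartheta_{\bf a} u)$ in terms of $\delta_{\bf b}(\vartheta_{\bf c} u)$, $\delta_{\bf b}(\vartheta_{\bf a} u)$ and $\delta_{\bf b}(u)$, determines the action on degree-$k$ monomials from the degree $k-1$ action once enough test elements ${\bf a}, {\bf c}$ are taken (using non-degeneracy of $\langle-,-\rangle$). The main technical obstacle I anticipate is verifying well-definedness: that the Green-ansatz images of the $\vartheta_{\bf a}$ truly satisfy \eqref{eq:tensor} in $\widetilde{\Lambda}$, and that the differential operators $\delta_{\bf a}$ descend to endomorphisms of the image of $\Xi_n(p)$ rather than only acting on $\widetilde{\Lambda}$. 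Both are combinatorial sign-counting exercises engineered precisely by the Green ansatz, but the interchange rule between different copies must be tracked carefully so that the centrality of $[\vartheta_{\bf a},\vartheta_{\bf b}]$ and of $[\delta_{\bf a}, \delta_{\bf b}]$ and the antisymmetry $\vartheta_{\bf a}^p \vartheta_{\bf b} = -\vartheta_{\bf b}\vartheta_{\bf a}^p$ all emerge correctly at order $p$.
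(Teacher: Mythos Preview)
Your overall strategy matches the paper's: existence via the Green ansatz inside a $\Z_2^p$-graded commutative ambient algebra (the paper's $\Lambda_n(p)$ in the appendix), and uniqueness by induction on the $\Z_+$-degree, using the second relation of \eqref{eq:tensor-R-2} to propagate the action of $\delta_{\bf a}$ from degree $k-1$ to degree $k$.

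There is one concrete error that would make the construction fail as written. The $p$ Grassmann copies in the ambient algebra must \emph{commute} with one another, not anticommute: in the paper's conventions $\theta_i^{(\alpha)}\theta_j^{(\beta)} = \theta_j^{(\beta)}\theta_i^{(\alpha)}$ for $\alpha \neq \beta$, while each individual copy is Grassmann. With your ``mutually anticommuting'' convention the sum $\theta_{\bf a} = \sum_\alpha \eta_{\bf a}^{(\alpha)}$ already squares to zero, so $\theta_{\bf a}^p = 0$ for $p \ge 2$ and the image of $\Xi_n(p)$ collapses to an ordinary Grassmann algebra; the relation $\xi^p\xi' = -\xi'\xi^p$ becomes vacuous and the para-structure is lost entirely. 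Once the sign is fixed, the identities \eqref{eq:tensor-del} and \eqref{eq:tensor-R-2} follow as in Lemma~\ref{lem:IW}, and the well-definedness obstacle you anticipate---that the $\delta_{\bf a}$ preserve the image of $\Xi_n(p)$ rather than only acting on the ambient algebra---is exactly the content of Theorem~\ref{thm:mod-1}.
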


The theorem will be proved in Section \ref{sect:proof-paralg-1}. 
It shows in particular that the formal Definition \ref{def:Xi} for para-algebras as is non-trivial.
Also, it enables us to introduce the following algebra. 
\begin{definition}\label{def:ICliff}
Denote by $\cD_n(p)$ the subalgebra of  $\End_\K(\Xi_n(p))$ generated by ${\bf E}$ 
and ${\bf D}$, and refer to it as a {\em para Clifford algebra}.  
\end{definition}

Now we consider some properties of the para algebra and para Clifford algebra introduced above. 
\begin{lemma}
For any $\xi_1, \xi_2, \dots, \xi_{p+1}\in {\bf E}$, and $\wh{d}_1, \wh{d}_2, \dots, \wh{d}_{p+1}\in {\bf D}$, 
\beq
\sum_{\sigma\in\Sym_{p+1}} \xi_{\sigma(1)} \xi_{\sigma(2)}\dots \xi_{\sigma(p+1)}=0, 
\quad \sum_{\sigma\in\Sym_{p+1}} \wh{d}_{\sigma(1)} \wh{d}_{\sigma(2)}\dots \wh{d}_{\sigma(p+1)}=0, 
\label{eq:tensor-alt}
\eeq
where $\Sym_{p+1}$ is the symmetric group of degree $p+1$. 
\end{lemma}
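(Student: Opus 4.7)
The plan is to reduce the identity to the vanishing of $(p+1)$-th powers via polarisation. First I would observe that, setting $\xi_1 = \xi_2 = \xi$ in the relation $\xi_1^p \xi_2 = -\xi_2 \xi_1^p$ from \eqref{eq:tensor}, one obtains $\xi^{p+1} = -\xi^{p+1}$, so $\xi^{p+1}=0$ for every $\xi \in {\bf E}$ (since $\K$ has characteristic zero). The analogous relation in \eqref{eq:tensor-del}, specialised at ${\bf a}={\bf b}$, likewise gives $\wh d^{\,p+1}=0$ for every $\wh d \in {\bf D}$.

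Next I would perform a standard polarisation. Since ${\bf E}$ is a $\K$-linear subspace, any linear combination $\xi := t_1\xi_1 + t_2\xi_2 + \cdots + t_{p+1}\xi_{p+1}$ with $t_i \in \K$ again lies in ${\bf E}$, hence $\xi^{p+1}=0$. Treating the $t_i$ as indeterminates, this is a polynomial identity in $t_1,\dots,t_{p+1}$ with coefficients in $\Xi_n(p)$, valid because $\K$ is infinite. Extracting the coefficient of the multilinear monomial $t_1 t_2 \cdots t_{p+1}$ from the multinomial expansion of $\xi^{p+1}$ produces exactly the symmetrised sum
\[
\sum_{\sigma\in\Sym_{p+1}}\xi_{\sigma(1)}\xi_{\sigma(2)}\cdots\xi_{\sigma(p+1)},
\]
which must therefore vanish. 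The same argument applied inside ${\bf D}\subset \End_\K(\Xi_n(p))$, using that ${\bf D}$ is also a $\K$-linear subspace, yields the companion identity for the $\wh d_i$.

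The only point that requires a little care is the legitimacy of extracting the coefficient of $t_1\cdots t_{p+1}$: we need to know that a polynomial in $(t_1,\dots,t_{p+1})$ with coefficients in a $\K$-vector space which vanishes for every choice of $t_i \in \K$ is identically zero. This is immediate because $\K \in \{\R, \C\}$ is infinite, so no characteristic-zero subtlety intrudes. I do not foresee a serious obstacle; the only substantive ingredient is the observation $\xi^{p+1}=0$ (respectively $\wh d^{\,p+1}=0$), after which polarisation is purely formal.
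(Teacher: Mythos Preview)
Your proof is correct and follows essentially the same polarisation argument as the paper: form the linear combination $\xi = \sum_i t_i \xi_i$, use the defining relation \eqref{eq:tensor} to get $\xi^{p+1}=0$, expand, and extract the coefficient of $t_1\cdots t_{p+1}$ (the paper phrases this as the vanishing of $\wh\Theta_{(1,1,\dots,1)}$ in the expansion $\xi^{p+1}=\sum_\lambda x^\lambda \wh\Theta_\lambda$). Your justification of coefficient extraction via the infiniteness of $\K$ matches the paper's appeal to linear independence of the monomials $x^\lambda$ as functions.
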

\begin{proof}
Given arbitrary scalars $x_1, \dots, x_{p+1}$ in $\K$, we let $x^\lambda= \prod_{i=1}^{p+1} x_i^{\lambda_i}$ for any $\lambda=(\lambda_1, \dots, \lambda_{p+1})\in \Z_+^{p+1}$. 

Let $\xi(x_1, \dots, x_{p+1})=\sum_{i=1}^{p+1} x_i \xi_i$. Then  
$\xi(x_1, \dots, x_{p+1})^{p+1}=\sum_\lambda x^\lambda \wh\Theta_\lambda$ 
for some $\wh\Theta_\lambda\in \Xi_n(p)$, where the sum is over compositions of $p+1$, that is, 
those $\lambda$ satisfying $\sum_{i=1}^{p+1} \lambda_i=p+1$. 
It follows from the first relation of \eqref{eq:tensor} that 
$\xi(x_1, \dots, x_{p+1})^{p+1}=0$ for all $(x_1, \dots, x_{p+1})\in\K^{p+1}$. 
Since $x^\lambda$ for distinct partitions $\lambda$ of $p+1$ are linearly independent as functions on $\K^n$, 
it follows that $\wh\Theta_\lambda=0$ for every composition $\lambda$ of $p+1$. 
Now $\wh\Theta_{(1, 1, \dots, 1)}=0$ leads to the first relation in \eqref{eq:tensor-alt}. 

The second relation in \eqref{eq:tensor-alt} can be proved in the same way. 
\end{proof}

\begin{remark} \label{rmk:compare-OK}
Para-Grassmann numbers were in principle defined by \cite[(1.1)]{OK80}, 
even though they were always realised by using the Green ansatz \cite[(1.3)]{OK80} in practice. 
The relations in \cite[(1.1)]{OK80} coincide with the first relation of \eqref{eq:tensor-alt} 
and second relation of \eqref{eq:tensor} in the present context. 
With the Green ansatz realisation \cite[(1.3)]{OK80}, para-Grassmann numbers also 
obey the first relation of  \eqref{eq:tensor} by Lemma \ref{lem:ILam}. 
\end{remark}

Take the standard basis $\{e_i\mid i=1, 2, \dots, n\}$ for $\K^n$.
Then the elements  $\vartheta_i=\vartheta_{e_i}$ for $i=1, 2, \dots, n$ form a basis of ${\bf E}$, 
thus generate $\Xi_n(p)$. By \eqref{eq:tensor}, they satisfy the following relations.
\beq
&& \vartheta_i^p \vartheta_j= - \vartheta_j \vartheta_i^p,   \label{eq:the-p-0}\\
&& [\vartheta_k, [\vartheta_i, \vartheta_j]]=0,   \quad \forall i, j, k. \label{eq:R-0}
\eeq

For any non-negative integer $\ell$  and $j_1, j_2, \dots, j_\ell\in \{1, 2, \dots, n\}$,
we let  
\beq\label{eq:products}
\wh\varTheta_{j_1, j_2, \dots, j_\ell}=\vartheta_{j_1} \vartheta_{j_2}\dots \vartheta_{j_\ell}. 
\eeq
All such elements span $\Xi_n(p)$.  
Denote 
$
d_j( j_1, j_2, \dots, j_\ell)=\sum_{r=1}^\ell \delta_{j j_r}. 
$

\begin{theorem} \label{thm:vanish}
Retain notation above. Then 
\[
\wh\varTheta_{j_1, j_2, \dots, j_\ell}=0 \  \text{ if $d_j( j_1, j_2, \dots, j_\ell)\ge p+1$ for any $j$}.
\] 
This in particular implies $\dim_\K(\Xi_n(p))<\infty$. 
\end{theorem}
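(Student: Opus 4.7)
My plan is to rewrite $\wh\varTheta_{j_1,\dots,j_\ell}$ in a normal form where all the $\vartheta_j$'s have been gathered to one side, modulo corrections involving central commutators, and then show every term in that normal form vanishes. First I would extract a few direct consequences of \eqref{eq:tensor}. Setting $\xi_1=\xi_2=\vartheta_j$ in the first relation gives $\vartheta_j^{p+1}=-\vartheta_j^{p+1}$, hence $\vartheta_j^{p+1}=0$. Write $c_{k,j}:=[\vartheta_k,\vartheta_j]$; by the second relation $c_{k,j}$ is central in $\Xi_n(p)$, and iterating the Leibniz rule for the inner derivation $[\vartheta_k,-]$ yields $[\vartheta_k,\vartheta_j^m]=m\,c_{k,j}\vartheta_j^{m-1}$ for every $m\ge 1$. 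Combining this with the anticommutation $\vartheta_j^p\vartheta_k+\vartheta_k\vartheta_j^p=0$ gives $2\vartheta_k\vartheta_j^p=p\,c_{k,j}\vartheta_j^{p-1}$; right-multiplying by $\vartheta_j$ and using $\vartheta_j^{p+1}=0$ then produces the key identity $c_{k,j}\vartheta_j^p=0$.

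The technical heart of the argument is the following Key Lemma:
\[
c_{k_1,j}c_{k_2,j}\cdots c_{k_m,j}\,\vartheta_j^q=0 \qquad\text{whenever } m\ge 1\text{ and } m+q\ge p+1.
\]
I would prove this by induction on $m$. The base case $m=1$ reduces to $c_{k,j}\vartheta_j^p=0$ just established (higher $q$ being trivial from $\vartheta_j^{p+1}=0$). For the inductive step, rewrite $c_{k_{m+1},j}\vartheta_j^q=\frac{1}{q+1}[\vartheta_{k_{m+1}},\vartheta_j^{q+1}]$ using the derivation formula, and pull the already-central product $c_{k_1,j}\cdots c_{k_m,j}$ inside the bracket to obtain $\frac{1}{q+1}[\vartheta_{k_{m+1}},\,c_{k_1,j}\cdots c_{k_m,j}\vartheta_j^{q+1}]$; the inner element vanishes by the inductive hypothesis since $m+(q+1)\ge p+1$.

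With the Key Lemma in hand, set $d:=d_j\ge p+1$. Iteratively applying the identity $\vartheta_j A=A\vartheta_j+[\vartheta_j,A]$ (with $[\vartheta_j,A]$ expanded via the Leibniz rule into a sum of central $c_{k,j}$'s times shorter products) to each of the $d$ copies of $\vartheta_j$ in the word, one brings it to the form
\[
\wh\varTheta_{j_1,\dots,j_\ell}=B\,\vartheta_j^{d}+\sum_{m\ge 1}c^{(m)}\cdot B_m\cdot\vartheta_j^{d-m},
\]
where $B$ and each $B_m$ is a product of the non-$\vartheta_j$ generators from the original word (in their relative order), and each $c^{(m)}$ is a product of $m$ central commutators of the form $c_{k_i,j}$. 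The leading term vanishes because $\vartheta_j^{d}=0$. For each correction, pulling the central factors past $B_m$ gives $B_m\cdot(c^{(m)}\vartheta_j^{d-m})$, and since $m+(d-m)=d\ge p+1$ the Key Lemma forces the parenthesised factor to vanish. Hence $\wh\varTheta_{j_1,\dots,j_\ell}=0$, and the finite-dimensionality of $\Xi_n(p)$ follows at once, since $\Xi_n(p)$ is spanned by the finite set of monomials with every $d_i\le p$.

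The main obstacle I anticipate is the Key Lemma in the borderline case $d_j=p+1$: a direct induction on the length $\ell$ breaks down there, because the correction terms carry only $p$ copies of $\vartheta_j$, just short of the induction threshold. The Key Lemma circumvents this by arranging that each central commutator absorbed in the reduction effectively substitutes for one missing $\vartheta_j$, and the base identity $c_{k,j}\vartheta_j^p=0$ is the point where the two defining relations of \eqref{eq:tensor} interact decisively.
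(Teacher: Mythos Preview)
Your proof is correct and follows essentially the same route as the paper's: establish $\vartheta_j^{p+1}=0$, prove the vanishing identity $\vartheta_j^{\,p+1-s}\prod_{t}[\vartheta_{k_t},\vartheta_j]=0$ (your Key Lemma, the paper's equation \eqref{eq:vanish-p-s}), and then reduce the general monomial by commuting all copies of $\vartheta_j$ to one side. The only tactical difference is in how the key identity is reached: the paper applies iterated adjoints directly to $\vartheta_i^{p+1}=0$ to get \eqref{eq:vanish-p-s} in one stroke, whereas you obtain the base case $c_{k,j}\vartheta_j^{p}=0$ via the anticommutation relation and then induct---a slightly longer path to the same destination. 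Your normal-form display should really be read as a sum over many terms of the indicated shape (different choices of which generators are absorbed into commutators), but this is only a notational looseness and does not affect the argument.
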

\begin{proof}
It follows from the $i=j$ case of  \eqref{eq:the-p-0} that 
\beq\label{eq:vanish-p-1}
\vartheta_i^{p+1}=0, \quad \forall i. 
\eeq
 Now by  using \eqref{eq:R-0}, we obtain
$[\vartheta_j, \vartheta_i^r] = r  \vartheta_i^{r-1} [\vartheta_j, \vartheta_i]$ for all $r$.
Induction on $r$ shows that for all $s\le r$, and $j_1, j_2, \dots, j_s\in \{1, 2, \dots, n\}$,
\beq
[\vartheta_{j_s}, \dots [\vartheta_{j_2}, [\vartheta_{j_1}, \vartheta_i^r ]] \dots] 
= \frac{r!}{(r-s)!} \vartheta_i^{r-s} \prod_{t=1}^s [\vartheta_{j_t}, \vartheta_i].
\eeq
Setting $r=p+1$ in the above relation, and then using \eqref{eq:vanish-p-1}, we obtain 
\beq\label{eq:vanish-p-s}
\vartheta_i^{p+1-s} \prod_{t=1}^s [\vartheta_{j_t}, \vartheta_i]=0.
\eeq

Consider $\wh\varTheta_{j_1, j_2, \dots, j_\ell}$. Assume that there is some $i$ such that $d_i=d_i( j_1, j_2, \dots, j_\ell)\ge p+1$. We move all $\vartheta_i$ factors in $\wh\varTheta_{j_1, j_2, \dots, j_\ell}$ to the far left. Then $\wh\varTheta_{j_1, j_2, \dots, j_\ell}$ becomes a linear combination of elements of the form 
\[
\vartheta_i^{d_i-s} \prod_{t=1}^s [\vartheta_{j'_t}, \vartheta_i] \wh\varTheta_{j'_{s+1}, j'_{s+2}, \dots, j'_{\ell-d_i}}, 
\]
where the indices $j'_t$'s are the $j_r$'s which are not equal to $i$. Such elements are $0$ by 
\eqref{eq:vanish-p-s}.  This proves the first statement of the theorem, 
and the second follows immediately. 
\end{proof}  

The dimension of $\Xi_n(p)$ is known   \cite{BG}, which is given by \eqref{eq:fd}.

To consider properties of  the algebra $\cD_n(p)$, 
it is most convenient to take the bilinear form such that 
the standard basis is orthonormal relative to it, whence $\langle e_i , e_j \rangle=\delta_{i j}$ for all $i, j$. 
Now the
elements $\delta_i:=\delta_{e_i}\in \End_\K(\Xi_n(p))$, for $1 \le i \le n$, form a basis of ${\bf D}$, and  
satisfy the relations
\beq
&& \delta_i(1)=0, \quad   \delta_i(\vartheta_j)=p\delta_{i j}, \quad \forall i, j,  \label{eq:vac-1}, \\
&&\delta_i^p \delta_j = - \delta_j \delta_i^p,   \label{eq:del-p-0}  \\
&& [\delta_k, [\delta_i, \delta_j]]=0,  \label{eq:R-1}\\
&&[\delta_k, [\vartheta_i, \delta_j]]=2\delta_{k i}\delta_j, 
\quad [\vartheta_k, [\vartheta_i, \delta_j]]= -2\delta_{ k j} \vartheta_i. \label{eq:R-2}
\eeq
Furthermore,  \eqref{eq:R-2}, \eqref{eq:R-0} and \eqref{eq:R-1}  imply the following relations. 
\beq
&&   [\delta_k, [\vartheta_i, \vartheta_j]]=2\delta_{k i}\vartheta_j-2\delta_{k j} \vartheta_i, \label{eq:R-3}\\
&& [\vartheta_k, [\delta_i, \delta_j]] = 2\delta_{k i} \delta_j - 2\delta_{k j} \delta_i. \label{eq:R-4}
\eeq

The para Clifford algebra has a $\Z$-grading.  Let
$
\hat{d}=\frac{1}2\sum_{i=1}^n \left([\vartheta_i, \delta_i]-p\right)
$. 
We have 
\beq
[\hat{d}, \vartheta]=\vartheta, \quad [\hat{d}, \delta]=-\delta, \quad  \forall \vartheta\in{\bf E}, \ \delta\in{\bf D}. 
\eeq
Thus the operator provides a $\Z$-grading for $\cD_n(p)$ with
\beq
&&\cD_n(p)=\sum_{-np\le \ell \le  np} \cD_n(p)_\ell, \quad \cD_n(p)_\ell=\{A\in \cD_n(p)\mid [ \hat{d}, A]=\ell A \}.
\eeq

The operator $\wh{d}$ also leads to the $\Z_+$-grading for $\Xi_n(p)$ discussed earlier with 
$ \Xi_n(p)_\ell=\{F\in \Xi_n(p)\mid [ \hat{d}, F]=\ell F \}$. 
Note that $\Xi_n(p)_\ell$ is spanned by  all $\wh\varTheta_{j_1, j_2, \dots, j_\ell}$ for
$j_1, j_2, \dots, j_\ell\in \{1, 2, \dots, n\}$. 

\medskip
\noindent{\bf A consistency check}. 
Relations \eqref{eq:vac-1} and the second relation of \eqref{eq:R-2} will lead to 
$\delta_i(\vartheta_i^{p+1}) = c \vartheta_i^p$ for some scalar $c$, and consistency with 
\eqref{eq:vanish-p-1}  requires $c=0$. 
%
%
This is a rather non-trivial matter. In fact 
such consistency  in the $p=1$ case underpinned the very definition of Grassmann derivations.  
To verify the consistency,  note that 
\[
\baln
\delta_i(\vartheta_i^{p+1}) &= \sum_{k=0}^p \vartheta_i^{p-k}[\delta_i, \vartheta_i](\vartheta_i^k).
\ealn
\]
Equations \eqref{eq:vac-1} and \eqref{eq:R-2} imply
$
[\delta_i, \vartheta_i](\vartheta_i^k)=(p-2k) \vartheta_i^k.
$
Hence 
\[
\delta_i(\vartheta_i^{p+1}) =  \sum_{k=0}^p  (p-2k) \vartheta_i^p = 0, 
\]
proving consistency.

\subsection{Para Clifford algebras}

We now prove Theorem \ref{thm:paralg-1} and investigate further properties of $\cD_n(p)$.
We do this by finding a non-zero algebra homomorphism $\cD_n(p)\lra \W_n(P)$, 
where $\W_n(p)$ is the $\Z_2^p$-graded Clifford algebra of Definition \ref{def:Wn}. 
The proof is inspired by the Green ansatz in the theory of para-statistics \cite{G}, 
and makes use of 
the class of $\Z_2^p$-graded commutative algebras discussed in Appendix \ref{sect:graded}, 
where one can find all unexplained notation which appears in this subsection.

\subsubsection{Existence of para Clifford algebras}\label{sect:para-Clifford}

Consider the subalgebra  $\ILam_n(p)$ of
$\Lam_n(p)$ defined in Section \ref{sect:inhom}. 

\begin{lemma}\label{lem:Xi-ILam}
There exists an algebra epimorphism $\iota_0: \Xi_n(p)\lra \ILam_n(p)$ such that
$\iota_0(\vartheta_i)=\theta_i$ for all $i$. 
\end{lemma}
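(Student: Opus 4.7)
The strategy is to invoke the universal presentation of $\Xi_n(p)$ provided by Definition \ref{def:Xi}. Since $\Xi_n(p)$ is the unital associative algebra generated by ${\bf E}$ modulo the relations \eqref{eq:tensor}, to produce an algebra homomorphism $\iota_0 : \Xi_n(p) \to \ILam_n(p)$ it suffices to specify its values on a basis of ${\bf E}$ and then verify that those images satisfy both defining families of relations in $\ILam_n(p)$. Concretely, I would send $\vartheta_i \mapsto \theta_i$, extend linearly to ${\bf E}$, then multiplicatively to the tensor algebra $T({\bf E})$, and check that the two-sided ideal generated by \eqref{eq:tensor} lies in the kernel.

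The second relation $[\theta_k,[\theta_i,\theta_j]] = 0$ should be the easier of the two. It ought to follow from the $\Z_2^p$-graded commutativity of $\Lam_n(p)$ established in Section \ref{sect:graded}: each commutator $[\theta_i,\theta_j]$ is expressible as a sum of products whose $\Z_2^p$-degrees force the result to centralize every $\theta_k$. This is the ``quadratics are central'' feature that is typical for graded commutative setups, and I would expect it to drop out of a short bookkeeping argument on degrees.

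The serious step is the first relation $\theta_i^p \theta_j = -\theta_j \theta_i^p$, which is exactly the content advertised in Lemma \ref{lem:ILam} (as quoted in Remark \ref{rmk:compare-OK}). The plan for it is to expand $\theta_i$ via its Green decomposition into the $p$ graded sectors used to define $\ILam_n(p) \subset \Lam_n(p)$; since each sectorial component squares to zero, the $p$-th power $\theta_i^p$ collapses, up to a combinatorial factor, to the single top monomial in which every sector appears exactly once. The total $\Z_2^p$-degree of that monomial then governs how it commutes past any $\theta_j$, and a direct sign count using the graded commutativity of $\Lam_n(p)$ should yield precisely the minus sign required. I expect this sign computation to be the main obstacle, since it requires carefully tracking how the product of $p$ sectorial generators interacts with a single generator of mixed degree; however, once Lemma \ref{lem:ILam} is in hand the verification is immediate.

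Surjectivity of $\iota_0$ is automatic: by construction $\ILam_n(p)$ is the unital subalgebra of $\Lam_n(p)$ generated by the elements $\theta_1, \dots, \theta_n$ (cf.\ Section \ref{sect:inhom}), so the image of $\iota_0$ already contains a generating set and hence coincides with $\ILam_n(p)$. Uniqueness of $\iota_0$ on the image is ensured by the fact that $\vartheta_1,\dots,\vartheta_n$ generate $\Xi_n(p)$ as an algebra, so any two homomorphisms agreeing on them must coincide globally.
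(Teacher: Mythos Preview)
Your proposal is correct and follows essentially the same approach as the paper: both reduce the existence of $\iota_0$ to Lemma~\ref{lem:ILam}, which verifies that the $\theta_i$ satisfy the defining relations \eqref{eq:tensor} of $\Xi_n(p)$, with surjectivity immediate since the $\theta_i$ generate $\ILam_n(p)$. One small point: you phrase the first relation as $\theta_i^p\theta_j=-\theta_j\theta_i^p$ for basis elements, but the relation in \eqref{eq:tensor} is imposed for all $\xi_1\in{\bf E}$ and is not linear in $\xi_1$; fortunately Lemma~\ref{lem:ILam} already supplies the general form $\theta^p\theta_i=-\theta_i\theta^p$ for arbitrary $\theta=\sum_i c_i\theta_i$, so no gap arises.
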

\begin{proof}
It immediately follows from Lemma \ref{lem:ILam} that the map $\iota_0$ preserves the defining relations of $\Xi_n(p)$, hence follows the lemma.
\end{proof}

\begin{remark}
Note that the map $\iota_0$ is a $\Z_+$-graded algebra epimorphism, where the $\Z_+$-grading of $\Xi_n(p)$ is described in Remark \ref{rmk:Z-grade}. 
\end{remark}

Now we consider the subalgebra  $\IW_n(p)$  of  $\W_n(p)$ defined in Section \ref{sect:inhom}. 
The following result immediately follows from Lemma \ref{lem:Xi-ILam}, Lemma \ref{lem:IW} and equation \eqref{eq:Del-p-1}. 
\begin{theorem} \label{thm:epi-1} Retain notation above. 
There exists an algebra epimorphism $\iota: \cD_n(p)\lra \IW_n(p)$ such that 
$
\vartheta_i\mapsto \theta_i$ and $\delta_i\mapsto \Delta_i$ for all $i.
$
Its restriction to $\ILam_n(p)$ coincides with the algebra epimorphism $\iota_0:\Xi_n(p)\lra\ILam_n(p)$  in Lemm \ref{lem:Xi-ILam}. 
\end{theorem}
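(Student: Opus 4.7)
The plan is to construct $\iota$ by sending generators to generators and verify well-definedness by checking that the defining relations of $\cD_n(p)$ hold in $\IW_n(p)$ for the images $\theta_i, \Delta_i$. Once this is done, surjectivity is automatic since $\theta_i$ and $\Delta_i$ together generate $\IW_n(p)$ by construction, and the compatibility $\iota|_{\Xi_n(p)} = \iota_0$ follows because the two algebra maps agree on the generators $\vartheta_i$ of $\Xi_n(p)$.

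The three cited ingredients deliver exactly the required relations in the target. Lemma \ref{lem:Xi-ILam} supplies the epimorphism $\iota_0$, so the $\theta_i \in \ILam_n(p) \subset \IW_n(p)$ inherit the para-algebra relations \eqref{eq:the-p-0} and \eqref{eq:R-0} from the $\vartheta_i$. Equation \eqref{eq:Del-p-1} is precisely the dual relation $\Delta_i^p \Delta_j = -\Delta_j \Delta_i^p$, matching \eqref{eq:del-p-0}. Lemma \ref{lem:IW} then furnishes the triple-commutator relation \eqref{eq:R-1} and the mixed commutator relations \eqref{eq:R-2} involving the $\theta_i$ and $\Delta_j$. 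Together these exhaust the list of defining relations of $\cD_n(p)$ among its generators, so a standard universal-property argument produces a unique algebra homomorphism with the prescribed action on generators.

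The one technical step to watch is the implicit claim that these relations form a \emph{complete} set of defining relations for $\cD_n(p)$, so that no hidden identities inherited from the ambient endomorphism algebra $\End_\K(\Xi_n(p))$ obstruct the descent of the map. The cleanest way to handle this is operational rather than presentational: use $\iota_0$ to transfer the $\cD_n(p)$-action on $\Xi_n(p)$ to a well-defined action on $\ILam_n(p)$ (after checking that $\ker \iota_0$ is stable under both left multiplication by $\vartheta_i$ and the operators $\delta_i$, which follows from the explicit descriptions of their actions), observe that on generators this induced action coincides with the restricted $\IW_n(p)$-action, and then invoke the faithfulness of the $\IW_n(p)$-action on $\ILam_n(p)$ (implicit in Lemma \ref{lem:IW}) to identify the image of $\cD_n(p)$ in $\End_\K(\ILam_n(p))$ with $\IW_n(p)$. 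This is the step I expect to require the most care, but it reduces to routine bookkeeping once the three cited ingredients are in hand.
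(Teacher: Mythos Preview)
Your first two paragraphs match the paper's proof exactly: the paper's argument is a single sentence citing Lemma~\ref{lem:Xi-ILam}, Lemma~\ref{lem:IW}, and equation~\eqref{eq:Del-p-1}, which together verify that $\theta_i,\Delta_i$ obey the relations listed for $\vartheta_i,\delta_i$, and surjectivity is then automatic since $\theta_i,\Delta_i$ generate $\IW_n(p)$.

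Your third paragraph raises a genuine subtlety that the paper glosses over. In the paper's logical order, Theorem~\ref{thm:epi-1} is invoked to supply the \emph{existence} half of Theorem~\ref{thm:paralg-1}; so at this stage $\cD_n(p)$ is effectively being treated as the algebra abstractly presented by the relations \eqref{eq:the-p-0}--\eqref{eq:R-2}, not yet as a subalgebra of $\End_\K(\Xi_n(p))$. Under that reading there are no hidden relations to worry about, and your first two paragraphs already finish the job. If instead one insists on the literal subalgebra reading from Definition~\ref{def:ICliff}, your operational fix via descent along $\iota_0$ is the right strategy, but the faithfulness of the $\IW_n(p)$-action on $\ILam_n(p)$ is \emph{not} contained in Lemma~\ref{lem:IW} (which only records commutator identities inside $\W_n(p)$); it is the content of Theorem~\ref{thm:mod-1} and its corollary that $\IW_n(p)=\End_\K(\ILam_n(p))$. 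That result is logically independent of Theorem~\ref{thm:epi-1}, so you may use it, but cite it rather than Lemma~\ref{lem:IW}.
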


This in particular implies the existence of the para Clifford algebra $\cD_n(p)$ for all $p, n$. 

Now we can complete the proof of Theorem \ref{thm:paralg-1}.

\subsubsection{Proof of Theorem \ref{thm:paralg-1}}\label{sect:proof-paralg-1}
\begin{proof} 
The existence of $\cD_n(p)$, and hence of 
 ${\bf D}\subset \End_\K(\Xi_n(p))$, was established in Theorem \ref{thm:epi-1}. 
To complete the proof of Theorem \ref{thm:paralg-1}, we show that ${\bf D}$ is unique by proving the uniqueness of $\delta_i$. 
This is done by showing that 
their actions on $\Xi_n(p)$ are uniquely defined by the given relations in Theorem \ref{thm:paralg-1}.

We use induction on the degree $\ell$ with respect to the $\Z_+$-grading of $\Xi_n(p)$ 
to show the uniqueness of the action of  $\delta_i$ on $\Xi_n(p)_\ell$. 
The cases $\ell=0$ and $\ell=1$ are evident. 
For any $\ell>1$,  the elements 
$\wh\varTheta_{j_1, j_2, \dots, j_\ell}$, with $j_1, j_2, \dots, j_\ell\in \{1, 2, \dots, n\}$, 
span $\Xi_n(p)_\ell$. 
We have 
\[
\baln
\delta_i\left(\wh\varTheta_{j_1, j_2, \dots, j_\ell}\right)
&= \sum_{r=1}^{\ell}  \wh\varTheta_{j_1, \dots, j_{r-1}}[\delta_i, \vartheta_{j_r}] \left(\wh\varTheta_{j_{r+1}, \dots, j_\ell}\right), 
\ealn
\]
where $[\delta_i, \vartheta_{j_r}]\in \cD_n(p)$,  thus acts on $\wh\varTheta_{j_{r+1}, \dots, j_\ell}$. We re-write the right hand side as 
\[
\baln
\sum_{r=1}^{\ell} \sum_{s={r+1}}^\ell  \wh\varTheta_{j_1, \dots, \wh{j_r}, \dots, j_{s-1} }
[[\delta_i, \vartheta_{j_r}],  \vartheta_{j_s}] 
\left(\wh\varTheta_{j_{s+1}, \dots, j_\ell}\right)
+ p  \sum_{r=1}^{\ell} \delta_{i j_r}   \wh\varTheta_{j_1, \dots, \wh{j_r}, \dots,  j_\ell}, 
\ealn
\]
where the notation $\wh{j_r}$ indicates that the entry $j_r$ is deleted from the sequences where it appeared. 
Using the second relation of \eqref{eq:R-3}, we obtain
\[
\baln
\delta_i\left(\wh\varTheta_{j_1, j_2, \dots, j_\ell}\right)&=
 -2 \sum_{r=1}^{\ell} \sum_{s={r+1}}^\ell \delta_{i j_s}  \wh\varTheta_{j_1, \dots, \wh{j_r}, \dots, j_{s-1},  j_r,  j_{s+1}, \dots, j_\ell}\\
&+ p  \sum_{r=1}^{\ell} \delta_{i j_r}  \wh\varTheta_{j_1, \dots, \wh{j_r}, \dots,  j_\ell}.  
\ealn
\]
This shows that the action of $\delta_i$ on $\Xi_n(p)_\ell$ is uniquely defined,  
completing the proof of Theorem \ref{thm:paralg-1}.
\end{proof}

\subsubsection{A realisation of para Clifford algebras}\label{sect:realise}

Let us consdier the natural action of $\IW_n(p)$ on $\Lam_n(p)$. 
\begin{theorem}\label{thm:mod-1}
The natural $\IW_n(p)$-action on $\Lambda_n(p)$ restricts to an action on $\ILam_n(p)$, which is simple. 
\end{theorem}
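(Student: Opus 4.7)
The plan is to treat the two assertions in sequence: first the restriction of the action, then simplicity via a Fock-type argument using the $\Z_+$-grading.

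\emph{Restriction.} The algebra $\IW_n(p)$ is generated by $\{\theta_i\}$ and $\{\Delta_i\}$. The $\theta_i$ act by left multiplication, so they obviously preserve the subalgebra $\ILam_n(p)\subset\Lambda_n(p)$. For the $\Delta_i$, I would invoke the epimorphism $\iota:\cD_n(p)\to\IW_n(p)$ of Theorem~\ref{thm:epi-1}, whose restriction to $\Xi_n(p)$ is the surjection $\iota_0:\Xi_n(p)\to\ILam_n(p)$ of Lemma~\ref{lem:Xi-ILam}. Equivariance then yields $\Delta_i(\ILam_n(p))=\Delta_i(\iota_0(\Xi_n(p)))=\iota_0(\delta_i(\Xi_n(p)))\subset\ILam_n(p)$, since $\delta_i\in\End_\K(\Xi_n(p))$ stabilises $\Xi_n(p)$ by definition.

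\emph{Simplicity.} I would transport the $\Z_+$-grading from $\Xi_n(p)$ to $\ILam_n(p)$ via $\iota_0$. The operator $\theta_i$ raises degree by $1$, and the explicit formula for $\delta_i(\wh\varTheta_{j_1,\dots,j_\ell})$ derived in the proof of Theorem~\ref{thm:paralg-1} (via the second relation of~\eqref{eq:R-3}) shows that $\Delta_i$ lowers it by $1$. Let $M\subset\ILam_n(p)$ be any nonzero $\IW_n(p)$-submodule, and pick a nonzero homogeneous $f\in M$ of minimal degree $\ell$. Once $\ell=0$ is forced, $M$ contains a nonzero scalar, hence $1$, and repeated application of $\theta_i$'s (whose monomials span $\ILam_n(p)$) yields $M=\ILam_n(p)$.

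The main obstacle is precisely forcing $\ell=0$, i.e.\ showing that no nonzero homogeneous element of positive degree in $\ILam_n(p)$ is annihilated by every $\Delta_i$. My preferred attack is to construct a nondegenerate bilinear pairing on $\ILam_n(p)$ that makes left multiplication by $\theta_i$ the graded adjoint of $\Delta_i$, so that $\bigcap_i\ker\Delta_i$ becomes the orthogonal complement of the sum of all positive-degree components and therefore collapses to $\K$. To secure nondegeneracy I would pass to the Green-ansatz realisation inside $\Lambda_n(p)$ and $\W_n(p)$ from Section~\ref{sect:inhom}, where the decompositions $\theta_i=\sum_\alpha\theta_i^{(\alpha)}$ and $\Delta_i=\sum_\alpha\Delta_i^{(\alpha)}$ present the pairing as a polarised version of the standard Fock pairing on the ambient big Grassmann algebra; alternatively, one could match $\dim\ILam_n(p)$ against the formula~\eqref{eq:fd} cited after Theorem~\ref{thm:vanish} for $\dim\Xi_n(p)$ to obtain nondegeneracy by a dimension count. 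Either route closes out the simplicity argument.
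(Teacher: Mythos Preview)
For the restriction, you and the paper do the same thing in different clothing. The equivariance $\Delta_i\circ\iota_0=\iota_0\circ\delta_i$ you invoke is not a formal consequence of $\iota$ being an algebra map; checking it amounts exactly to rerunning the inductive computation of $\delta_i(\wh\varTheta_{j_1,\dots,j_\ell})$ from the proof of Theorem~\ref{thm:paralg-1} with $\Delta_i,\Theta$ in place of $\delta_i,\wh\varTheta$ (legitimate because $\Delta_i$ obeys the same relations by Lemma~\ref{lem:IW}). That explicit formula, landing in $\ILam_n(p)_{\ell-1}$, is precisely what the paper writes down.

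For simplicity the paper takes the dual route to yours. Rather than descending via $\bigcap_i\ker\Delta_i=\K$, it ascends: every nonzero graded submodule contains the one-dimensional top $\ILam_n(p)_{np}=\K\,\theta_1^p\cdots\theta_n^p$, and then an explicit operator jumps back to degree zero in one stroke. The key identity is that the ambient Berezin operator $D$ of~\eqref{eq:diff-int} already lies in $\IW_n(p)$:
\[
D=\frac{1}{(p!)^n}\,\Delta_n^p\Delta_{n-1}^p\cdots\Delta_1^p,\qquad D(\theta_1^p\cdots\theta_n^p)=(p!)^n.
\]
This single formula is the device you are missing; it replaces your whole nondegeneracy programme.

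Of your two proposed closures, the Fock-pairing route needs the restriction of the ambient pairing on $\Lambda_n(p)$ to $\ILam_n(p)$ to remain nondegenerate, which you have not shown and which is about as hard as the claim itself. The dimension route is not circular but forward-references the Lie theory of Lemma~\ref{lem:g-mod}; once you invoke that, $\Xi_n(p)$ is already a simple $\cD_n(p)$-module, so $\iota_0$ (a nonzero equivariant map out of a simple module) is an isomorphism, and simplicity of $\ILam_n(p)$ follows with no pairing at all. That is a legitimate alternative proof, just less elementary than the paper's.
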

\begin{proof}
Consider the operator 
$
d=\frac{1}2\sum_{i=1}^n \left([\theta_i, \Delta_i]-p\right), 
$
which satisfies $[d, \theta_i^{(\alpha)}]=\theta_i^{(\alpha)}$ and $\left[d, \frac{\partial}{\partial\theta_i^{(\alpha)}}\right]=- \frac{\partial}{\partial\theta_i^{(\alpha)}}$ for all $i$ and $\alpha$.  Thus it gives rise to a 
$\Z$-grading $\W_n(p)=\sum_{k=-np}^{np} \W_n(p)_k$ for $\W_n(p)$, where $\W_n(p)_k=\{F\in \W_n(p)\mid d(F)=h F\}$. Restricting to $\Lam_n(p)$, we obtain a $\Z_+$-grading $\Lam_n(p)=\sum_{d=0}^{n p} \Lam_n(p)_d$.

Note that $[d, \theta_i]=\theta_i$ and $\left[d, \Delta_i\right]=- \Delta_i$ for all $i$. 
Hence the above gradings for $\W_n(p)$ and $\Lam_n(p)$ descend to the subalgebras 
$\IW_n(p)\subset \W_n(p)$ and $\ILam_n(p)\subset \Lam_n(p)$ respectively. 
In particular, $\ILam_n(p)=\sum_{d=0}^{n p} \ILam_n(p)_d$ with $\ILam_n(p)_d=\ILam_n(p)\cap\Lam_n(p)_d$. 

Applying $\Delta_i$ to $1$ and $\theta_j$, we obtain 
\eqref{eq:1}, where the first relation immediately follows from the definition of $\Delta_i$, and  the second is easily verified by using \eqref{eq:Del-the}.

 Now we show that $\ILam_n(p)$ is stable under the actions of all $\Delta_i$. This implies that $\ILam_n(p)$ forms an $\IW_n(p)$-module. 

For any fixed $\ell$, we have the following spanning set of $\ILam_n(p)_\ell$
\[
\Theta_{j_1, j_2, \dots, j_\ell}:=\iota(  \wh\varTheta_{j_1, j_2, \dots, j_\ell}) 
= \theta_{j_1} \theta_{j_2}\dots \theta_{j_\ell}, \quad \forall j_1, j_2, \dots, j_\ell\in \{1, 2, \dots, n\}.
\]
The computations in the proof of Theorem \ref{thm:epi-1} imply
\[
\baln
\Delta_i\left(\Theta_{j_1, j_2, \dots, j_\ell}\right)= -2 \sum_{r=1}^{\ell} \sum_{s={r+1}}^\ell \delta_{i j_s}  \Theta_{j_1, \dots, \wh{j_r}, \dots, j_{s-1},  j_r,  j_{s+1}, \dots, j_\ell}
+ p  \sum_{r=1}^{\ell} \delta_{i j_r}   \Theta_{j_1, \dots, \wh{j_r}, \dots,  j_\ell}, 
\ealn
\]
where both terms on the right hand side belong to $\ILam_n(p)_{\ell-1}$. 
Therefore $\ILam_n(p)$ is stable under the actions of all $\Delta_i$,  thus forms a $\IW_n(p)$-module.

Note that the claimed simplicity of $\ILam_n(p)$ as $\IW_n(p)$-module is well-known for $p=1$, 
but requires proof when $p>1$.  Consider the $\Z_+$-grading of $\ILam_n(p)$. 
The homogeneous component $\ILam_n(p)_{n p}$ of the highest degree $n p$ is $1$-dimensional, 
which is spanned by $\theta_1^p \theta_2^p\dots \theta_n^p=(p!)^n\Theta^{\SE_{max}}$ 
(see Remark \ref{rmk:vanish} and \eqref{eq:Theta-E}  for notation). 
It is evident that every non-zero submodule contains $\ILam_n(p)_{n p}$.
The lowest degree homogeneous component is $\ILam_n(p)_0=\C$, which generates
$\ILam_n(p)$ as $\IW_n(p)$-module. We prove simplicity of $\ILam_n(p)$ by 
showing that all non-zero submodules contain $\ILam_n(p)_0$.

We claim that the operator $D$ defined by \eqref{eq:diff-int} can be re-written as 
\beq\label{eq:diff-int-1}
D=\frac{1}{(p!)^n}\Delta_n^p \Delta_{n-1}^p \dots \Delta_1^p, 
\eeq
and hence $D\in \IW_n(p)$.  
This formula can be easily verified by first using equation \eqref{eq:Del-p-1} to the right hand side, and then re-arrange the orders of the derivations by using the fact that if $\alpha\ne\beta$, then $\frac{\partial}{\partial\theta_i^{(\alpha)}} \frac{\partial}{\partial\theta_j^{(\beta)}}=\frac{\partial}{\partial\theta_j^{(\beta)}}  \frac{\partial}{\partial\theta_i^{(\alpha)}}$ for all $i, j$. Now the obvious fact $D(\Theta^{\SE_{max}})=1$ leads to 
\[
\frac{1}{(p!)^n}\Delta_n^p \Delta_{n-1}^p \dots \Delta_1^p(\theta_1^p \theta_2^p\dots \theta_n^p) = (p!)^n, 
\]
proving that every non-zero submodule contains $\ILam_n(p)_0$.  

This completes the proof. 
\end{proof}

The following result follows from Theorem \ref{thm:mod-1}.
\begin{corollary} Retain notation above. 
\begin{enumerate}[i).]
\item As associative algebra, $\IW_n(p)=\End_\K(\ILam_n(p))$. 
\item
The map  $\iota$ induces an action of $\cD_n(p)$ on $\ILam_n(p)$, which is irreducible. 
\end{enumerate}
\begin{proof}  The first part is clear from Theorem \ref{thm:mod-1}. 
 As the algebra homomorphism $\iota: \cD_n(p)\lra \ILam_n(p)$ is surjective by Theorem \ref{thm:epi-1}, 
 it follows from Theorem \ref{thm:mod-1} that  $\iota$ induces an irreducible $\cD_n(p)$-action on $\Lam_n(p)$.
 This proves the corollary. 
\end{proof}
\end{corollary}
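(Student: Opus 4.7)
My plan is to deduce both parts of the corollary from Theorem \ref{thm:mod-1} together with the Jacobson density theorem and the surjectivity of $\iota$ in Theorem \ref{thm:epi-1}. Part (i) carries the work; part (ii) is then a formal consequence.

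For part (i), consider the representation homomorphism $\rho:\IW_n(p)\lra\End_\K(\ILam_n(p))$ coming from the action in Theorem \ref{thm:mod-1}. Since $\ILam_n(p)$ is a finite-dimensional simple $\IW_n(p)$-module, the Jacobson density theorem identifies the image of $\rho$ with $\End_D(\ILam_n(p))$, where $D=\End_{\IW_n(p)}(\ILam_n(p))$. To establish $D=\K$ I would exploit the $\Z$-grading on $\ILam_n(p)$ induced by the element $d=\frac{1}{2}\sum_{i=1}^n([\theta_i,\Delta_i]-p)\in\IW_n(p)$ from the proof of Theorem \ref{thm:mod-1}: any $T\in D$ commutes with the action of $d$, hence preserves every graded piece, and in particular acts as some scalar $c\in\K$ on the one-dimensional component $\ILam_n(p)_0=\K\cdot 1$. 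The proof of Theorem \ref{thm:mod-1} exhibits $1$ as a cyclic generator of $\ILam_n(p)$, so $T$ is determined by $T(1)=c\cdot 1$, giving $T=c\cdot\id$ and $D=\K$. Surjectivity of $\rho$ onto $\End_\K(\ILam_n(p))$ follows. For injectivity, $\ker\rho$ is a two-sided ideal annihilating the cyclic vector $1$, hence annihilating all of $\ILam_n(p)$; one then confirms $\ker\rho=0$ by a dimension comparison between $\IW_n(p)$, computed from the explicit generators of Section \ref{sect:inhom}, and $(\dim_\K\ILam_n(p))^2$.

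For part (ii), the surjective algebra homomorphism $\iota:\cD_n(p)\twoheadrightarrow\IW_n(p)$ from Theorem \ref{thm:epi-1} allows us to pull the $\IW_n(p)$-action on $\ILam_n(p)$ back to a $\cD_n(p)$-action. Because $\iota$ is surjective, the submodule lattices for the two actions coincide, so the simplicity of $\ILam_n(p)$ as an $\IW_n(p)$-module established in Theorem \ref{thm:mod-1} transfers directly to irreducibility as a $\cD_n(p)$-module, giving part (ii).

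The main obstacle is pinning down $D=\K$ in part (i), especially when $\K=\R$, where Schur's lemma alone does not give this for free. The $\Z$-grading argument above sidesteps the issue by reducing an arbitrary intertwiner to its action on the single vector $1$ in the one-dimensional piece $\ILam_n(p)_0$, so it works uniformly over both $\R$ and $\C$. A smaller residual point is the injectivity of $\rho$, which I would close by the dimension count indicated above once the image has been identified with all of $\End_\K(\ILam_n(p))$.
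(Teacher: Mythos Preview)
Your treatment of part (ii) is exactly the paper's: surjectivity of $\iota$ (Theorem \ref{thm:epi-1}) together with simplicity from Theorem \ref{thm:mod-1} transfers irreducibility to the $\cD_n(p)$-action.

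For part (i) you are considerably more careful than the paper, which just says the statement is ``clear from Theorem \ref{thm:mod-1}''. Your density argument, and in particular the grading trick using $d$ to force the commutant $D$ down to $\K$ (by reducing an intertwiner to its action on the one-dimensional piece $\ILam_n(p)_0$ and then invoking cyclicity of $1$), is correct and is precisely what is needed to make the surjectivity of $\rho:\IW_n(p)\to\End_\K(\ILam_n(p))$ rigorous, especially over $\K=\R$ where Schur alone does not suffice.

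The one genuine gap is your injectivity argument. The dimension count you propose cannot be carried out with the material in Section \ref{sect:inhom}: that section gives generators and some relations for $\IW_n(p)$ but no dimension, and the dimension of $\ILam_n(p)$ itself (formula \eqref{eq:fd}) is obtained only \emph{later} via the Lie-theoretic Lemma \ref{lem:g-mod}. In fact the paper does not establish injectivity of $\rho$ at this stage either; it only falls out at the end of the proof of Theorem \ref{thm:iso-2}, once $\cD_n(p)=\End_\K(\Xi_n(p))$ and the isomorphism $\iota_0:\Xi_n(p)\to\ILam_n(p)$ force the surjections $\cD_n(p)\twoheadrightarrow\IW_n(p)\twoheadrightarrow\End_\K(\ILam_n(p))$ to link algebras of equal finite dimension. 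So your injectivity sketch, as written, makes a forward reference that is not available, and the paper's own proof of the corollary shares this lacuna.
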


By definition, $\cD_n(p)$ is a subalgebra of $\End_\K(\Xi_n(p))$. 
Thus it naturally acts on $\Xi_n(p)$.    
We have the following result. 
\begin{theorem}\label{thm:paralg-2} 
As an $\cD_n(p)$-module with the natural action, $\Xi_n(p)$ is simple and is finite dimensional. Hence 
$\cD_n(p)=\End_\K(\Xi_n(p))$.
\end{theorem}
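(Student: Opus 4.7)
The plan is to identify $\Xi_n(p)$ with $\ILam_n(p)$ as $\cD_n(p)$-modules via the epimorphism $\iota_0$ of Lemma \ref{lem:Xi-ILam}, and then transfer all three claims of the theorem from the corresponding facts for $\ILam_n(p)$ established in Theorem \ref{thm:mod-1} and its corollary. Finite-dimensionality of $\Xi_n(p)$ is already given by Theorem \ref{thm:vanish}, and the explicit dimension formula \eqref{eq:fd} (known from \cite{BG}) yields $\dim_\K \Xi_n(p) = \dim_\K \ILam_n(p)$.

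The central step is to verify that $\iota_0: \Xi_n(p) \to \ILam_n(p)$ is a homomorphism of $\cD_n(p)$-modules, where $\cD_n(p)$ acts on $\ILam_n(p)$ through $\iota$ of Theorem \ref{thm:epi-1}. Compatibility with left multiplications by $\vartheta_i$ and $\theta_i$ is immediate, since $\iota_0$ is an algebra homomorphism with $\iota_0(\vartheta_i)=\theta_i$. Compatibility with the $\delta_i$-action is the crucial point: the uniqueness argument in the proof of Theorem \ref{thm:paralg-1} produces an explicit recursive formula for $\delta_i$ on each graded component $\Xi_n(p)_\ell$, and precisely the same recursion (after $\vartheta_i \mapsto \theta_i$ and $\delta_i \mapsto \Delta_i$) governs $\Delta_i$ on $\ILam_n(p)$, as used in the proof of Theorem \ref{thm:mod-1}. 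Comparing the two formulas gives $\iota_0 \circ \delta_i = \Delta_i \circ \iota_0$ for every $i$, and hence $\iota_0 \circ A = \iota(A) \circ \iota_0$ for every $A \in \cD_n(p)$.

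Once this intertwining is in hand, surjectivity of $\iota_0$ (Lemma \ref{lem:Xi-ILam}) together with the dimension equality forces $\iota_0$ to be an isomorphism of $\cD_n(p)$-modules. Simplicity of $\ILam_n(p)$ over $\cD_n(p)$, which is part of the corollary to Theorem \ref{thm:mod-1}, then transports to give simplicity of $\Xi_n(p)$. For the final claim, the same corollary gives $\End_\K(\ILam_n(p))=\IW_n(p)$, so $\iota_0$ induces an algebra isomorphism $\End_\K(\Xi_n(p)) \cong \IW_n(p)$. Under this isomorphism, the subalgebra $\cD_n(p) \subset \End_\K(\Xi_n(p))$ corresponds to $\iota(\cD_n(p))$, which is all of $\IW_n(p)$ by Theorem \ref{thm:epi-1}. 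Therefore $\cD_n(p) = \End_\K(\Xi_n(p))$.

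The main obstacle is the intertwining check in the central step: it is not purely formal because $\iota_0$ is defined only as an algebra map on $\Xi_n(p)$, while the $\delta_i$-actions on the two sides live on different underlying spaces. What rescues us is the rigidity established in Theorem \ref{thm:paralg-1}, namely that $\delta_i$ is uniquely pinned down by relations which $\iota_0$ manifestly preserves; the parallel rigidity on the $\ILam_n(p)$ side, used in the proof of Theorem \ref{thm:mod-1}, then produces matching recursive formulas and forces the two actions to intertwine.
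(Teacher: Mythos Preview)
Your argument has a circularity gap at the dimension step. You invoke \eqref{eq:fd} to conclude $\dim_\K \Xi_n(p) = \dim_\K \ILam_n(p)$, but in this paper \eqref{eq:fd} is \emph{derived} from Lemma \ref{lem:g-mod}, which identifies $\Xi_n(p)$ with a specific simple $\fso_{2n+1}(\K)$-module; that identification is precisely the substance of the paper's proof of the present theorem. The citation to \cite{BG} does not rescue this: what Bracken--Green compute is the dimension of the parafermion Fock space (in the present language, $\ILam_n(p)$ or $F_n(p)$), not of the abstractly presented algebra $\Xi_n(p)$ of Definition \ref{def:Xi}. Without the theorem, all you have a priori is $\dim \Xi_n(p) \ge \dim \ILam_n(p)$ (from surjectivity of $\iota_0$) together with $\dim \Xi_n(p) < \infty$ (Theorem \ref{thm:vanish}); the reverse inequality, and with it the injectivity of $\iota_0$, is exactly what is at stake.

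The paper proceeds differently and intrinsically: it exhibits a Lie subalgebra $\fg \cong \fso_{2n+1}(\K)$ inside $\cD_n(p)$ (Lemma \ref{lem:Lie}), observes that $\Xi_n(p)$ is cyclic over $\fg$ with the generator $1$ a lowest-weight vector of integral anti-dominant weight, and uses the standard fact that the only finite-dimensional quotient of the corresponding Verma module is the simple one (Lemma \ref{lem:g-mod}); finite-dimensionality is supplied by Theorem \ref{thm:vanish}. Since the generators $\vartheta_i,\delta_i$ all lie in $\fg$, simplicity over $\U(\fg)$ is the same as simplicity over $\cD_n(p)$. Your intertwining check and the transfer along $\iota_0$ are correct, but they constitute the content of Theorem \ref{thm:iso-2} and Corollary \ref{cor:iso-3}, which the paper deduces \emph{from} the present theorem rather than the other way around.
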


This theorem easily implies the following result. It gives a realisation of $\cD_n(p)$ in terms of $\Z_2^p$-graded algebras and their graded derivations. 
\begin{theorem}\label{thm:iso-2} 
The algebra homomorphism $\iota: \cD_n(p)\lra\IW_n(p)$ of Theorem \ref{thm:epi-1} is an isomorphism. 
\end{theorem}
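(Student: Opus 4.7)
The plan is to combine the surjectivity already established in Theorem \ref{thm:epi-1} with the matrix-algebra identification of $\cD_n(p)$ provided by Theorem \ref{thm:paralg-2}: once $\cD_n(p)$ is known to be a simple algebra, any non-zero algebra homomorphism out of it is automatically injective. Since $\iota$ is already surjective, only injectivity remains, and it follows without further calculation.

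Concretely, I would proceed in three short steps. First, invoke Theorem \ref{thm:paralg-2}, which asserts $\cD_n(p) = \End_\K(\Xi_n(p))$ with $\Xi_n(p)$ finite-dimensional (the finiteness coming from Theorem \ref{thm:vanish}). Consequently $\cD_n(p)$ is isomorphic to the matrix algebra $M_N(\K)$ with $N = \dim_\K \Xi_n(p)$, and in particular is a simple $\K$-algebra. Second, note that $\iota: \cD_n(p) \lra \IW_n(p)$ is a non-zero algebra homomorphism, since by construction it sends $\vartheta_i \mapsto \theta_i$ and $\delta_i \mapsto \Delta_i$, which are non-zero in $\IW_n(p)$. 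Third, $\ker \iota$ is then a proper two-sided ideal of the simple algebra $\cD_n(p)$, forcing $\ker \iota = 0$. Combined with the surjectivity from Theorem \ref{thm:epi-1}, this yields the claimed isomorphism.

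The hard part of this statement has already been done elsewhere in the paper: Theorem \ref{thm:paralg-2} does all the heavy lifting by identifying $\cD_n(p)$ with a full endomorphism algebra, which in turn rests on the simplicity of $\Xi_n(p)$ as a $\cD_n(p)$-module and the explicit action of the $\delta_i$ derived from the defining relations in the proof of Theorem \ref{thm:paralg-1}. Once those structural results are in place, Theorem \ref{thm:iso-2} is a one-line consequence of the simplicity of matrix algebras, requiring no further computations on the $\Z_2^p$-graded side.
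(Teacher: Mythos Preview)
Your proof is correct and relies on the same key input as the paper's, namely Theorem \ref{thm:paralg-2}, but you extract the conclusion differently. The paper argues on the module side: it uses the simplicity of $\Xi_n(p)$ as a $\cD_n(p)$-module to show that the $\cD_n(p)$-module map $\iota_0:\Xi_n(p)\to\ILam_n(p)$ has trivial kernel, hence is an isomorphism; it then matches dimensions via $\cD_n(p)=\End_\K(\Xi_n(p))$ and $\IW_n(p)=\End_\K(\ILam_n(p))$ to conclude that the surjection $\iota$ is bijective. You instead argue on the algebra side: since $\cD_n(p)=\End_\K(\Xi_n(p))$ is a full matrix algebra over $\K$, it is simple, so the non-zero homomorphism $\iota$ is automatically injective. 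Your route is a touch more direct; the paper's route has the side benefit of establishing Lemma \ref{cor:iso-3} (that $\iota_0$ is an isomorphism) along the way, which is used later for the integral on $\Xi_n(p)$.
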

\begin{proof}
We have the following commutative diagram, 
\beq
\begin{tikzcd}
\cD_n(p)\ot \Xi_n(p) \arrow[d, "\iota\ot\iota_0"' pos=0.43] \arrow[r]& \Xi_n(p)\arrow[d, "\iota_0"]\\
\IW_n(p)\ot \ILam_n(p) \arrow[r]&  \ILam_n(p), 
\end{tikzcd}
\eeq
where the horizontal maps are the natural actions. The right vertical map  is a $\cD_n(p)$-module epimorphism, where the $\cD_n(p)$-action on $\ILam_n(p)$ is that induced by $\iota$. 
Since $\Xi_n(p)$ is a simple $\cD_n(p)$-module by Theorem \ref{thm:paralg-2}, we must have $\ker(\iota_0)=0$, and $\iota_0$ is an isomorphism. Hence $\End_\K(\Xi_n(p))\cong \End_\K(\ILam_n(p))=\IW_n(p)$. 
The claimed isomorphism between $\cD_n(p)$ and $\IW_n(p)$ now follows from the second statement of Theorem \ref{thm:paralg-2}. 
\end{proof}

The following fact is implied by Theorem \ref{thm:iso-2}. It was also shown independently in the proof of Theorem \ref{thm:iso-2}. 
\begin{lemma}\label{cor:iso-3}
The algebra homomorphism $\iota_0: \Xi_n(p)\lra \ILam_n(p)$ of Lemma \ref{lem:Xi-ILam} is an isomorphism. 
\end{lemma}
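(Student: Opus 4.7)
The plan is to derive this lemma directly from the isomorphism $\iota: \cD_n(p) \to \IW_n(p)$ provided by Theorem \ref{thm:iso-2}. Surjectivity of $\iota_0$ has already been established in Lemma \ref{lem:Xi-ILam}, so the entire task is to prove injectivity.

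First I would identify $\Xi_n(p)$ with its image under left multiplication inside $\End_\K(\Xi_n(p))$, which is a genuine inclusion because $\Xi_n(p)$ is unital (so left multiplication is a faithful representation). By Theorem \ref{thm:paralg-2}, $\End_\K(\Xi_n(p)) = \cD_n(p)$, and hence this gives an inclusion $\Xi_n(p) \hookrightarrow \cD_n(p)$. The analogous inclusion $\ILam_n(p) \hookrightarrow \IW_n(p)$ holds by the same reasoning once one knows $\IW_n(p) = \End_\K(\ILam_n(p))$. According to the last sentence of Theorem \ref{thm:epi-1}, $\iota_0$ is precisely the restriction of $\iota$ under these embeddings.

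Since $\iota$ is an isomorphism by Theorem \ref{thm:iso-2}, its restriction $\iota_0$ is automatically injective, and combined with surjectivity this gives the desired isomorphism. There is essentially no new obstacle at this stage: all the hard work was absorbed into the proof of Theorem \ref{thm:iso-2}, where $\ker(\iota_0) = 0$ was forced by the simplicity of $\Xi_n(p)$ as a $\cD_n(p)$-module (Theorem \ref{thm:paralg-2}) via a commutative diagram argument. A more self-contained alternative, bypassing Theorem \ref{thm:iso-2}, would be to observe directly that $\iota_0$ is a non-zero homomorphism of $\cD_n(p)$-modules whose source $\Xi_n(p)$ is simple, so $\ker(\iota_0)$ must be trivial by Schur's lemma.
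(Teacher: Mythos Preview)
Your proposal is correct and matches the paper's own reasoning. The paper states just before this lemma that it is both ``implied by Theorem~\ref{thm:iso-2}'' and ``shown independently in the proof of Theorem~\ref{thm:iso-2}''; your main argument is precisely the first route (restricting the isomorphism $\iota$), and your alternative via simplicity of $\Xi_n(p)$ is precisely the second (which is in fact how the paper actually establishes $\ker(\iota_0)=0$ inside that proof).
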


We now turn to the proof of Theorem \ref{thm:paralg-2}. 

\subsection{Proof of Theorem \ref{thm:paralg-2}}\label{sect:proof-paralg-2}

The proof is  Lie theoretical  in nature and requires some preparation. 

\subsubsection{A Lie theoretical perspective}

Any associative algebra can be given a Lie algebra structure with the usual commutator $[-  , - ]$ as the Lie bracket. In particular, we give the para Clifford algebra $\cD_n(p)$ such a Lie algebra structure. 
Denote by $\fg$ the subspace of $\cD_n(p)$ with the basis $B=B_+\cup B_0\cup B_-$, where 
\[
\baln
B_+&= \{\vartheta_i\mid 1\le i\le n\}\cup \{[\vartheta_i, \vartheta_j]\mid 1\le i< j\le n \}, \\
B_0&=\{[\vartheta_i, \delta_j]\mid i, j=1, 2, \dots, n\}, \\
B_-&=\{\delta_i \mid 1\le i\le n\}\cup \{ [\delta_i, \delta_j] \mid 1\le i< j\le n \}.
\ealn
\]
The relations \eqref{eq:the-p-0}, \eqref{eq:R-0}, \eqref{eq:del-p-0}, \eqref{eq:R-1}, \eqref{eq:R-2}, \eqref{eq:R-3} and \eqref{eq:R-4} in particular show that $\fg$ is closed under the commutator, thus  is a Lie subalgebra.  

Let $\fu=\K B_+$, $\ol\fu=\K B_-$ and $\fl=\K B_0$, then we have the decomposition $\fg=\fu+\fl+\ol\fu$. Note that $\fp=\fu+\fl$ is a parabolic subalgebra and $\fl$ is its Levi subalgebra. Write $E_{i j}=\frac{1}2[\vartheta_i, \delta_j]$ for all $i, j$. Using Lemma \ref{lem:IW}, we can show that these elements satisfy the standard commutation relations of $\gl_n(\K)$, i.e., 
\beq\label{eq:gl}
[E_{i j}, E_{k \ell}]= \delta_{j k} E_{i \ell} -  \delta_{i \ell} E_{k j},    
\eeq 
and hence $\fl\cong \gl_n(\K)$. 

By considering the the root datum of $\fg$ with respect to the Boral and Cartan subalgebras $\fb\supset \fh$, where $\fb=\fb(\fl)+\fu$ with $\fb(\fl)=\sum_{i\le j}\K E_{i j}\supset \fh=\sum\K E_{i i}$,  we can see that $\fg\cong \fso_{2n+1}(\K)$. 
Thus we have proved the following result
\begin{lemma}\label{lem:Lie}  Retain notation above. 
The subspace $\fg\subset \cD_n(p)$ equipped with the usual commutator $[ -  , -  ]$ forms a Lie algebra isomorphic to $\fso_{2n+1}(\C)$. 
\end{lemma}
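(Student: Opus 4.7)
The plan is to exhibit $\fg$ as a split Lie algebra of type $B_n$ via an explicit root-space decomposition with respect to the candidate Cartan subalgebra $\fh = \sum_{i=1}^n \K E_{ii}$ identified in the paragraph preceding the lemma. The closure of $\fg$ under the commutator has already been granted by the relations \eqref{eq:the-p-0}--\eqref{eq:R-4}, and Lemma \ref{lem:IW} has been used to establish the $\gl_n$-relations \eqref{eq:gl}, so the work is to read off the root system and match it with that of $\fso_{2n+1}$.

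First I would verify that $\fh$ is abelian using \eqref{eq:gl}, and then use \eqref{eq:R-2}, \eqref{eq:R-3}, and \eqref{eq:R-4}, together with the Jacobi identity, to compute the adjoint $\fh$-weight of each element of $B$. Writing $\epsilon_i \in \fh^*$ for the functional dual to $E_{ii}$, a short computation (of the type $[E_{kk}, \vartheta_j] = \delta_{kj}\vartheta_j$) should yield: $\vartheta_i$ carries weight $\epsilon_i$; $\delta_i$ carries weight $-\epsilon_i$; $[\vartheta_i,\vartheta_j]$ carries weight $\epsilon_i + \epsilon_j$; $[\delta_i,\delta_j]$ carries weight $-\epsilon_i - \epsilon_j$; and $E_{ij}$ (for $i \ne j$) carries weight $\epsilon_i - \epsilon_j$. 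These are exactly the roots of $B_n$, each occurring with multiplicity one, and a dimension count gives $\dim \fg = n + 2n^2 = n(2n+1) = \dim \fso_{2n+1}$.

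With the root datum in place, I would recognise $\fg$ as $\fso_{2n+1}$ via Chevalley generators: take $e_i := E_{i,i+1}$ and $f_i := E_{i+1,i}$ for $1 \le i < n$ (corresponding to the long simple roots $\alpha_i = \epsilon_i - \epsilon_{i+1}$), and $e_n := \vartheta_n$, $f_n := \tfrac{1}{p}\delta_n$ (short simple root $\alpha_n = \epsilon_n$); the coroots $h_i$ arise from $[e_i, f_i]$, which \eqref{eq:R-2} and \eqref{eq:gl} place in $\fh$. The task is then to check (i) the Cartan integers $[h_i, e_j] = a_{ij} e_j$ for the $B_n$ Cartan matrix and (ii) the Serre relations $(\ad e_i)^{1 - a_{ij}}(e_j) = 0$ and their $f$-counterparts. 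This produces a surjection from the Serre presentation of $\fso_{2n+1}(\K)$ onto $\fg$, which is an isomorphism by the equality of dimensions.

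The hardest step, I expect, is the short-root Serre relation $(\ad \vartheta_n)^3(E_{n-1,n}) = 0$ and its dual for $\delta_n$, since it is precisely this identity that distinguishes type $B_n$ from $C_n$ and $D_n$. The cleanest route is to transport the calculation across the isomorphism $\iota : \cD_n(p) \to \IW_n(p)$ of Theorem \ref{thm:iso-2}, reducing the check to an elementary manipulation with the generators $\theta_i$, $\Delta_i$ of the $\Z_2^p$-graded Clifford-type algebra $\IW_n(p)$, where ordinary anticommutation makes the triple-bracket identity transparent. A slicker alternative that sidesteps Serre relations entirely would be to observe that as $\fl$-modules, $\fu$ and $\ol\fu$ decompose as $\K^n \oplus \Lambda^2 \K^n$ and its dual, which are $\fl$-irreducible for $n \ge 2$; combined with the $1$-dimensionality of each $\fg_\alpha$ and non-vanishing of $[\fg_\alpha, \fg_{-\alpha}]$, this forces $\fg$ to be simple of rank $n$ with root system $B_n$, and the uniqueness of such a Lie algebra yields $\fg \cong \fso_{2n+1}(\K)$.
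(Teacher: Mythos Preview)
Your approach is essentially the paper's: the paper's argument is the single sentence ``By considering the root datum of $\fg$ with respect to the Borel and Cartan subalgebras $\fb\supset\fh$ \dots\ we can see that $\fg\cong\fso_{2n+1}(\K)$'', and you have fleshed this out into an explicit root-space decomposition and dimension count, which is exactly what that sentence asks the reader to do.

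One caution: your suggested route for the short-root Serre relation---transporting through $\iota:\cD_n(p)\to\IW_n(p)$ via Theorem \ref{thm:iso-2}---is circular. Theorem \ref{thm:iso-2} rests on Theorem \ref{thm:paralg-2}, whose proof in Section \ref{sect:proof-paralg-2} invokes Lemma \ref{lem:g-mod}, and that lemma in turn needs Lemma \ref{lem:Lie} (to apply Verma-module theory for $\fso_{2n+1}$). At this point in the paper only the \emph{epimorphism} of Theorem \ref{thm:epi-1} is available, and surjectivity alone does not let you pull back a vanishing relation. Fortunately the transport is unnecessary: the relation $(\ad\vartheta_n)^3(E_{n-1,n})=0$ follows in two lines directly from \eqref{eq:R-2} and \eqref{eq:R-0} inside $\cD_n(p)$, since $[\vartheta_n,E_{n-1,n}]=-\vartheta_{n-1}$, then $[\vartheta_n,-\vartheta_{n-1}]=-[\vartheta_n,\vartheta_{n-1}]$, and finally $[\vartheta_n,[\vartheta_n,\vartheta_{n-1}]]=0$. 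Your ``slicker alternative'' via the $\fl$-module structure of $\fu$ and $\ol\fu$ likewise avoids the circularity and works cleanly.
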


We can consider $\Xi_n(p)$ as a $\fg$-module. We have the following result. 

\begin{lemma}  \label{lem:g-mod}
As $\fg$-module, $\Xi_n(p)$ is isomorphic to the simple module with integral dominant highest weight  $\lambda=(\frac{p}2, \frac{p}2, \dots, \frac{p}2)$. 
\begin{proof}
As $\fg$-module, $\Xi_n(p)$ is cyclically generated by $1$, and has the following properties. 
\[
\baln
&\delta_i(1)=0, \ \text{ and hence } \  [\delta_i, \delta_j](1)=0, \quad \forall i\ne j, \\
& E_{i j}(1) = \frac{1}{2}[\vartheta_i, \delta_j] (1)=0, \quad \forall i\ne j,  \\
& E_{i i}(1)=\frac{1}2[\vartheta_i, \delta_i](1) = - \frac{p}{2}, \quad \forall i. 
\ealn
\]
In particular,  any root vector in the Borel subalgebra $\ol\fb$ (opposite Borel of $\fb$)  annihilates $1$, 
where $\ol\fb$ is given by 
$
\ol\fb=\ol\fb(\fl)+\ol\fu$, with  $\ol\fb(\fl)=\sum_{i\ge j}\K E_{i j}.
$
Hence $\Xi_n(p)$ is a quotient module of a Verma module $\ol{V}_{\ol\lambda}$ induced by a $1$-dimensional $\ol\fb$-module spanned by a vector with weight $\ol\lambda=(-\frac{p}2, -\frac{p}2, \dots, -\frac{p}2)$.

Note that the weight $\ol\lambda$ is integral anti-dominant. Thus the quotient of $\ol{V}_{\ol\lambda}$ by the unique maximal submodule $\ol{M}_{\ol\lambda}$ is finite dimensional. Since it is the only finite dimensional quotient module of $\ol{V}_{\ol\lambda}$, we have $\Xi_n(p)\cong \ol{V}_{\ol\lambda}/\ol{M}_{\ol\lambda}$, which is the simple $\U(\fg)$-module with lowest weight $\ol\lambda$,  and highest weight $\lambda=(\frac{p}2, \frac{p}2, \dots, \frac{p}2)$. 
\end{proof}
\end{lemma}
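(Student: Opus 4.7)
The plan is to exhibit the element $1 \in \Xi_n(p)$ as a lowest weight vector for $\fg$, verify that it generates $\Xi_n(p)$, and then invoke the fact that a Verma module with integral anti-dominant lowest weight has a unique (and finite-dimensional, hence simple) quotient. The opposite Borel $\ol\fb \subset \fg$ with respect to $\fb = \fb(\fl) + \fu$ is $\ol\fb = \ol\fb(\fl) + \ol\fu$, where $\ol\fb(\fl) = \sum_{i\ge j}\K E_{ij}$ and $\ol\fu = \K B_-$ is spanned by the $\delta_i$ and the $[\delta_i,\delta_j]$ with $i<j$.

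First I would compute the weight of $1$: using \eqref{eq:vac-1} one has $\delta_i(1) = 0$, whence $E_{ii}(1) = \tfrac12[\vartheta_i,\delta_i](1) = -\tfrac12 \delta_i(\vartheta_i) = -\tfrac{p}{2}$ for every $i$. So $1$ has weight $\ol\lambda = (-\tfrac{p}{2},\dots,-\tfrac{p}{2})$ with respect to $\fh = \sum \K E_{ii}$. Next I would verify that $\ol\fb$ annihilates $1$: all elements of $\ol\fu$ act through the $\delta_i$'s which kill $1$, and the strictly-lower-triangular $E_{ij}$ ($i>j$) satisfy $E_{ij}(1) = -\tfrac12\delta_j(\vartheta_i) = -\tfrac{p}{2}\delta_{ij} = 0$. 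Thus $1$ is a lowest weight vector of weight $\ol\lambda$.

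Because $B_+$ contains the $\vartheta_i$ acting by left multiplication, the $\fg$-submodule generated by $1$ already contains every monomial $\wh\varTheta_{j_1,\dots,j_\ell}$, so it equals $\Xi_n(p)$. Consequently $\Xi_n(p)$ is a quotient of the lowest-weight Verma module $\ol V_{\ol\lambda}$ for $\fg \cong \fso_{2n+1}(\K)$.

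Finally, since $\ol\lambda$ is integral anti-dominant (its negative $\lambda = (\tfrac{p}{2},\dots,\tfrac{p}{2})$ is integral dominant for $\fso_{2n+1}$, recall that spin weights are admissible there), the maximal submodule $\ol M_{\ol\lambda}$ of $\ol V_{\ol\lambda}$ is unique and the irreducible quotient $\ol V_{\ol\lambda}/\ol M_{\ol\lambda}$ is the unique finite-dimensional quotient of $\ol V_{\ol\lambda}$. Combined with the fact that $\Xi_n(p)$ is finite-dimensional (Theorem \ref{thm:vanish}), this forces $\Xi_n(p) \cong \ol V_{\ol\lambda}/\ol M_{\ol\lambda}$, the simple module with highest weight $\lambda$. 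The point most likely to demand care is identifying the correct Borel/opposite Borel so that the weight computation gives an anti-dominant (rather than dominant) lowest weight, and being explicit that $\lambda = \tfrac{p}{2}(1,\dots,1)$ indeed lies in the weight lattice of $\fso_{2n+1}$; once these are in place the Verma module argument closes immediately.
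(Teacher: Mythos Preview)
Your proposal is correct and follows essentially the same route as the paper: exhibit $1$ as a lowest-weight vector for the opposite Borel $\ol\fb$, verify it generates $\Xi_n(p)$, identify $\Xi_n(p)$ as a quotient of the lowest-weight Verma module $\ol V_{\ol\lambda}$, and then use finite-dimensionality (via Theorem~\ref{thm:vanish}) together with the uniqueness of the finite-dimensional quotient to conclude. You are slightly more explicit than the paper in spelling out why $1$ generates (via the $\vartheta_i \in B_+$) and in invoking Theorem~\ref{thm:vanish}, but the argument is the same.
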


It follows Lemma \ref{lem:g-mod} that the dimension of $\Xi_n(p)$ is given by Weyl's dimension formula, 
and we have
\beq\label{eq:fd}
 \dim_\K(\Xi_n(p))=\prod_{1\le i\le j\le n}\frac{p+i+j-1}{i+j-1}.
\eeq

\subsubsection{$\Xi_n(p)$  as $GL_n(\K)$-module}\label{sect:GL-mod}
The decomposition of $\Xi_n(p)$ as $\gl_n(\K)$-module is known \cite{BG}  (also see \cite{GP}). 
To discuss it, we work with the $\gl_n(\K)$ algebra in $\cD_n(p)$ 
with the generators 
\[ 
\wt{E}_{ij} = \frac{1}2[\vartheta_i, \delta_j] + \frac{p}2\delta_{i j}, \quad i, j=1, 2, \dots, n,  
\]
which also satisfy the standard commutation relations of $\gl_n(\K)$. 
Then the weights of $\Xi_n(p)$ for this $\gl_n(\K)$ all belong to $\{0, 1, \dots, p\}^n$. 
In particular, $\wt{E}_{ii}(1)=0$ for all $i$. Denote by $L_\lambda$ a simple $\gl_n(\K)$-submodule of $\Xi_n(p)$ with highest weight $\lambda=(\lambda_1, \lambda_2, \dots, \lambda_n)$. 
Then as $\gl_n(\K)$-module,  
\beq\label{eq:gl-mod}
\Xi_n(p) = \bigoplus_{\lambda} L_\lambda, 
\eeq
where the sum is over all $\lambda\in\CP(n; p)$ with 
\beq
\CP(n; p)=\left\{\lambda\in \{0, 1, \dots, p\}^n\mid p\ge \lambda_1\ge \lambda_2\ge \dots\ge \lambda_n\ge 0\right\}, 
\eeq
and every $L_\lambda$ with such a highest weight $\lambda$ appears with multiplicity $1$. 
Write $|\lambda|=\sum_{i=1}^n \lambda_i$. Then 
\[
\Xi_n(p)_\ell = \bigoplus_{\substack{\lambda\in\CP(n; p), \\ |\lambda|=\ell}} L_\lambda
\]
Denote by $\rho^{(n; p)}_\lambda$ the representation of $\GL_n(\K)$ on $L_\lambda$, 
 and write 
\beq\label{eq:GL-rep}
\rho^{(n; p)}_{(\ell)}=\bigoplus_{\substack{\lambda\in\CP(n; p), \\ |\lambda|=\ell}} \rho^{(n; p)}_\lambda, \quad \rho^{(n; p)}=\bigoplus_{\ell} \rho^{(n; p)}_{(\ell)}.
\eeq

All $\rho^{(n; p)}_\lambda$ are polynomial representations of $\GL_n(\K)$, and hence $\rho^{(n; p)}$ is also a polynomial representation.

\begin{remark}\label{rmk:GL-rep}
Given any commutative $\K$-algebra $A$ with identity, there exists a polynomial representation of $\GL_n(A)$ on $A\ot_K L_\lambda$, which we still denote by $\rho^{(n; p)}_\lambda$. Hence we also have the polynomial representation of $GL_n(A)$ on $A\ot_\K\Xi_n(p)$, which we will also denote by $\rho^{(n; p)}$.
\end{remark}

\subsubsection{Proof of Theorem \ref{thm:paralg-2}}
\begin{proof}
Since $\cD_n(p)\subset \End_\K(\Xi_n(p))$, simplicity of $\Xi_n(p)$ as $\cD_n(p)$-module 
implies $\cD_n(p)=\End_\K(\Xi_n(p))$. Thus we only need to prove the simplicity of $\Xi_n(p)$. 

Note that all $\vartheta_i$ and $\delta_i$ belong to the Lie algebra $\fg$, thus $\cD_n(p)$ is generated by $\fg$ (hence  
is a quotient of the universal enveloping algebra $\U(\fg)$). Thus simplicity of
$\Xi_n(p)$ as a $\U(\fg)$-module is the same as its simplicity as a $\cD_n(p)$-module. 
\end{proof}

\section{Affine para-spaces}\label{sect:affine}

In this section and the next section, we introduce para-spaces and 
investigate basics of differential analysis on them. 
Let us start by studying affine para-spaces. 
The underlying field $\K$ is either $\R$ or $\C$.

\subsection{The affine para-space $\K^{m|n}(p)$}

Fix a non-negative integer $m$,  and positive integers $n$ and $p$.
We construct non-commutative analogues of the affine superspace $\K^{m|n}$ 
by using the para-algebra $\Xi_n(p)$.  We show that these non-commutative spaces 
admit differential analysis similar to that on the affine superspace.

For any open set $U$ in the affine space $\K^m$, we shall consider one of the three families of functions on $U$:  the $C^\infty$, real analytic, and complex analytic (for $\K=\C$) functions, depending on the situation. Adopting notation of \cite[Chapter 1]{W}, we denote by $\CS$ one of the families, and by $\CS(U)$ the ring of $\K$-valued functions belonging to the family $\CS$.  
If $V\subset U$, we have the natural restriction $r^U_V: \CS(U)\lra \CS(V)$. 
Let 
\beq\label{eq:para-SU}
\CS(U, \Xi_n(p))= \CS(U)\ot_\K \Xi_n(p). 
\eeq
This is a non-commutative algebra over the commutative ring $\CS(U)$, which is free as $\CS(U)$-module. 
If $V\subset U$, the map $r^U_V$  leads to the following restriction map $r^U_V\ot_\K \id_{\Xi_n(p)}: 
\CS(U, \Xi_n(p)) \lra \CS(V, \Xi_n(p))$, which is clearly an $\CS(U)$-algebra homomorphism. 
Thus we have the free sheaf $\CS_{\K^m}(n; p)$ of para-algebras consisting of $\CS(U, \Xi_n(p))$ for 
all open subsets $U\subset \K^m$ and the restriction maps

Often we will regard an element $F\in \CS(U, \Xi_n(p))$ as a function $F: U\lra \Xi_n(p)$.

\begin{definition} 
Denote by $\K^{m|n}(p)$ the pair $(\K^m, \CS_{\K^m}(n; p))$, i.e., the affine space $\K^m$ equipped with the free sheaf $\CS_{\K^m}(n; p)$ of para-algebras. Call $\K^{m|n}(p)$ the affine para-space of dimension $m|n$ and order $p$. 
\end{definition} 

Write the coordinate of $\K^m$ as $(x_1, x_2, \dots, x_m)$, and denote the generators of $\Xi_n(p)$ by $\vartheta_i$ for $i=1, 2, \dots, n$ as in Definition \ref{def:Xi}.   We will refer to 
\beq\label{eq:aff-coord}
(x_1, x_2, \dots, x_m, \vartheta_1, \vartheta_2, \dots, \vartheta_n)
\eeq
as the coordinate of $\K^{m|n}(p)$.

Note that $\K^{m|n}(1)=\K^{m|n}$, the affine superspace. 

\subsection{Integral on $\Xi_n(p)$} 
Analysis on $\K^{m|n}(p)$ is to study $\CS(\K^m,  \Xi_n(p)) = \CS(\K^m)\ot_\K \Xi_n(p)$, 
and hence the first step is to investigate $\CS(\K^m)$ and $\Xi_n(p)$
separately. The analysis of $\CS(\K^m)$ is familiar to us, so we focus on the later.  

We have a good candidate for the algebra of ``differential operators'' on $\Xi_n(p)$, namely, $\cD_n(p)$, but we have no integral on $\Xi_n(p)$.  Let us construct one. 

We have constructed an integral on $\ILam_n(p)$ (see Definition \ref{def:int-ILam}), which is shown in Theorem \ref{thm:int-p} to be independent of the embedding of $\ILam_n(p)$ in $\Lam_n(p)$. 
Recall from Corollary \ref{cor:iso-3} the algebra isomorphism $\iota_0: \Xi_n(p)\lra \ILam_n(p)$ (which is the restriction of  the algebra isomorphism $\iota: \cD_n(p)\lra \IW_n(p)$ given inTheorem \ref{thm:iso-2}).  
It allows us to translate the integral on $\ILam_n(p)$ to $\Xi_n(p)$. 
\begin{definition} Define the following map
\beq
\int_{\Xi_n(p)}:= \int_{(p)}\circ\iota_0^{-1}: \Xi_n(p) \lra \K,
\eeq
and  call it an integral on $\Xi_n(p)$.
\end{definition}

It follows Theorem \ref{thm:int-p} that 
\begin{theorem}\label{thm:int-Xi}
For any $F\in \Xi_n(p)$, 
\[
\int_{\Xi_(p)} F = \frac{1}{(p!)^n}\delta_n^p \delta_{n-1}^p \dots \delta_1^p(F). 
\]
\end{theorem}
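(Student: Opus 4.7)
The plan is to transport the integration formula \eqref{eq:diff-int-1} (established inside the proof of Theorem \ref{thm:mod-1}) from $\ILam_n(p)$ to $\Xi_n(p)$ via the isomorphism $\iota_0$. By construction, $\int_{\Xi_n(p)}$ is defined precisely so as to make the isomorphism $\iota_0$ compatible with integration: computing $\int_{\Xi_n(p)} F$ amounts to computing $\int_{(p)} \iota_0(F)$ on the $\ILam_n(p)$ side. On that side, Theorem \ref{thm:mod-1} already provides the concrete operator representation
\begin{equation*}
\int_{(p)} G \;=\; \frac{1}{(p!)^n}\,\Delta_n^p\,\Delta_{n-1}^p \cdots \Delta_1^p(G), \qquad G \in \ILam_n(p),
\end{equation*}
so the entire task reduces to rewriting the right-hand side in terms of the $\delta_i$.

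The key intermediate step is the intertwining identity $\iota_0 \circ \delta_i = \Delta_i \circ \iota_0$ for each $i=1,\dots,n$. This is immediate from the construction: by Theorem \ref{thm:epi-1} and Theorem \ref{thm:iso-2}, $\iota_0$ is the restriction to $\Xi_n(p)$ of the algebra isomorphism $\iota : \cD_n(p) \to \IW_n(p)$, which sends $\delta_i \mapsto \Delta_i$. Because $\Xi_n(p)$ is the natural $\cD_n(p)$-module and $\ILam_n(p)$ is the natural $\IW_n(p)$-module (Theorem \ref{thm:mod-1}), $\iota_0$ becomes a module map with respect to $\iota$, which is exactly the claimed intertwining at the generator level.

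Iterating, I obtain $\iota_0\bigl(\delta_n^p \cdots \delta_1^p(F)\bigr) = \Delta_n^p \cdots \Delta_1^p\bigl(\iota_0(F)\bigr)$ for every $F \in \Xi_n(p)$. Combined with the formula for $\int_{(p)}$ above and the definition of $\int_{\Xi_n(p)}$, this yields
\begin{equation*}
\int_{\Xi_n(p)} F \;=\; \frac{1}{(p!)^n}\,\Delta_n^p \cdots \Delta_1^p(\iota_0(F)) \;=\; \frac{1}{(p!)^n}\,\iota_0\bigl(\delta_n^p \cdots \delta_1^p(F)\bigr).
\end{equation*}
By the $\Z_+$-grading of $\Xi_n(p)$ (which extends to at most degree $np$ by Theorem \ref{thm:vanish}) and the fact that each $\delta_i$ lowers degree by one, the element $\delta_n^p \cdots \delta_1^p(F)$ lies in the scalar part $\K \subset \Xi_n(p)$, on which $\iota_0$ acts as the identity. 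This gives the desired formula.

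The proof has no serious obstacle; the only point requiring a moment's thought is the equivariance of $\iota_0$, which is really just the assertion that the module actions of $\cD_n(p)$ on $\Xi_n(p)$ and of $\IW_n(p)$ on $\ILam_n(p)$ are compatibly identified by $\iota$ and $\iota_0$. Once this bookkeeping is in place, Theorem \ref{thm:int-Xi} is a direct transcription of \eqref{eq:diff-int-1}.
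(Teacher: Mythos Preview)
Your argument is correct and is exactly the transport that the paper intends: the paper simply writes ``It follows [from] Theorem~\ref{thm:int-p}'' and leaves the reader to unwind the definition $\int_{\Xi_n(p)} = \int_{(p)}\circ\iota_0$ together with the intertwining $\iota_0\circ\delta_i=\Delta_i\circ\iota_0$, which you have spelled out in full. One small citation point: the formula $\int_{(p)} G=\frac{1}{(p!)^n}\Delta_n^p\cdots\Delta_1^p(G)$ is not Theorem~\ref{thm:mod-1} itself but rather Theorem~\ref{thm:int-p} (equivalently, Lemma~\ref{lem:int-der} combined with \eqref{eq:diff-int-1}); Theorem~\ref{thm:mod-1} only supplies the operator identity \eqref{eq:diff-int-1} in passing.
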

Using the above theorem, or using properties of $\int_{(p)}$ proved in Appendix \ref{sect:graded}, we obtain the following result. 

\begin{lemma} The integral on $\Xi_n(p)$ can be evaluated for all $\wh\Theta_{i_1, i_2, \dots, i_\ell}$ as follows: 
\[
\int_{\Xi_n(p)}\wh\Theta_{i_1, i_2, \dots, i_\ell}\ne 0 \ \text{  only if $d_j(i_1, i_2, \dots, i_\ell)=p$ for all $j$}.
\]
In particular, 
$\int_{\Xi_n(p)}
\vartheta_1^p \vartheta_2^p\dots \vartheta_n^p=(p!)^n$.  
\end{lemma}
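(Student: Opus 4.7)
The plan is to apply the explicit formula
$\int_{\Xi_n(p)} F = \frac{1}{(p!)^n}\delta_n^p \delta_{n-1}^p \cdots \delta_1^p(F)$
from Theorem \ref{thm:int-Xi}, together with the $\Z_+$-grading of $\Xi_n(p)$ and Theorem \ref{thm:vanish}.

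For the first assertion, each $\delta_i$ carries degree $-1$ for the $\Z_+$-grading, so the composition $\delta_n^p\cdots\delta_1^p$ lowers degree by $np$. Since $\wh\Theta_{i_1,\dots,i_\ell}$ lies in $\Xi_n(p)_\ell$ while the integral takes values in $\K=\Xi_n(p)_0$, a non-zero result forces $\ell = np$. Combined with $\sum_j d_j(i_1,\dots,i_\ell) = \ell = np$ and Theorem \ref{thm:vanish} (which forces every $d_j\le p$ on pain of $\wh\Theta_{i_1,\dots,i_\ell}=0$), this yields $d_j = p$ for every $j$, establishing the vanishing statement.

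For the value $\int_{\Xi_n(p)}\vartheta_1^p\cdots\vartheta_n^p = (p!)^n$, my preferred route is to transport the computation via the algebra isomorphism $\iota_0\colon \Xi_n(p)\to\ILam_n(p)$ of Lemma \ref{cor:iso-3}, which sends $\vartheta_i\mapsto\theta_i$ and, being the restriction of $\iota$ from Theorem \ref{thm:iso-2}, intertwines $\delta_i$ with $\Delta_i$. Under this isomorphism the desired identity becomes
$\frac{1}{(p!)^n}\Delta_n^p\cdots\Delta_1^p(\theta_1^p\cdots\theta_n^p) = (p!)^n$,
which was already established inside the proof of Theorem \ref{thm:mod-1}; pulling back by $\iota_0^{-1}$ gives the asserted value in $\Xi_n(p)$.

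A more direct algebraic computation in $\Xi_n(p)$ would proceed by first establishing $\delta_i^p(\vartheta_i^p) = (p!)^2$ from the consistency-check recursion $\delta_i(\vartheta_i^k) = k(p-k+1)\vartheta_i^{k-1}$ already appearing in the paper, and then inductively peeling off the factors $\vartheta_i^p$ from the left of $\vartheta_1^p\cdots\vartheta_n^p$, using $\vartheta_i^p\vartheta_j = -\vartheta_j\vartheta_i^p$ for $j\ne i$ together with the vanishing $\delta_i(\vartheta_{j_1}\cdots\vartheta_{j_r}) = 0$ whenever no $j_t$ equals $i$ (a special case of the explicit formula for $\delta_i$ on monomials from the proof of Theorem \ref{thm:paralg-1}). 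The main obstacle along this direct route is the sign bookkeeping required when commuting $\delta_1^p$ past the remaining $\vartheta_j^p$ with $j>1$; routing the calculation through $\ILam_n(p)$, where graded-commutativity is intrinsic, sidesteps this entirely.
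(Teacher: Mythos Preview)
Your proof is correct. For the vanishing statement you take a cleaner route than the paper: the paper simply points to the Appendix, where the analogous statement for $\ILam_n(p)$ is proved by expanding $\Theta_{i_1,\dots,i_\ell}$ in the explicit basis $\{\Theta^{\SE}\}$ of the ambient $\Lam_n(p)$ and observing that $\Theta^{\SE_{max}}$ can occur only when every $d_j=p$. Your degree-counting argument (the operator $\delta_n^p\cdots\delta_1^p$ has degree $-np$, forcing $\ell=np$; then $\sum_j d_j=\ell$ together with the bound $d_j\le p$ from Theorem~\ref{thm:vanish} pins down $d_j=p$) is intrinsic to $\Xi_n(p)$ and avoids passing through the ambient $\Z_2^p$-graded algebra altogether. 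For the value $(p!)^n$ your preferred route---transporting via $\iota_0$ and invoking the identity already recorded in the proof of Theorem~\ref{thm:mod-1}---is exactly the path the paper has in mind when it says ``using properties of $\int_{(p)}$ proved in Appendix~\ref{sect:graded}''.
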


\subsection{Differential analysis on $\K^{m|n}(p)$}
Take an open set $U\subset\K^m$ and consider $\CS(U, \Xi_n(p))$. 
Denote by $\SD(U)$ the ring of differential operators on $U$ with the usual action $\SD(U)\times\CS(U)\lra  \CS(U)$ (which is $\K$-bilinear). Let  
\beq
\SD(U, \Xi_n(p))= \SD(U)\ot_\K \cD_n(p), 
\eeq
whose elements naturally act on $\CS(U, \Xi_n(p))=\CS(U)\ot_\K\Xi_n(p)$.

Now there is a natural $\K$-bilinear action of $\SD(U, \Xi_n(p))$ on $\CS(U, \Xi_n(p))$, 
\beq\label{eq:analysis}
\SD(U, \Xi_n(p))\times \CS(U, \Xi_n(p))\lra \CS(U, \Xi_n(p)).
\eeq
Some kind of ``differentiation'' in the para-variables $\vartheta_i$ is effected by the operators $\delta_i$; 
these operators play an essential role in the differential analysis on $\K^{m|n}(p)$, 
and any para-manifolds to be defined. 
The investigation of the action \eqref{eq:analysis} may be regarded as differential analysis on $(U, \CS(U, \Xi_n(p)))$. 

\begin{remark}\label{rmk:para-diff}
For easy reference, we shall call  $\delta_i$ the {\em para-differential operators} in the context of para-spaces.  
\end{remark}

Observe that the non-commutativity of $\CS(U, \Xi_n(p))$ does not cause any problems of principle; rather it makes the study more interesting.
It is particularly important to note that usual notion of differential equations generalises to the present setting.
For convenience of reference, we shall call equations involving operators in $\SD(U, \Xi_n(p))$ acting on functions in $\CS(U, \Xi_n(p))$ as {\em para-differential equations}.

Para-differential equations can be applied to model physical problems, for example, one could study Schr\"odinger equations on $\K^{m|n}(p)$, generalising the investigations on superspaces in \cite{D1, Z}. 

Below we consider a very simple para-differential equation for the purpose of illustration. 

\begin{example}[A Schr\"{o}dinger equation on para-space]  \label{eg:Schrodinger} 
Write $(x, \vartheta)$ for the  coordinate of $\R^{1|1}(p)$, and denote by $\delta$ the para-differential operator with respect to $\vartheta$.  Let $H\in\SD(\R, \Xi_1(p))$ be given by
\beq
H=- \frac{1}2 \left(\frac{\partial}{\partial x}\right)^2 +\frac{1}2\left( x^2 + 2x \delta + \delta^2\right). 
\eeq
Consider the eigen  para-differential equation 
\beq\label{eq:para-DE}
H F(x, \vartheta) = E F(x, \vartheta), 
\eeq
where $E\in\R$ is the eigenvalue, and $F(x, \vartheta)\in \CS(\R, \Xi_1(p))$ with $\CS$ being the $C^\infty$ family.

Let $H_n(x)$ be the $n$-th Hermite polynomial, and define 
\[
\CH_n(x, \delta)= \sum_{\ell\ge 0} \frac{\delta^\ell}{\ell!}  \left(\frac{d}{d x}\right)^\ell H_n(x), \quad 
F_0(x, \delta)= \sum_{\ell\ge 0} \frac{\delta^\ell}{\ell!}  \left(\frac{d}{d x}\right)^\ell e^{-x^2/2},  
\]
both of which are finite sums as $\delta^{p+1}=0$. 
Given any non-zero $f(\vartheta)\in \Xi_1(p)$, we let 
$
F_n(x, \vartheta) = \CH_n(x, \delta) F_0(x, \delta)(f(\vartheta)), 
$
which is an element of $\CS(\R, \Xi_1(p))$. 
Then  
\[
H F_n(x, \vartheta) = \left(n + \frac{1}{2}\right) F_n(x, \vartheta), \quad \forall n=0, 1, \dots,
\]
i..e, $F_n(x, \vartheta)$ is an eigenfunction of $H$ with eigenvalue $n + \frac{1}{2}$. 

\begin{remark}\label{rmk:Sch}
If we think of $H$ as a Hamiltonian operator and \eqref{eq:para-DE} as a Schrodinger equation, the different $F_n(x, \vartheta)$ corresponding to different choices of $f(\vartheta)$ are the degenerate states at the energy $n + \frac{1}{2}$ (in some unit). 
\end{remark}
\end{example}

\subsection{Integral on $\K^{m|n}(p)$}\label{sect:int-affine}

Let us construct an integral, which maps appropriate elements of $\CS(U, \Xi_n(p))$ to scalars. Such an integral is indispensable for doing analysis on para-spaces. 

As a $\K$-vector space, $\Xi_n(p)$ is finite dimensional. We choose a basis $\{B_a\mid a=1, 2, \dots, \dim\}$ for it, where $\dim=\dim_\K(\Xi_n(p))$. Any $F\in   \CS(U, \Xi_n(p))$ can be expressed as $F=\sum_a f_a\ot B_a$ with $f_a\in\CS(U)$.  Let $ \CS(U, \Xi_n(p))^C$ be the subspace consisting of $F\in  \CS(U, \Xi_n(p))$ such that all $f_a$ are integrable, that is $\int_{U} f_a$ exist for all $a$. Then we have the map  $\int_{U}\ot\id:  \CS(U, \Xi_n(p))^C\lra \Xi_n(p)$.  Compose this with the integration $\int_{(p)}$ on the para-algebra, we obtain the map
\beq\label{eq:integration}
\int:=\int_{\Xi_n(p)}\circ\left(\int_{U}\ot\id_{\Xi_n(p)}\right):  \CS(U, \Xi_n(p))^C\lra \K.  
\eeq
This defines an integral on $(U,  \CS(U, \Xi_n(p)))$.  

\begin{remark}
One may take $\int=\int_{U}\circ\left(\id_{\CS(U)}\ot\int_{\Xi_n(p)}\right)$ instead, then $\int F$ for an element $F\in \CS(U, \Xi_n(p))$ is well defined so long as the coefficient (belonging to $\CS(U)$) of $\Theta^{\SE_{max}}$ in $F$ is integrable over $U$.
\end{remark}

\section{Para-manifolds}\label{sect:folds}

We consider more complicated non-commutative geometries by gluing together affine para-spaces, which will be referred to as para-manifolds. We then develop in some detail a family of examples denoted by $\BKP^{m|n}(p)$.  

The development of a general theory of para-manifolds will be pursued in a future publication \cite{Z23-b}.

\subsection{Para-manifolds}
We adopt some terminology from \cite{W}. 

Fix a non-negative integer $m$,  and positive integers $n$ and $p$. 
Let $X$ be a connected $\CS$-manifold of dimension $m$ over $\K$, whose structure sheaf will be denoted by $\CS_X$.

A sheaf $\CA$ of associative algebras (which is not necessarily commutative) on $X$ is 
a  sheaf of $\CS_X$-modules such that 
$\CS_X\subset \CA$ is a subsheaf, 
$\CA(U)$ on any open set $U\subset X$ is an $\CS_U$-algebra, and 
for any $V\subset U$, the restriction map to from $U$ to $V$ is an $\CS_U$-algebra homomorphism. 
The sheaf of algebras is said to be $\Z_+$-graded if 
$\CA=\oplus_{k\ge 0}\CA_k$ for subsheaves $\CA_k$ such that 
$\CA_0=\CS_X$, and 
$\CA_k(U)\CA_\ell(U)\subset\CA_{k+\ell}(U)$ for all $k, \ell$ on any open $U\subset X$. 

We take the following as a``working definition'' of sheave of para-algebras.
\begin{definition} 
Let $X$ be a connected $\CS$-manifold. Let $\CS_X(n;p)$ be a sheaf of algebras on $X$ with the following properties:
\begin{enumerate}[i).]
\item
there exists a locally free subseaf $\CE_X$ of vector spaces of rank $n$,  
\item for any $x\in X$, there is an open neighbourhood $B\ni x$ such that $\CS_X(B, n; p)$ is generated by $\CE_X(B)$ as $\CS_X(B)$-algebra,
and the following relations hold
\beq
\xi_1^p \xi_2=- \xi_2 \xi_1^p, \quad
[\xi_1, [\xi_2, \xi_3]]=0, \quad \forall\xi_1, \xi_2, \xi_2\in \CE_X(B).
\eeq
%
%
\end{enumerate}
Call $\CS_X(n;p)$ a sheaf of para-algebras of order $p$ and degree $n$.
\end{definition} 
We say that $\CS_X(n;p)$ is generated by $\CE_X$. 

We want to study manifolds equipped with such sheaves of para-algebras; 
this is inspired by the algebraic geometric formulation of supermanifolds \cite{L}. 
For easy reference, we introduce the following terminology.
\begin{definition} Let $X$ be a connected $\CS$-manifold, and let $\CS_X(n; p)$ be a sheaf of para-algebras on $X$.  
Call the ringed space $(X, \CS_X(n; p))$ a {\em para-manifold} and denote it 
by $\X(n; p)$.  
\end{definition} 

We will refer to $\CS_X(n; p)$ as the structure sheaf, $X$ the base manifold, 
$p$ the order and $m|n$ the dimension,  of $X(n; p)$. 

\begin{remark}
Observe that $\X(1)$ is a supermanifold  of super dimension $m|n$. 
\end{remark}

As $\CE_X$ is a locally free sheaf (a vector bundle),  
there is an open covering $\SC=\{U_\alpha\}$ of $X$  such that 
$\CE_X(U_\alpha)\cong  \CS_X(U_\alpha)^{\oplus n}$ for all $\alpha$. Note that $n$ is independent of $\alpha$ as the manifold is connected.  If $U_{\alpha\beta}:=U_\alpha\cap U_\beta\ne \emptyset$, there exists an isomorphism of $\CS_X(U_{\alpha\beta})$-modules 
\[
g_{\alpha \beta}:  \CE_X(U_\alpha)|_{U_{\alpha\beta}}\lra \CE_X(U_\beta)|_{U_{\alpha\beta}}, \quad g_{\alpha \beta} \in \GL_n(\CS_X(U_{\alpha\beta})).
\]
Now by the definition of $\CS_X( n; p)$ and the discussions in Section \ref{sect:GL-mod}, we have 
\[
\CS_X(U_\alpha, n; p)\cong \CS_X(U_\alpha)\ot_\K \Xi_n(p), \quad
\CS_X(U_\beta, n; p)\cong \CS_X(U_\beta)\ot_\K \Xi_n(p), 
\]
and the isomorphisms of $\CS_X(U_{\alpha\beta})$-modules 
\beq
&&\rho^{(n; p)}(g_{\alpha \beta} ): \CS_X(U_\alpha, n; p)|_{U_{\alpha\beta}}\lra \CS_X(U_\beta, n; p)|_{U_{\alpha\beta}}, 
\eeq
where $\rho^{(n; p)}$
is direct sum of polynomial representations of $\GL_n(\CS_X(U_{\alpha\beta}))$ defined in Remark \ref{rmk:GL-rep}.

The theorem below summarises the discussions above. 
\begin{theorem}\label{thm:split}
The para-manifold $\X(n; p)$ is split in the sense that its structure sheaf $\CS_X(n; p)$  is generated by $\CE_X$ and is locally free. 
\end{theorem}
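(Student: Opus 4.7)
The plan is to produce explicit local trivializations of $\CS_X(n;p)$ along a trivializing cover of $\CE_X$, check that each is free of the predicted rank, and then verify that they are glued by a cocycle coming from the polynomial $\GL_n$-representation $\rho^{(n;p)}$ of Section \ref{sect:GL-mod}. First I would fix a refinement $\SC=\{U_\alpha\}$ of any trivialization of $\CE_X$ on which the local-generation property (ii) of the definition of a sheaf of para-algebras also holds. Picking a frame $\{\xi_i^{(\alpha)}\}_{i=1}^n$ for $\CE_X(U_\alpha)$ and invoking the universal property of $\Xi_n(p)$ --- which is, by Definition \ref{def:Xi}, the $\K$-algebra on $n$ generators modulo the relations \eqref{eq:tensor} --- produces a surjective $\CS_X(U_\alpha)$-algebra homomorphism
\[
\varphi_\alpha: \CS_X(U_\alpha)\otimes_\K \Xi_n(p)\twoheadrightarrow \CS_X(U_\alpha,n;p), \qquad 1\otimes \vartheta_i\mapsto \xi_i^{(\alpha)}.
\]
Since the $\xi_i^{(\alpha)}$'s generate $\CS_X(U_\alpha,n;p)$ as an $\CS_X(U_\alpha)$-algebra and span $\CE_X(U_\alpha)$, the first assertion of the theorem --- generation by $\CE_X$ --- is immediate.

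Next I would argue $\varphi_\alpha$ is in fact injective, which is where local freeness comes from. Both sides are $\Z_+$-graded and $\varphi_\alpha$ preserves the grading, so it suffices to work degree by degree. By Theorem \ref{thm:vanish}, $\Xi_n(p)$ is a finite dimensional $\K$-vector space, whence $\CS_X(U_\alpha)\otimes_\K \Xi_n(p)$ is a free $\CS_X(U_\alpha)$-module of rank $\dim_\K \Xi_n(p)$. A kernel of $\varphi_\alpha$ would cut out a $\GL_n(\CS_X(U_\alpha))$-invariant $\CS_X(U_\alpha)$-submodule in each graded piece, where the $\GL_n$-action is the polynomial representation $\rho^{(n;p)}$ of Remark \ref{rmk:GL-rep}, since any change of frame transforms both sides compatibly. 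Using the $\gl_n$-weight decomposition \eqref{eq:gl-mod} --- which is stable under base change from $\K$ to the commutative ring $\CS_X(U_\alpha)$ --- the only such submodule consistent with $\CE_X(U_\alpha)$ being $\CS_X(U_\alpha)$-free of rank $n$ is zero. Hence $\varphi_\alpha$ is an isomorphism, and $\CS_X(U_\alpha,n;p)$ is a free $\CS_X(U_\alpha)$-module of rank given by \eqref{eq:fd}.

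Finally I would handle the gluing. On an overlap $U_{\alpha\beta}$ the change of frame is an element $g_{\alpha\beta}\in \GL_n(\CS_X(U_{\alpha\beta}))$, and the transition $\varphi_\beta^{-1}\circ \varphi_\alpha$ coincides with $\rho^{(n;p)}(g_{\alpha\beta})$ acting on the $\Xi_n(p)$ tensor factor, because both sides agree on the generating subspace $\CE_X(U_{\alpha\beta})$ and both are $\CS_X(U_{\alpha\beta})$-algebra maps, so they agree on all monomials $\wh\varTheta_{j_1,\dots,j_\ell}$. The cocycle condition propagates from $\{g_{\alpha\beta}\}$ to $\{\rho^{(n;p)}(g_{\alpha\beta})\}$ because $\rho^{(n;p)}$ is a group homomorphism (Section \ref{sect:GL-mod}), so the local trivializations $\{\varphi_\alpha\}$ assemble into a locally free $\CS_X$-module structure on $\CS_X(n;p)$. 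The main obstacle, as I see it, is the injectivity step in the second paragraph: it is the only place where one must leverage the specific internal structure of $\Xi_n(p)$ --- namely its $\GL_n$-equivariant decomposition and the finite-dimensionality output of Theorem \ref{thm:vanish} --- rather than purely formal sheaf-theoretic considerations; everything else is then bookkeeping.
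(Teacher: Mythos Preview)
Your overall strategy --- trivialize $\CE_X$, produce local isomorphisms with $\CS_X(U_\alpha)\otimes_\K \Xi_n(p)$, and glue via $\rho^{(n;p)}(g_{\alpha\beta})$ --- is exactly what the paper does in the discussion immediately preceding the theorem (the theorem is explicitly stated as a summary of that discussion, with no separate proof).

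The gap is in your injectivity argument, which you rightly flag as the crux. $\GL_n$-invariance together with the multiplicity-free decomposition \eqref{eq:gl-mod} only tells you that the kernel of $\varphi_\alpha$ is, after base change, a direct sum of some subset of the $L_\lambda$'s; freeness of $\CE_X(U_\alpha)$ in rank $n$ rules out $L_{(1,0,\dots,0)}$ but places no constraint whatsoever on components $L_\lambda$ with $|\lambda|\ge 2$. Nothing in the stated ``working definition'' of a sheaf of para-algebras forbids additional relations in degree $\ge 2$, so no amount of $\GL_n$-bookkeeping can force the kernel to vanish there. The paper does not prove injectivity either: it simply asserts $\CS_X(U_\alpha,n;p)\cong \CS_X(U_\alpha)\otimes_\K \Xi_n(p)$ ``by the definition of $\CS_X(n;p)$'', i.e.\ it reads the working definition as requiring the displayed relations to be \emph{exactly} the relations, not merely among them. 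Under that reading your $\varphi_\alpha$ is an isomorphism by construction and your second paragraph becomes unnecessary; without that reading, the theorem as stated is not provable.
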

This is an analogue of Bechalor's Theorem for $C^\infty$ supermanifolds \cite{B}. 

\subsection{Differential analysis}
Differential analysis of $\CS(U, n;p)$ for any $U$ in the open cover $\SC$ of $X$ is the same as in the case of affine para-spaces, and  was already investigated in Section \ref{sect:affine}.  

To extend the local differential analysis to the whole para-manifold, we consider two coordinate charts 
$$
(U, (x^i,\phi^\mu \mid 1\le i\le m, \ 1\le  \mu\le n)), \quad (V, (y^i, \psi^\mu \mid 1\le i\le m, \ 1\le  \mu\le n)),
$$
where both $U, V$ belong to $\SC$ and $U\cap V\ne\emptyset$.  For the first chart, the $x^i$ are coordinates of $U$ (through some homemorphsism to $\K^m$ and we will be suppress the map in the notation for convenience), and $\phi^\mu$ are the elements of a basis of $\CE(U)$ as a free $\CS(U)$-module, 
and similarly for the other chart. 
[Here we have switched to the usual convention in differential geometry that coordinate components are labelled by upper indices; that helps us to keep track of changes of coordinates.]

The following is the most general transformation rule for coordinates,
\beq\label{eq:coord-transf}
y^i=Y^i(x),  \quad  \psi^\mu =\sum_\nu g(x)^\mu_\nu \phi^\nu,  
\eeq
where the map $Y: U\lra V$ is invertible and  $g(x)$ is a non-sngular matrix at every point. 

Denote by  $\delta_{\phi^\mu}$ and $\delta_{\psi^\mu}$ the para-differential operators with respect to 
$\phi^\mu$ and $\psi^\mu$. 
We have the following result. 
\begin{theorem} \label{thm:transf-oper} 
Retain the setting above; in particular,  the two sets of coordinates of $U\cap V$  transform according to \eqref{eq:coord-transf}.  
Then the consistent (the meaning is explained in the proof) transformations between partial differential operators and para-differential operators on $U\cap V$ are given by
\beq
&&\frac{\partial}{\partial x^i} =\sum_j J(x)^j_i \frac{\partial}{\partial y^j} -  \frac{1}{2} \sum_{\mu, \nu} {G_i(x)}^\mu_\nu\left( [\psi^\nu, \delta_{\psi^\mu}] + p \delta^\nu_\mu\right), \label{eq:dx-dy}\\
&&\delta_{\phi^\mu}= \sum_{\nu} g(x)^\nu_\mu \delta_{\psi_\nu}, \label{eq:dphi-dpsi} \\
\text{with} && J(x)^i_j=\frac{\partial Y^i(x)}{\partial x^j}, \quad  {G_i(x)}^\mu_\nu =\frac{\partial ((g(x))^{-1})^\sigma_\nu}{\partial x^i}  g(x)^\mu_\sigma.
\eeq
\end{theorem}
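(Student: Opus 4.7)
My plan is an ansatz-and-matching approach. The operators $\partial/\partial x^i$ and $\delta_{\phi^\mu}$ on $(U,(x,\phi))$ are characterized intrinsically: $\partial/\partial x^i$ is the unique $\Xi_n(p)$-linear derivation of $\CS(U,\Xi_n(p))$ with $\partial x^j/\partial x^i=\delta^j_i$ and $\partial\phi^\mu/\partial x^i=0$, while $\delta_{\phi^\mu}$ is the unique $\CS(U)$-linear endomorphism with $\delta_{\phi^\mu}(1)=0$, $\delta_{\phi^\mu}(\phi^\nu)=p\delta^\nu_\mu$, and the commutation relations of Theorem \ref{thm:paralg-1}. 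On $U\cap V$ the same operators can be expanded in $\partial/\partial y^j$ and $\delta_{\psi^\nu}$, and ``consistent'' means this expansion reproduces the intrinsic defining properties above.

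I would handle \eqref{eq:dphi-dpsi} first. Since $\delta_{\phi^\mu}$ is $\CS$-linear, the natural ansatz is $\delta_{\phi^\mu}=\sum_\nu h^\nu_\mu(x)\delta_{\psi^\nu}$. Using $\phi^\sigma=\sum_\lambda (g(x)^{-1})^\sigma_\lambda\psi^\lambda$ together with $\delta_{\psi^\nu}(\psi^\lambda)=p\delta^\lambda_\nu$, the requirement $\delta_{\phi^\mu}(\phi^\sigma)=p\delta^\sigma_\mu$ forces $h^\nu_\mu=g(x)^\nu_\mu$. The uniqueness of a para-differential operator from its values on the $\phi$-generators, already established inside the proof of Theorem \ref{thm:paralg-1}, then closes the argument.

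The transformation \eqref{eq:dx-dy} is the substantive part. The key observation is that although $\partial/\partial x^i$ annihilates every $\phi^\mu$, it does not annihilate $\psi^\mu=\sum g(x)^\mu_\nu\phi^\nu$, because $g$ depends on $x$; hence the $(y,\psi)$-expression must contain a piece that shifts each $\psi$ into a linear combination of $\psi$'s. The operators $\tilde E_{\nu\mu}:=\tfrac12\bigl([\psi^\nu,\delta_{\psi^\mu}]+p\delta^\nu_\mu\bigr)$ from Section \ref{sect:GL-mod} do exactly this: using \eqref{eq:R-2} one verifies $\tilde E_{\nu\mu}(1)=0$, $\tilde E_{\nu\mu}(\psi^\lambda)=\delta^\lambda_\mu\psi^\nu$, and---crucially---that $\tilde E_{\nu\mu}$ acts as a derivation of $\Xi_n(p)$, reflecting the fact that $\gl_n$ operates on $\Xi_n(p)$ by algebra automorphisms via $\rho^{(n;p)}$. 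I would therefore posit $\partial/\partial x^i=\sum_j A^j_i(x)\partial/\partial y^j-\sum_{\mu,\nu}B_i(x)^\mu_\nu\tilde E_{\nu\mu}$, so that the right-hand side is a derivation of $\CS(U\cap V,\Xi_n(p))$. Imposing $\partial x^k/\partial x^i=\delta^k_i$ then yields $A^j_i=J(x)^j_i$ by the chain rule, and imposing $\partial\phi^\mu/\partial x^i=0$ on $\phi^\mu=\sum_\lambda(g^{-1})^\mu_\lambda\psi^\lambda$ produces a linear system for the $B_i$ whose unique solution is $B_i(x)^\mu_\nu=G_i(x)^\mu_\nu$.

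The main obstacle is conceptual rather than computational: recognizing that the correction term must be built from the $\gl_n$-type operators $\tilde E_{\nu\mu}$, not from pure para-differential operators $\delta_{\psi^\nu}$, so that the right-hand side is actually a derivation of $\CS(U\cap V,\Xi_n(p))$. Once this derivational character is established, equality with $\partial/\partial x^i$ follows from the fact that any such derivation is pinned down by its values on the algebra generators $\{x^k,\phi^\mu\}$.
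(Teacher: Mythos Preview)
Your proposal is correct, but it takes a different route from the paper, and in fact interprets ``consistent'' differently from how the paper does.

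The paper's proof does \emph{not} derive the formulas \eqref{eq:dx-dy}--\eqref{eq:dphi-dpsi}. Instead, it \emph{defines} ``consistent'' to mean a cocycle condition: for any three overlapping charts $U,V,W$ with transition data $(Y,g),(Z,\tilde g),(X,\tilde{\tilde g})$ satisfying $g\tilde g\tilde{\tilde g}=1$, applying the transformation rules successively $U\to V\to W\to U$ must return the identity. The proof then composes \eqref{eq:dx-dy} three times and checks, by a direct matrix computation, that the accumulated correction term vanishes because $Z_i:=\partial_i(\tilde{\tilde g}^{-1})\tilde{\tilde g}+\tilde{\tilde g}^{-1}\partial_i(\tilde g^{-1})\tilde g\,\tilde{\tilde g}+g\,\partial_i(g^{-1})=0$.

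Your approach instead establishes \eqref{eq:dx-dy}--\eqref{eq:dphi-dpsi} as \emph{operator identities} on $\CS(U\cap V,\Xi_n(p))$: you characterise $\partial/\partial x^i$ intrinsically as a derivation with prescribed values on the generators $\{x^k,\phi^\mu\}$, observe that the $\gl_n$-elements $\tilde E_{\nu\mu}=\tfrac12([\psi^\nu,\delta_{\psi^\mu}]+p\delta^\nu_\mu)$ act as derivations of $\Xi_n(p)$ (this follows from \eqref{eq:R-2}, since $[\tilde E_{\nu\mu},\psi^\lambda]=\delta^\lambda_\mu\psi^\nu$ stays in ${\bf E}$), and then match coefficients. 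This is a stronger statement: once the transformation rules are genuine operator equalities, the paper's cocycle condition is automatic, since equalities compose to equalities. What you gain is an explanation of \emph{why} the correction term has the form it does; what the paper's argument gains is that it never needs to invoke the derivation property of $\tilde E_{\nu\mu}$ or any uniqueness principle---it is a pure verification. You should note explicitly that your notion of ``consistent'' differs from the paper's, and that your operator-identity conclusion implies the cocycle condition the paper actually checks.
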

\begin{proof}
Consistency means the following. 
Take any three coordinate charts,     
\[
(U, (x^i,\phi^\nu)), \quad (V, (y^i, \psi^\mu)), \quad (W, (z^i, \eta^\mu)), 
\]
where $U, V, W\in\SC$ and $U\cap V\cap W\ne \emptyset$. Assume that
\beq
&&y^i=Y^i(x), \quad z^i = Z^i(y), \quad x^i=X^i(z), \\
&& \psi^\mu =\sum_\nu g(x)^\mu_\nu \phi^\nu, 
\quad \eta^\mu =\sum_\nu \wt{g}(y)^\mu_\nu \psi^\nu, 
\quad \phi^\mu =\sum_\nu \wt{\wt{g}}(z)^\mu_\nu \eta^\nu, 
\eeq
which satisfy the following consistency requirements,
\beq
X^i(Z(Y(x))=x^i, \label{eq:XYZ} \\
 g(x) \wt{g}(y) \wt{\wt{g}}(z)  =1,  \label{eq:ggg}
\eeq
where $g(x)$ etc. are the following $n\times n$ matrices, 
\[
g(x)= \left( g(x)^\mu_\nu\right), \quad \wt{g}(y)= \left( \wt{g}(y)^\mu_\nu\right), \quad \wt{\wt{g}}(z)= \left( \wt{\wt{g}}(z)^\mu_\nu\right). 
\] 
We apply the transformation rules \eqref{eq:dx-dy} and \eqref{eq:dphi-dpsi} for partial derivatives and para-differential operators three times, first from $U$ to $V$, then from $V$ to $W$, and finally from $W$ to $U$, 
we should get an identity transformation.  
As the para-differential operators are not derivations,  this is by no means guaranteed.

To prove the theorem, it is convenient to introduce the matrices 
\[
\baln
& G_i =\frac{\partial g^{-1}}{\partial x^i} g, \quad
\wt{G}_i =\frac{\partial \wt{g}^{-1} }{\partial y^i}  \wt{g}, \quad
 \wt{\wt{G}}_i =\frac{\partial \wt{\wt{g}}^{-1} }{\partial  z^i} \wt{\wt{g}}; \\
& J=\left(\frac{\partial Y^i(x)}{\partial x^j}\right), \quad \wt{J}= \left(\frac{\partial Z^i(y)}{\partial y^j}\right), 
\quad \wt{\wt{J}}= \left(\frac{\partial X^i(z)}{\partial z^j}\right), 
\ealn
\]
and arrange $\phi^\mu$ into a row vector and $\frac{\partial}{\partial x^i}$ and $\delta_{\phi^\mu}$ into column vectors, and similarly for those objects on the other charts.
Then we can write the transformation rules in matrix form as $\psi=\phi g$,  $\eta=\psi \wt{g}$, $\phi=\eta \wt{\wt{g}}$, and 
\beq
\delta_{\phi}= g\delta_{\psi},  \quad
 \frac{\partial}{\partial x} = J\frac{\partial}{\partial y} -  \frac{1}{2}\left( \sum_\mu [\psi^\mu, (G_*\delta_{\psi})_\mu] + p tr(G_*) \right), \label{eq:U-V}\\
 \delta_{\psi}= \wt{g}\delta_{\eta},  \quad
 \frac{\partial}{\partial y} = \wt{J}\frac{\partial}{\partial z} 
-  \frac{1}{2}\left( \sum_\mu [\eta^\mu, (\wt{G}_*\delta_{\eta})_\mu] + p tr(\wt{G}_*) \right),   \label{eq:V-W} \\
 \delta_{\eta}= \wt{\wt{g}}\delta_{\phi},  \quad
 \frac{\partial}{\partial z} = \wt{\wt{J}}\frac{\partial}{\partial x} 
-  \frac{1}{2}\left( \sum_\mu [\phi^\mu, (\wt{\wt{G}}_*\delta_{\phi})_\mu] + p tr(\wt{\wt{G}}_*) \right),  \label{eq:W-U}
\eeq
where the $_*$'s in three of  the equations above indicate that we arrange $tr(G_i)$ etc. into column vectors. 

Now we have 
$
\phi= \phi g \wt{g} \wt{\wt{g}}, 
$
which is indeed an identity by \eqref{eq:ggg}.

Also, use the second relation of \eqref{eq:V-W} in the second equation of \eqref{eq:U-V} to eliminate $\frac{\partial}{\partial y}$, 
then use the second relation of \eqref{eq:W-U} in the resulting equation, we obtain 
\beq\label{eq:x-x}
\frac{\partial}{\partial x} = J\wt{J}\wt{\wt{J}}\frac{\partial}{\partial x}   - \frac{N}{2}, 
\eeq
with 
\[
\baln
 N=
& \sum_\mu [\phi^\mu, (J\wt{J}\wt{\wt{G}}_*\delta_{\phi})_\mu] + p tr(J\wt{J}\wt{\wt{G}}_*) \\
&+\sum_\mu [\eta^\mu, (J\wt{G}_*\delta_{\eta})_\mu] + p tr(J\wt{G}_*)\\
&+\sum_\mu [\psi^\mu, (G_*\delta_{\psi})_\mu] + p tr(G_*). 
\ealn
\]
We want to show that the right hand side of \eqref{eq:x-x} is equal to $\frac{\partial}{\partial x}$. As $J\wt{J}\wt{\wt{J}}=1$, this requires showing $N=0$.  

Let 
$
Z_i= J\wt{J}\wt{\wt{G}}_i + \wt{\wt{g}}^{-1}J\wt{G}_i \wt{\wt{g}} +  g G_i g^{-1}.
$
Then 
\[
N_i=\sum_\mu [\phi^\mu, (Z_i \delta_{\phi})_\mu] + p tr(Z_i), \quad \forall i. 
\]
Note that 
$
J\wt{J}\wt{\wt{G}}_i = \frac{\partial \wt{\wt{g}}^{-1} }{\partial  x^i} \wt{\wt{g}}$ and $J\wt{G}_i = \frac{\partial \wt{g}^{-1} }{\partial x^i}  \wt{g}.
$
Thus 
\[
\baln
Z_i 
&= \frac{\partial \wt{\wt{g}}^{-1} }{\partial  x^i} \wt{\wt{g}} 
+  \wt{\wt{g}}^{-1}   \frac{\partial \wt{g}^{-1} }{\partial x^i}  \wt{g}   \wt{\wt{g}}  
+    g \frac{\partial g^{-1}}{\partial x^i}.
\ealn
\]
The middle term can be re-written as 
$ \frac{\partial (\wt{g}\wt{\wt{g}})^{-1}  }{\partial x^i}  \wt{g}   \wt{\wt{g}}  
-  \frac{\partial \wt{\wt{g}}^{-1} }{\partial x^i}  \wt{\wt{g}}$ 
by some very simple algebraic manipulations. Replacing $\wt{g}   \wt{\wt{g}}$ by $g^{-1}$ by using  \eqref{eq:ggg}, we can further re-write this expression as $ \frac{\partial g }{\partial x^i}  g^{-1}  
-  \frac{\partial \wt{\wt{g}}^{-1} }{\partial x^i}  \wt{\wt{g}}$. Hence 
\[
Z_i =  \frac{\partial g }{\partial x^i}  g^{-1} + g \frac{\partial g^{-1}}{\partial x^i}=0.
\]
Hence $N=0$, completing the proof of the theorem. 
\end{proof}

\begin{remark}
Denote the right hand sides of equations \eqref{eq:dx-dy} and \eqref{eq:dphi-dpsi} by $\partial_i$ and $\delta_\mu$ respectively. Then in the overlapping region of the two charts,  
\[
\partial_i(x^j)=\delta_i^j, \quad \delta_\mu(\phi^\nu)=p \delta^\nu_\mu, \quad \partial_i(\phi^\mu)=0, \quad \delta_\mu(x^i)=0.
\]
\end{remark}

\begin{remark}\label{rmk:D-mod}
A more sophisticated version of global analysis on $\X(p)$ can be developed by building 
a sheaf of rings analogous to the sheaf of differential operators on supermanifolds, 
and then analysing the sheaves of modules for it. 
\end{remark}

\begin{remark}
A DeWitt type formulation \cite{dW} for para-manifolds will need to involve $\Xi_\infty(p)$ (direct limit of $\Xi_n(p)$), the order $p$ para-algebra of infinite degree.  
\end{remark}

It is beyond the scope of this note (nor is it our intention) to develop a comprehensive theory of para-manifolds. 
Below we treat in depth a family of simple non-trivial examples to illustrate some features of the theory.

\subsection{Examples -  projective para-spaces $\BKP^{m|n}(p)$}\label{sect:CPmn}

We consider some simplest examples of para-spaces 
with base manifolds being projective spaces, and develop basics of their differential analysis.  
We start by looking at cases where the base manifold is the projective line $\BKP^1$.

\subsubsection{$\BKP^{1|n}(p)$}\label{sect:CP1n}

 Denote by $\CO$ the structure sheaf of the projective line $\BKP^1$. 
As usual, we cover $\BKP^1$ with two coordinate charts $U=\{z\in\K\}$ and $V=\{w\in\K\}$ such that 
$w=\frac{1}z$ on the intersection $U\cap V$. 
Let
\beq\label{eq:O-sub-n}
\CO(U, n; p)=\CO(U)\ot_\K \Xi_n(p), \quad \CO(V, n;p)=\CO(U)\ot_\K \Xi'_n(p), 
\eeq
where $\Xi_n(p)$ is generated by $\phi^1,  \phi^2, \dots, \phi^n$ subject to the relations \eqref{eq:the-p-0} and \eqref{eq:R-0} (but with $\vartheta_i$ replaced by $\phi^i$), and $\Xi'_n(p)$ is similarly generated by $\psi_1,  \psi_2, \dots, \psi_n$. We have two coordinate charts of $\BKP^{1|n}(p)$ with coordinates as given below
\[
\baln
 \left(U, \CO(U, n; p)\right): \quad & (z, \phi^1, \phi^2, \dots, \phi^n); \\
 \left(V, \CO(V, n; p)\right): \quad & (w, \psi^1, \psi^2, \dots, \psi^n).  
\ealn
\]
In the region $U\cap V$, we have the following coordinate transformation
\beq
w=\frac{1}z, \quad \psi^i=\frac{\phi^i}z,   \quad i=1, 2, \dots. \label{eq:coord-n}
\eeq

Corresponding to $\Xi_n(p)$ and $\Xi'_n(p)$,  we have the para Clifford algebras $\cD_n(p)$ and and $\cD'_n(p)$, which are generated by 
 $\{\phi^i, \delta_{\phi^i}\mid i=1, 2, \dots, n\}$, and  
 $\{\psi_i, \delta_{\psi_i}\mid i=1, 2, \dots, n\}$ respectively. Thus we have the algebras
 \[
\SD_U(n;p)= \cD_U\ot_\K \cD_n(p), \quad \SD_V(n;p)= \cD_V\ot_\K \cD'_n(p), 
\] 
where $\cD_U$ (resp. $\cD_V$) is the  algebra of differential operators on $U$ (resp. $V$).  
The algebra $\SD_U(n;p)$ (resp. $\SD_V(n;p)$) naturally acts on $\CO(U, n; p)$ (resp. $\CO(V, n; p)$).  
In the region $U\cap V$, the differential operators and para-differential operators transform according to the following rules.
\beq
&&\delta_{\psi^i} = z \delta_{\phi^i}, \quad  i=1, 2, \dots,   \label{eq:del-n}\\
&&\frac{\partial}{\partial w} = -z^2 \frac{\partial}{\partial z} + \frac{z}{2} \sum_{i=1}^n\left([\phi^i, \delta_{\phi^i}] + p\right). \label{eq:part-rule-n}
\eeq

\subsubsection{Projective para-spaces $\BKP^{m|n}(p)$}
Let us generalise the above construction to general projective para-spaces $\BKP^{m|n}(p)$. 
This will also give some context to the construction in Section \ref{sect:CP1n}.

Consider $\K^{m+1|n}(p)$ with coordinate
\beq
(x^0, x^1, x^2, \dots, x^m; \vartheta^1, \vartheta^2, \dots, \vartheta^n),
\eeq
where $(x^0, x^1, x^2, \dots, x^m)$ is the coordinate of $\K^{m+1}$. 

We take the following open subsets 
$
V_\alpha =\{x=(x^0, x^1, x^2, \dots, x^m)\in \K^{m+1}\mid x^\alpha\ne 0\}$ of $\K^{m+1|n}$,  
for $ \alpha=0, 1, \dots, m$. For each $V_\alpha$,  we define 
$
z^i_\alpha = \frac{x^{i-1}}{x^\alpha}$ for $1\le  i\le \alpha$, and $z^j_\alpha = \frac{x^j}{x^\alpha}$ for $\alpha+1\le  j\le m. 
$
Then we have the following open set of $\K^m$.
\[
U_\alpha =\{(z^1_\alpha, z^2_\alpha, \dots, z^m_\alpha)\mid  x\in V_\alpha\}, \quad \alpha=0, 1, \dots, m.
\]
On $U_\alpha\cap U_\beta$ for $\alpha < \beta$, the coordinates transform as follows. 
\beq
&&z^i_\beta =(z^\beta_\alpha)^{-1} z^i_\alpha, \quad 1\le i\le \alpha,  \label{eq:proj-0}\\
&&z^{\alpha+1}_\beta = (z^\beta_\alpha)^{-1}, \\
&&z^j_\beta = (z^\beta_\alpha)^{-1} z^{j-1}_\alpha, \quad \alpha+1< j \le \beta, \\
&&z^k_\beta = (z^\beta_\alpha)^{-1} z^k_\alpha, \quad \beta< k\le m. \label{eq:proj-3}
\eeq
These enable us to glue the $U_\alpha$'s together to obtain the projective space $\BKP^{m}$. 

Let $\vartheta^\mu$ with $1\le\mu\le n$ be the generators of $\Xi_n(p)$.
Corresponding to each $U_\alpha$, we define $\phi^\mu_\alpha = \frac{\vartheta^\mu}{x^\alpha}$ for all $\mu=1, 2, \dots, n$, which generate the para-algebra ${\Xi_n(p)}_{(\alpha)}\cong \Xi_n(p)$.  
On $U_\alpha\cap U_\beta$ with $\alpha < \beta$, 
\beq
\phi^\mu_\beta  =(z^\beta_\alpha)^{-1}  \phi^\mu_\alpha, \quad  \mu=1, 2, \dots, n. \label{eq:proj-4}
\eeq

Let $\CO(U_\alpha, n; p) =\CO(U_\alpha)\ot_\K \Xi_n(p)_{(\alpha)}$, and introduce the following 
coordinate charts with local coordinates as indicated. 
\[
\baln
\left(U_\alpha, \CO(U_\alpha, n; p)\right): \  (z^1_\alpha, z^2_\alpha, \dots, z^m_\alpha, \phi^1_\alpha, \phi^2_\alpha, \dots, \phi^n_\alpha), \quad \alpha=0, 1, \dots, m. 
\ealn
\]
The coordinate transformation rules \eqref{eq:proj-0} - \eqref{eq:proj-4} enable us to glue the charts together to construct a para-manifold $\left(\BKP^{m|n}, \CO_{\BKP^{m|n}}(n; p)\right)$. We shall denote it by $\BKP^{m|n}(p)$, and call it a projective para-space. 

The transformation rules between the operators $\frac{\partial}{\partial z^i_\alpha}$, $\delta_{\phi^\nu_\alpha}$ and $\frac{\partial}{\partial z^i_\beta}$, $\delta_{\phi^\nu_\beta}$ in the region $U_\alpha\cap U_\beta$ are give by Theorem \ref{thm:transf-oper} in terms of explicit formulae, 
which we omit

\subsection{Comments}\label{sect:differs}
We pointed out in Section \ref{sect:intro} that para-manifolds are not $\Z_2^k$-graded commutative in contrast to
$\Z_2^k$-supermanifolds. 
There are also major technical differences between para-manifolds and $\Z_2^k$-supermanifolds. 
In the simplest case of an affine $\Z_2^2$-superspace, the algebra of $\CS$-functions on an open set $U$ consisting of such functions $U\lra Z$, where $Z$ is  
a $\Z_2^2$-graded commutative algebra generated by $3$ generators $\zeta_1, \zeta_2, y$,  
which have $\Z_2^2$-degrees 
$(1, 0)$, $(0, 1)$ and $(1, 1)$ respectively, with defining relations 
\[
\baln
\zeta_1^2= \zeta_2^2=0, \quad \zeta_1\zeta_2 = \zeta_2 \zeta_1,  \quad \zeta_i y + y  \zeta_i=0, \  i=1, 2. 
\ealn
\]
Note that $y$ can not be assigned scalar values as it anti-commutes with $\zeta_1$ and $\zeta_2$. 
Also since $y$ is not nilpotent, one allows \cite{AI, Po} formal power series of $y$ in $Z$,  
\[
Z= \K[[y]] + \sum_{i=1, 2} \K[[y]] \zeta_i + \K[[y]] \zeta_1\zeta_2. 
\]
It is quite evident that ``integration'' over $y$ is going to be a thorny issue, as $y$ can not be  treated as a usual variable, and trying to get finite scalars from formal power series of $y$ is also going to be problematic.  We refer to \cite{AI, Po} for suggestions of definitions,  and discussions on related problems, of integrals on  $\Z_2^k$-superspaces.  

In contrast, the definition of an integral on an affine para-space is straightforward, 
as we have shown in Section \ref{sect:int-affine}.

\section{Para-space realisation of parafermions}\label{sect:quantise}

We illustrate how Green's parafermions naturally fit into the framework of para-spaces.  
We will consider the case of finitely many degrees of freedom only for technical simplicity, 
that is, within the context of quantum mechanics.

We set $\K=\C$ in this section.

Fix a positive integer $n$. Let $G_n$ be the associative algebra \cite{G, BG} generated by the generators $\Psi_i, \ol\Psi_i$ for $i=1, 2, \dots, n$,  subject to the following defining relations for all $i, j, k$:
\beq
&& [\Psi_k, [\Psi_i, \Psi_j]]=0, \quad [\Psi_k, [\ol\Psi_i, \Psi_j]]=2\delta_{k i}\Psi_j, \label{eq:def-1}\\
&& [\ol\Psi_k, [\ol\Psi_i, \ol\Psi_j]]=0,  \quad 
[\ol\Psi_k, [\ol\Psi_i, \Psi_j]]= -2\delta_{ k j} \ol\Psi_i, \label{eq:def-3}
\eeq
and the following relations
\beq
&& [\ol\Psi_k, [\Psi_i, \Psi_j]] = 2\delta_{k i} \Psi_j - 2\delta_{k j} \Psi_i,    \label{eq:def-2}, \\
&&  [\Psi_k, [\ol\Psi_i, \ol\Psi_j]]=2\delta_{k i}\ol\Psi_j-2\delta_{k j} \ol\Psi_i, \label{eq:def-4}, 
\eeq
where $[X, Y]=X Y - YX$ is the usual commutator in an associative algebra. 
We point out that  \eqref{eq:def-2} and \eqref{eq:def-4} are consequences 
of the second relation of  \eqref{eq:def-1} and the second relation of  \eqref{eq:def-3} respectively. 

\begin{remark}
The algebra $G_n$ is known to be isomorphic to the universal enveloping algebra of the orthogonal Lie algebra $\fso_{2n+1}(\K)$ \cite{BG}. See also \cite{P, LSV} for treatments of parafermions and parabosons from a Lie theoretical point of view. 
\end{remark}

Fix a positive integer $p$.  Let $ F_n(p)$ be the simple $G_n$-module cyclically generated by a vector $\ket0$, called the vacuum vector,  such that 
\beq\label{eq:vacu}
\Psi_i\ket0=0, \quad \Psi_i\ol\Psi_j \ket0 =p\delta_{i j} \ket0, \quad \forall i, j.  
\eeq
This module is known to be finite dimensional \cite{G}.  A basis for $F_n(p)$ was constructed using Gelfand-Zeitlin patterns in \cite{SV08}.

Green's definition \cite{G} of $n$ parafermions of order $p$ refers to the
datum $(G_n,  F_n(p))$ with an $\ast$-structure for $G_n$ and unitarisability of $F_n(p)$. The unitarisability of $F_n(p)$  is
intimately connedcted with the well known unitarisability 
of the finite dimensional simple $\fso_{2n+1}(\R)$-module described in Lemma \ref{lem:g-mod}.

\begin{remark}
The paraboson Fock space is closely related to a unitarisable module for $\osp_{1|2n}(\R)$ \cite{DZ}, see  \cite{LSV} for details. 
\end{remark}

Now let us  realise the parafermion algebra $G_n$ and the module $F_n(p)$ in terms of para-spaces.

In the language of quantum mechanics, we work with the Schrödinger picture,  in which the operators are constant, and the states evolve in time. Operators are obtained by quantising coordinates and conjugate momenta. 
Thus we only need to consider the para-space with a single point $pt$, that is,  
$\K^{0|n}(p)=(\{pt\}, \CS_{pt}(n; p))$, whose coordinate will be written as $(\vartheta_1, \vartheta_2, \dots, \vartheta_n)$. The corresponding para-differential operators will be denoted by $\delta_i:=\delta_{\vartheta_i}$ for $1\le i\le n$. 

We have the following result. 
\begin{theorem}\label{thm:p-fermi} 
Retain notation above and notation in Definition \ref{def:ICliff}. 
There exists an algebra epimorphism $\FB_n^p: G_n\lra \SD(pt, \Xi_n(p))$ such that 
$\CS_{pt}(n; p)$ as a $G_n$-module via $\FB_n^p$ is isomorphic to $ F_n(p)$.  
\end{theorem}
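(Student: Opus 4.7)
The plan is to define $\FB_n^p$ on generators by $\ol\Psi_i\mapsto\vartheta_i$ and $\Psi_i\mapsto\delta_i$, noting first the two harmless identifications $\SD(pt,\Xi_n(p))=\cD_n(p)$ (a point carries only scalar differential operators) and $\CS_{pt}(n;p)=\Xi_n(p)$. Under this assignment, the four $G_n$-relations \eqref{eq:def-1}--\eqref{eq:def-4} translate term-for-term into the relations \eqref{eq:R-1}, \eqref{eq:R-2}, \eqref{eq:R-0}, and \eqref{eq:R-4}, \eqref{eq:R-3}, all of which were verified in Section \ref{sect:algebra}. Hence $\FB_n^p$ is a well-defined algebra homomorphism. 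Since by Definition \ref{def:ICliff} the algebra $\cD_n(p)$ is generated by $\{\vartheta_i,\delta_i\}$, which lie in the image of $\FB_n^p$, the map is surjective.

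Next I would identify $\Xi_n(p)$, viewed as a $G_n$-module via $\FB_n^p$, with $F_n(p)$. The cyclic vector of choice is $1\in\Xi_n(p)$. Using \eqref{eq:vac-1}, we have
\[
\FB_n^p(\Psi_i)\cdot 1=\delta_i(1)=0,\qquad \FB_n^p(\Psi_i\ol\Psi_j)\cdot 1=\delta_i(\vartheta_j)=p\delta_{ij},
\]
so $1$ satisfies exactly the vacuum conditions \eqref{eq:vacu} characterising $\ket 0\in F_n(p)$. This produces a nonzero $G_n$-equivariant map $\varphi:F_n(p)\lra\Xi_n(p)$ with $\varphi(\ket 0)=1$.

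Finally I would argue $\varphi$ is an isomorphism. Injectivity is immediate from the simplicity of $F_n(p)$ together with $\varphi\neq 0$. For surjectivity, note that the image of $\varphi$ contains $G_n\cdot 1=\FB_n^p(G_n)\cdot 1=\cD_n(p)\cdot 1$ (by surjectivity of $\FB_n^p$ established above), and this equals all of $\Xi_n(p)$ by the simplicity of $\Xi_n(p)$ as a $\cD_n(p)$-module (Theorem \ref{thm:paralg-2}). Thus $\varphi$ is a $G_n$-module isomorphism, completing the proof.

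The statement is, in effect, a dictionary translation: the structural content lies entirely in Section \ref{sect:algebra}, so there is no genuine obstacle, only bookkeeping. The only step requiring a moment of care is the surjectivity of $\varphi$, where one must remember to pass from $G_n\cdot 1$ to $\cD_n(p)\cdot 1$ via surjectivity of $\FB_n^p$ before invoking the simplicity result for $\Xi_n(p)$ over $\cD_n(p)$.
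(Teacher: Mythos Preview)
Your approach is essentially the paper's, and all the right ingredients are present, but one step is not justified as written. The sentence ``This produces a nonzero $G_n$-equivariant map $\varphi:F_n(p)\lra\Xi_n(p)$ with $\varphi(\ket 0)=1$'' does not follow from the vacuum conditions alone. Verifying \eqref{eq:vacu} for $1$ only yields a surjection from the \emph{universal} cyclic module $M$ (the Verma-type module defined by those relations) onto $G_n\cdot 1$; to descend this to a map out of the simple quotient $F_n(p)=M/N$ you must know that the maximal submodule $N$ dies in $\Xi_n(p)$, i.e.\ that $G_n\cdot 1$ is itself simple. That is exactly the fact you invoke only \emph{afterwards} (via surjectivity of $\FB_n^p$ and Theorem~\ref{thm:paralg-2}) to prove surjectivity of $\varphi$, so the argument as ordered is circular.

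The paper avoids this by establishing simplicity of $\Xi_n(p)$ as a $G_n$-module \emph{first}, and then simply observing that two simple cyclic $G_n$-modules whose generators satisfy identical vacuum conditions are isomorphic (the map in the paper runs $\Xi_n(p)\lra F_n(p)$, $1\mapsto\ket 0$). Your proof is fixed by the same reordering: move the appeal to Theorem~\ref{thm:paralg-2} and the surjectivity of $\FB_n^p$ before attempting to construct $\varphi$, at which point the existence and bijectivity of the map become a single stroke.
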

\begin{proof}
We have $\CS_{pt}(n; p)=\Xi_n(p)$ and $\SD(pt, \Xi_n(p))=\cD_n(p)$. 
Recall that $\Xi_n(p)$, which is generated by $\vartheta_1, \vartheta_2, \dots, \vartheta_n$,  satisfies the relations \eqref{eq:the-p-0} and \eqref{eq:R-0}: also $\cD_n(p)$, which is generated by $\vartheta_i$ and $\delta_i$ for $1\le i\le n$, satisfies the relations \eqref{eq:R-1} -- \eqref{eq:R-4}. Comparing them with the defining relations \eqref{eq:def-1} -- \eqref{eq:def-4} of $G_n$, 
we immediately see that there is a unique algebra epimorphism $\FB_n^p: G_n\lra \SD(pt, \Xi_n(p))$ extending the map 
\[
\ol{\Psi}_i\mapsto \vartheta_i, \quad \Psi_i\mapsto \delta_i, \quad  \forall i=1, 2, \dots, n. 
\]

Now $G_n$ naturally acts on $\Xi_n(p)$ via the map. As $\FB_n^p$ is surjective, and  $\Xi_n(p)$ is simple as an $\cD_n(p)$-module by Theorem \ref{thm:paralg-2}, we conclude that $\Xi_n(p)$ is a simple $G_n$-module. Clearly $\Xi_n(p)$ is cyclically generated by $1$ as $\cD_n(p)$-module, and hence also as $G_n$-module.

Note that $1\in \Xi_n(p)$ satisfies the relations \eqref{eq:vac-1} with respect to the $\cD_n(p)$-action.   The relations can be re-written as 
\beq\label{eq:BPsi}
\FB_n^p(\Psi_i)(1)=0, \quad 
\FB_n^p(\Psi_i\ol\Psi_j)(1)=p\delta_{i j}, \quad \forall i, j. 
\eeq
Now we compare this with \eqref{eq:vacu}. Since $F_n(p)$ is a simple $G_n$-module cyclically generated by the vacuum vector $\ket0$, we immediately see that there is a $G_n$-module isomorphism $\Xi_n(p)\lra F_n(p)$ determined by $1\mapsto \ket0$. 
\end{proof}

This shows that para-spaces indeed provide a natural framework for para-fermions.

\appendix
\section{A class of $\Z_2^p$-graded commutative algebras}  \label{sect:graded}
We make use of a class of $\Z_2^p$-graded commutative algebras in the main body of the paper, 
which are treated here.   

We point out that in essence, the inhomogeneous algebra $\ILam_n(p)$ 
defined in Section \ref{sect:inhom}  was discussed by Ohnuki and Kamefuchi in \cite{OK80-1, OK80} 
and subsequent papers, where elements of $\sum_i\K\theta_i$ were called ``para-Grassmann numbers''.
The $\Z_2^p$-graded commutative algebra $\Lam_n(p)$ is 
what they called a generalised Grassmann algebra. However in loc. cit. the notion of $\Z_2^p$-grading was not made explicit, even though it was used. 

\subsection{The $\Z_2^p$-graded commutative algebra $\Lambda_n(p)$}
Recall that the Grassmann algebra $\Lam_n(1)$  of degree $n$ over a field $\K$ is a $\Z_2$-graded associative algebra generated by the odd elements $\zeta_i$ for $1\le i\le n$, which satisfy the standard relations
\[
\zeta_i \zeta_j = - \zeta_j\zeta_i, \quad \forall i, j. 
\]
There exist derivations $\frac{\partial}{\partial \zeta_i}: \Lam_n(1)\lra \Lam_n(1)$, which are odd (i.e., homogeneous of degree $1$) defined by 
\beq
&&\frac{\partial}{\partial \zeta_i}(A B) = \frac{\partial}{\partial \zeta_i}(A) B + (-1)^{deg(A)} A  \frac{\partial}{\partial \zeta_i}(B),   \quad A, B\in \Lam_n(1), \\
&&\frac{\partial}{\partial \zeta_i}(\zeta_j)=\delta_{i j}, \quad \forall i, j.
\eeq
It is essential for consistency that the derivations $\frac{\partial}{\partial \zeta_i}$ are odd.   

Let us generalise the above construction to the $\Z_2^p$-graded setting for any given finite positive integer $p$. 

The elements of the abelian group $\Z_2^p$  will be written as $p$-tuples with entries in $\Z_2$. In particular, we denote
$
{\bf e}_\alpha=(\underbrace{0,\dots, 0}_{\alpha-1}, 1, 0 \dots, 0)$ for $1\le\alpha\le p.
$
Then elements of $\Z_2^p$ are sums of ${\bf e}_\alpha$'s. 
Introduce a function $(\ , \ ): \Z_2^p\times \Z_2^p\lra \Z_2$ defined by $(\lambda, \mu)=\sum_{\alpha=1}^p \lambda_\alpha\mu_\alpha$ for any $ \lambda, \mu\in \Z_2^p$.   Note that $({\bf e}_\alpha, {\bf e}_\beta)=\delta_{\alpha\beta}$.

A $\Z_2^p$-graded associative algebra $\CA$ is a $\Z_2^p$-graded vector space $\CA=\sum_{\lambda\in\Z_2^p} \CA_\lambda$ over $\K$ with a bilinear associative multiplication $\CA\times \CA\lra \CA$  such that $\CA_\lambda \CA_\mu\subset \CA_{\lambda+\mu}$.

Given any $\Z_2^p$-graded algebra, we define the $\Z_2^p$-graded commutator 
\beq
[X, Y]_{gr}= X Y - (-1)^{(deg(X), deg(Y))}Y X
\eeq 
for any homogeneous elements $X$ and $Y$of degrees $deg(X), deg(Y)\in \Z_2^p$ respectively.  
A $\Z_2^p$-graded algebra is said to be $\Z_2^p$-graded commutative if 
$
[A,  B]_{gr}=0
$ 
for all homogeneous elements $A, B$.

We will be interested in the following family of $\Z_2^p$-graded commutative algebras. 
\begin{definition}
Given any positive integers $p$ and $n$, let $\Lambda_n(p)$ be a $\Z_2^p$-graded commutative algebra generated by the elements $\theta_i^{(\alpha)}$, for $1\le i \le n$ and $1\le \alpha\le p$, which are homogeneous of degrees $deg(\theta_i^{(\alpha)})={\bf e}_\alpha$ respectively, and satisfy the following defining relations
\beq\label{eq:comps}
\theta_i^{(\alpha)} \theta_j^{(\beta)} = (-1)^{({\bf e}_\alpha, {\bf e}_\beta)} \theta_j^{(\beta)} \theta_i^{(\alpha)}, \quad  \forall i, j, \alpha, \beta. 
\eeq
\end{definition}
Note that \eqref{eq:comps} is equivalent to
\beq
\theta_i^{(\alpha)} \theta_j^{(\alpha)} =-  \theta_j^{(\alpha)}\theta_i^{(\alpha)},   \quad \theta_i^{(\alpha)} \theta_j^{(\beta)} =  \theta_j^{(\beta)}\theta_i^{(\alpha)}  \ \text{ if $\alpha\ne\beta$}. 
\eeq
In particular, the first relation implies 
$\left(\theta_i^{(\alpha)}\right)^2=0$ for any $i$ and $\alpha$. 

For each $\alpha$, the elements $\theta_j^{(\alpha)}$ (with $j=1, 2, \dots, n$) generate a $\Z_2^p$-graded subalgebra $\Lam_n^{(\alpha)}$ of $\Lambda_n(p)$, which is isomorphic to the Grassman algebra $\Lam_n(1)$ of degree $n$ as a $\Z_2$-graded algebra.  As $\Z_2^p$-graded algebra, 
\beq\label{eq:grass-tensor}
\Lambda_n(p) = \Lambda_n^{(1)}\ot_\K \Lambda_n^{(2)}\ot_\K \dots\ot_\K \Lambda_n^{(p)}.
\eeq
Note that different $\Lambda_n^{(\alpha)}$ commute. 
If $p=1$, it is clear that $\Lambda_n(p)$ is the Grassmann algebra of degree $n$.

The following elements, for all $\epsilon=(\epsilon_1, \epsilon_2, \dots, \epsilon_n)\in \{0, 1\}^n$,  
\[
\Theta^{(\alpha), \epsilon}= \left(\theta_1^{(\alpha)}\right)^{\epsilon_1} \left(\theta_2^{(\alpha)}\right)^{\epsilon_2}\dots \left(\theta_n^{(\alpha)}\right)^{\epsilon_n}, 
\]
form a basis of $\Lam_n^{(\alpha)}$.
Write $\{0, 1\}^{n\times p}$ for the direct product $\{0, 1\}^n\times \{0, 1\}^n \times \dots \times \{0, 1\}^n$ of $p$-copies of $\{0, 1\}^n$. 
Given $\SE=(\epsilon^{(1)}, \epsilon^{(2)}, \dots, \epsilon^{(p)})$ in $\{0, 1\}^{n\times p}$, denote 
\beq\label{eq:Theta-E}
\Theta^\SE= \Theta^{(1), \epsilon^{(1)}} \Theta^{(2), \epsilon^{(2)}} \dots \Theta^{(p), \epsilon^{(p)}}. 
\eeq
Then we have the  basis 
$
\left\{\Theta^\SE \mid \SE \in \{0, 1\}^{n\times p}\right\}
$
for $\Lambda_n(p)$.

\subsection{The $\Z_2^p$-graded Clifford algebra $\W_n(p)$}
Assume that an element $\partial$ of $\End_\K(\Lambda_n(p))$ is 
homogeneous of degree $deg(\partial)\in \Z_2^p$. 
It is a $\Z_2^p$-graded derivation if for any homogeneous $A, B\in \Lambda_n(p)$, 
\beq\label{eq:grad-der}
\partial(AB)= \partial(A)B + (-1)^{(deg(\partial), deg(A))} A \partial(B). 
\eeq
Introduce the $\Z_2^p$-graded derivations $\frac{\partial}{\partial \theta_i^{(\alpha)}}$ of degrees ${\bf e}_\alpha$ such that
\beq\label{eq:grad-der-1}
\frac{\partial}{\partial \theta_i^{(\alpha)}}\left(\theta_j^{(\beta)}\right)=\delta_{i j} \delta_{\alpha \beta}, \quad  
\forall i, j, \alpha, \beta. 
\eeq
We can also regard $\Lambda_n(p)$ as a subalgebra of $\End_\K(\Lambda_n(p))$, which acts by multiplication from the left.  

\begin{definition}\label{def:Wn}
Denote by $\W_n(p)$ the $\Z_2^p$-graded subalgebra of $\End_\K(\Lambda_n(p))$ generated by $\theta_i^{(\alpha)}$ and $\frac{\partial}{\partial \theta_i^{(\alpha)}}$ for all $1\le i\le n$ and $1\le \alpha\le p$. Call $\W_n(p)$ a {\em $\Z_2^p$-graded Clifford algebra}. 
\end{definition}

One can easily prove the following result.  
\begin{lemma} \label{lem:theta-comp}  The following relations hold in $\W_n(p)$ for all $i, j, \alpha, \beta$.
\beq\label{eq:theta-comp}
\phantom{XXX} \left[\theta_i^{(\alpha)}, \theta_j^{(\beta)}\right]_{gr}=0,\quad
\left[ \frac{\partial}{\partial \theta_i^{(\alpha)}}, \frac{\partial}{\partial \theta_j^{(\beta)}}\right]_{gr}=0, \quad \left[ \frac{\partial}{\partial \theta_i^{(\alpha)}}, \theta_j^{(\beta)}\right]_{gr}= \delta_{i j} \delta_{\alpha \beta}.
\eeq
\end{lemma}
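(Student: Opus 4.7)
The plan is to verify the three identities separately, exploiting the fact that $\Lambda_n(p)$ embeds in $\W_n(p)$ via left multiplication, and that the partial derivatives are by definition $\Z_2^p$-graded derivations satisfying \eqref{eq:grad-der} and \eqref{eq:grad-der-1}.

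For the first identity, it suffices to note that \eqref{eq:comps} rewrites as $\theta_i^{(\alpha)} \theta_j^{(\beta)} - (-1)^{({\bf e}_\alpha, {\bf e}_\beta)} \theta_j^{(\beta)} \theta_i^{(\alpha)} = 0$, which is precisely $\bigl[\theta_i^{(\alpha)}, \theta_j^{(\beta)}\bigr]_{gr}=0$. Since $\Lambda_n(p)$ acts faithfully on itself by left multiplication, this equality transfers to $\W_n(p)$.

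For the third identity, I would apply the asserted operator identity to an arbitrary $B \in \Lambda_n(p)$. Using the Leibniz rule \eqref{eq:grad-der} with $\partial = \partial/\partial \theta_i^{(\alpha)}$ (of degree ${\bf e}_\alpha$) and $A = \theta_j^{(\beta)}$ (of degree ${\bf e}_\beta$), together with \eqref{eq:grad-der-1}:
\begin{equation*}
\frac{\partial}{\partial \theta_i^{(\alpha)}}\!\bigl(\theta_j^{(\beta)} B\bigr) = \delta_{ij}\delta_{\alpha\beta}\, B + (-1)^{({\bf e}_\alpha, {\bf e}_\beta)}\, \theta_j^{(\beta)}\, \frac{\partial}{\partial \theta_i^{(\alpha)}}(B).
\end{equation*}
Reinterpreting $\theta_j^{(\beta)}$ as the operator of left multiplication, this is the asserted identity.

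For the second identity, I would invoke the standard fact that, on a $\Z_2^p$-graded algebra, the graded commutator of two graded derivations is again a graded derivation (of the sum of the degrees). This is a direct calculation from the Leibniz rule, analogous to the ungraded case. Consequently $\bigl[\partial/\partial\theta_i^{(\alpha)}, \partial/\partial\theta_j^{(\beta)}\bigr]_{gr}$ is a graded derivation of $\Lambda_n(p)$ of degree ${\bf e}_\alpha + {\bf e}_\beta$, and such a derivation is determined by its values on the generators. Evaluating it on $\theta_k^{(\gamma)}$ produces $\partial/\partial\theta_i^{(\alpha)}(\delta_{jk}\delta_{\beta\gamma}) - (-1)^{({\bf e}_\alpha, {\bf e}_\beta)} \partial/\partial\theta_j^{(\beta)}(\delta_{ik}\delta_{\alpha\gamma}) = 0$, since both arguments are scalars. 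Hence the graded commutator vanishes as an element of $\End_\K(\Lambda_n(p))$, and in particular in $\W_n(p)$.

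The only nontrivial ingredient is the lemma that the graded commutator of graded derivations is a graded derivation; once this is granted, all three identities follow essentially mechanically from \eqref{eq:comps}, \eqref{eq:grad-der}, and \eqref{eq:grad-der-1}.
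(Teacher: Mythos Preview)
Your argument is correct: each of the three identities follows mechanically from the defining relations \eqref{eq:comps}, the graded Leibniz rule \eqref{eq:grad-der}, and \eqref{eq:grad-der-1}, together with the standard fact that the graded commutator of two graded derivations is again a graded derivation. The paper itself does not supply a proof of this lemma, merely asserting that ``one can easily prove'' it; your write-up is exactly the kind of routine verification the author had in mind, so there is no meaningful comparison of approaches to make.
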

Note that the relation among $\theta_i^{(\alpha)}$'s is nothing else but \eqref{eq:comps}, which is included here as part of a presentation of $\W_n(p)$.  

Denote by ${\mathbb\Delta}_n(p)$ the subalgebra of $\W_n(p)$ generated by all $\frac{\partial}{\partial \theta_i^{(\alpha)}}$. A basis for it can be easily constructed.  
For any $\epsilon=(\epsilon_1, \epsilon_2, \dots, \epsilon_n)\in \{0, 1\}^n$,  let
\[
\partial^{(\alpha), \epsilon}= \left(\frac{\partial}{\partial\theta_1^{(\alpha)}}\right)^{\epsilon_1} \left(\frac{\partial}{\partial\theta_2^{(\alpha)}}\right)^{\epsilon_2} \dots \left(\frac{\partial}{\partial\theta_n^{(\alpha)}}\right)^{\epsilon_n}.
\]
For any $\SE=(\epsilon^{(1)}, \epsilon^{(2)}, \dots, \epsilon^{(p)})$ in $\{0, 1\}^{n\times p}$, we denote 
\beq\label{eq:partial-E}
\partial^\SE= \partial^{(1), \epsilon^{(1)}} \partial^{(2), \epsilon^{(2)}} \dots \partial^{(p), \epsilon^{(p)}}. 
\eeq
Then we have the  basis $
\left\{\partial^\SE \mid \SE \in \{0, 1\}^{n\times p}\right\}
$
for ${\mathbb\Delta}_n(p)$. 

The following elements of ${\mathbb\Delta}_n(p)$ will play a special role.
\beq\label{eq:diff-int}
D= \prod_{\alpha=1}^p D^{(\alpha)}, \quad  D^{(\alpha)}=\frac{\partial}{\partial\theta_n^{(\alpha)}} \frac{\partial}{\partial\theta_{n-1}^{(\alpha)}} \dots \frac{\partial}{\partial\theta_1^{(\alpha)}}, 
\eeq

Now $\Lam_n(p)$ and ${\mathbb\Delta}_n(p)$ are subalgebras of $\W_n(p)$.  The multiplication of $\W_n(p)$ gives a 
vector space isomorphism $\Lam_n(p)\ot_\K {\mathbb\Delta}_n(p)\lra \W_n(p)$. 
It follows that there is the following basis 
$
\left\{\Theta^\SE\partial^{\SE'} \mid \SE, \SE' \in \{0, 1\}^{n\times p}\right\}
$
for $\W_n(p)$.

\subsection{Integral on $\Lam_n(p)$}

The degree $1$ Grassmann algebra is $\Lam_1(1)=\K+\K\zeta$, where $\zeta=\zeta_1$, and hence $\frac{\partial}{\partial \zeta}$ maps $\Lam_1(1)$ to $\K$. 
 Berezin's original definition of the integral $\int_{(1)}: \Lam_1(1)\lra\K$ is to simply take it as the derivation, i.e., 
\[
\int_{(1)}  (a+ b\zeta)=\frac{\partial}{\partial \zeta}(a+b\zeta)=b, \ \text{  for all $a, b\in\K$.}
\]
This immediately generalises to arbitrary degrees, leading to the Berezin integral $\int_{(1)}: \Lam_n(1)\lra\K$ for any $n\ge 1$ defined by
\beq\label{eq:Berezin}
\int_{(1)} F(\zeta_1, \zeta_2, \dots, \zeta_n) = \frac{\partial}{\partial \zeta_n}  \frac{\partial}{\partial \zeta_{n-1}} \dots \frac{\partial}{\partial \zeta_1}F(\zeta_1, \zeta_2, \dots, \zeta_n).
\eeq

\begin{remark}
The Berezin integral has translational invariance, which however needs to be formulated over the Grassmann algebra $\Lam_\infty(1)$ of infinite degree \cite{dW}.  Extend $\frac{\partial}{\partial \zeta_i}$ to $\Z_2$-graded derivations of $\Lam_n(1)\ot_\C\Lam_\infty(1)$. Then for any odd elements $\lambda_i\in\Lam_\infty(1)$, we have
$
\frac{\partial}{\partial \zeta_i}(\zeta_j+\lambda_j)=\frac{\partial}{\partial \zeta_i}(\zeta_j)=\delta_{i j}. 
$
Now one can extend the Berezin integral \eqref{eq:Berezin} to a homogeneous map of degree $n\ ({\rm mod}\  2)$ from $\Lam_n(1)\ot_\C\Lam_\infty(1)$ to $\Lam_\infty(1)$, which satisfies 
\beq
\int_{(1)} F(\zeta_1+\lambda_1, \zeta_2+\lambda_2, \dots, \zeta_n+\lambda_n)
= \int_{(1)} F(\zeta_1, \zeta_2, \dots, \zeta_n). 
\eeq
\end{remark}

The Berezin integral can be generalised to an integral on the $\Z_2^p$-graded commutative algebra $\Lam_n(p)$.

Recall that  for each fixed $\alpha$, the elements $\theta_i^{(\alpha)}$ with $1\le i\le n$ generate a graded subalgebra $\Lam_n^{(\alpha)}(1)$ of $\Lambda_n(p)$, which is isomorphic to the Grassmann algebra $\Lam_n(1)$ as a $\Z_2$-graded algebra. Thus we have the Berezin integral 
$\int_{(1)}: \Lam_n^{(\alpha)}(1)\lra \K$. 

As $\Lambda_n(p)$ is the tensor product of all  $\Lambda_n^{(\alpha)}$ as described by \eqref{eq:grass-tensor},  we have the natural $\K$-linear map 
\beq\label{eq:int}
\int_{(p)}:=\underbrace{\int_{(1)}\ot \int_{(1)}\ot\dots\ot\int_{(1)}}_p:  \Lambda_n(p)\lra \K,  
\eeq
which will be referred to as the integral on the $\Z_2^p$-graded algebra $\Lambda_n(p)$.

We have the following result. 
\begin{lemma} \label{lem:int-der} Retain notation above. 
The integral of any $F\in \Lam_n(p)$ is given by 
\beq\label{eq:int-der}
\int_{(p)} F = D(F). 
\eeq
\begin{proof}
This easily follows from \eqref{eq:Berezin}. 
\end{proof}
\end{lemma}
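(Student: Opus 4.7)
The plan is to exploit the tensor product structure $\Lambda_n(p)=\Lambda_n^{(1)}\otimes_\K\Lambda_n^{(2)}\otimes_\K\dots\otimes_\K\Lambda_n^{(p)}$ from \eqref{eq:grass-tensor}, together with the fact that for each fixed $\alpha$ the Berezin integral on the Grassmann subalgebra $\Lambda_n^{(\alpha)}$ is precisely $D^{(\alpha)}$ by \eqref{eq:Berezin}. By $\K$-linearity it suffices to verify \eqref{eq:int-der} on pure tensors $F=A_1A_2\cdots A_p$ with $A_\alpha\in\Lambda_n^{(\alpha)}$, since these span $\Lambda_n(p)$ (one can take $A_\alpha=\Theta^{(\alpha),\epsilon^{(\alpha)}}$ so that $F=\Theta^\SE$).

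For such an $F$, the definition \eqref{eq:int} gives directly
\begin{equation*}
\int_{(p)} F = \prod_{\alpha=1}^p \int_{(1)} A_\alpha = \prod_{\alpha=1}^p D^{(\alpha)}(A_\alpha),
\end{equation*}
where each $D^{(\alpha)}(A_\alpha)\in\K$. So the task reduces to showing that
\begin{equation*}
D^{(1)}D^{(2)}\cdots D^{(p)}(A_1A_2\cdots A_p) \;=\; D^{(1)}(A_1)\,D^{(2)}(A_2)\cdots D^{(p)}(A_p).
\end{equation*}

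The key point is that each individual derivation $\frac{\partial}{\partial\theta_i^{(\alpha)}}$ of degree ${\bf e}_\alpha$, when applied by the graded Leibniz rule \eqref{eq:grad-der} to the product $A_1\cdots A_p$, kills every factor $A_\beta$ with $\beta\neq\alpha$ (since those lie in $\Lambda_n^{(\beta)}$ and contain none of the $\theta_j^{(\alpha)}$), and moreover passes through each such $A_\beta$ without sign: for $A_\beta$ homogeneous of degree $k_\beta{\bf e}_\beta$, the sign is $(-1)^{({\bf e}_\alpha,\,k_\beta{\bf e}_\beta)}=(-1)^{k_\beta\delta_{\alpha\beta}}$, which equals $1$ whenever $\beta\neq\alpha$. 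Iterating this observation for the $n$ derivations in $D^{(\alpha)}$ yields
\begin{equation*}
D^{(\alpha)}(A_1 A_2\cdots A_p) \;=\; A_1\cdots A_{\alpha-1}\,D^{(\alpha)}(A_\alpha)\,A_{\alpha+1}\cdots A_p,
\end{equation*}
and since $D^{(\alpha)}(A_\alpha)$ is a scalar it factors out. Composing these identities for $\alpha=1,\dots,p$ from the outside in (and noting that the scalars just accumulate) gives the desired equality, completing the proof.

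The only potentially subtle point, which is the ``main obstacle'' in an otherwise routine verification, is the sign bookkeeping when moving each $D^{(\alpha)}$ past the factors $A_\beta$ with $\beta<\alpha$; this is handled uniformly by the observation $({\bf e}_\alpha,{\bf e}_\beta)=\delta_{\alpha\beta}$, which makes the graded commutators in \eqref{eq:theta-comp} reduce to ordinary commutators across distinct tensor factors. Once this is noted the computation is mechanical, consistent with the author's remark that the lemma follows easily from \eqref{eq:Berezin}.
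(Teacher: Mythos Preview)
Your proof is correct and is precisely the natural unpacking of the paper's one-line proof ``This easily follows from \eqref{eq:Berezin}'': both arguments use the tensor product decomposition \eqref{eq:grass-tensor} together with the Berezin formula \eqref{eq:Berezin} on each tensor factor. The only content you add is the explicit sign verification that $D^{(\alpha)}$ passes through $A_\beta$ for $\beta\neq\alpha$ without sign, which the paper leaves implicit.
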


Observe that $\Theta^{\SE}\in \Ker(D)$ if $\SE\ne \SE_{max}=(\underbrace{1, 1, \dots, 1}_{n p})$, and  $
D(\Theta^{\SE_{max}})=1. 
$

\subsection{Inhomogeneous subalgebras}\label{sect:inhom}
%
%
%
%

We should point out that material presented here is largely contained in \cite{OK80}, 
though not in the language of $\Z_2^p$-graded commutative algebras used here. In particular, the algebras $\ILam_n(p)$ and $\IW_n(p)$ studied below 
appeared in \cite{OK80}.

Consider the $\Z_2^p$-graded commutative algebra $\Lam_n(p)$ with generators $\theta_i^{(\alpha)}$, for $1\le i\le n$. Let 
\beq\label{eq:ansatz-theta}\label{eq:key} 
\theta_i=\sum_{\alpha=1}^p\theta_i^{(\alpha)}, \quad i=1, 2, \dots, n,  
\eeq
and denote by $\ILam_n(p)$ the subalgebra of $\Lam_n(p)$ generated by these elements. 
We will also refer to  $\ILam_n(p)$ as a  para-algebra, and call 
$\Lam_n(p)$  the {\em ambient algebra} of $\ILam_n(p)$. 
Note that $\ILam_n(p)$ is not $\Z_2^p$-graded as a subalgebra of $\Lam_n(p)$. 

\begin{lemma}\label{lem:ILam}
The algebra $\ILam_n(p)$ has the following properties:  
\beq
[\theta_k, [\theta_i, \theta_j]]=0,  \quad  \forall i, j, k,  \label{eq:the-the-2}
\eeq
and for any $\theta=\sum_i c_i \theta_i$ with $c_i\in\K$, 
\beq
\theta^p \theta_i = -\theta_i \theta^p, \quad \forall i.      \label{eq:power-p}
\eeq
\end{lemma}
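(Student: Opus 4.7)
The plan is to work inside the ambient $\Z_2^p$-graded commutative algebra $\Lam_n(p)$ and exploit the grading directly. The key observation is that for $\alpha\ne\beta$ one has $({\bf e}_\alpha,{\bf e}_\beta)=0$, so $\theta_i^{(\alpha)}$ and $\theta_j^{(\beta)}$ commute in the ordinary sense, while for $\alpha=\beta$ one has $({\bf e}_\alpha,{\bf e}_\alpha)=1$, so they anticommute. Moreover, any homogeneous element of $\Z_2^p$-degree $0$ is central in $\Lam_n(p)$.

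For the first assertion, I would expand
\[
[\theta_i,\theta_j]=\sum_{\alpha,\beta}[\theta_i^{(\alpha)},\theta_j^{(\beta)}]
=2\sum_{\alpha}\theta_i^{(\alpha)}\theta_j^{(\alpha)},
\]
since the cross terms $\alpha\ne\beta$ vanish. Each summand $\theta_i^{(\alpha)}\theta_j^{(\alpha)}$ is homogeneous of $\Z_2^p$-degree $2{\bf e}_\alpha=0$, hence central in $\Lam_n(p)$. Therefore $[\theta_i,\theta_j]$ is central in the ambient algebra, and in particular $[\theta_k,[\theta_i,\theta_j]]=0$.

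For the second assertion, I would introduce $\theta^{(\alpha)}=\sum_i c_i\theta_i^{(\alpha)}$, so $\theta=\sum_\alpha\theta^{(\alpha)}$. Each $\theta^{(\alpha)}$ is $\Z_2^p$-homogeneous of degree ${\bf e}_\alpha$, and the relations \eqref{eq:comps} give $(\theta^{(\alpha)})^2=0$ (skew-symmetric bilinear form against symmetric coefficients $c_ic_j$) while $\theta^{(\alpha)}\theta^{(\beta)}=\theta^{(\beta)}\theta^{(\alpha)}$ for $\alpha\ne\beta$. Applying the multinomial formula in pairwise-commuting, square-zero variables collapses everything to the single surviving term
\[
\theta^p=p!\,\theta^{(1)}\theta^{(2)}\cdots\theta^{(p)},
\]
which is $\Z_2^p$-homogeneous of degree ${\bf 1}:=\sum_{\alpha}{\bf e}_\alpha$.

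To finish, I decompose $\theta_i=\sum_\beta\theta_i^{(\beta)}$. Since $({\bf 1},{\bf e}_\beta)=1$ for every $\beta$, graded commutativity of $\Lam_n(p)$ yields $\theta^p\,\theta_i^{(\beta)}=-\theta_i^{(\beta)}\,\theta^p$ for each $\beta$, and summing over $\beta$ produces $\theta^p\theta_i=-\theta_i\theta^p$. The mildly non-trivial step is recognizing that $(\theta^{(\alpha)})^2=0$ so that the multinomial expansion collapses cleanly; beyond that, the argument is essentially just bookkeeping with the grading, so I expect no real obstacle.
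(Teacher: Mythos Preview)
Your proof is correct and follows essentially the same approach as the paper: both compute $[\theta_i,\theta_j]=2\sum_\alpha\theta_i^{(\alpha)}\theta_j^{(\alpha)}$ and deduce centrality, then write $\theta^p=p!\,\theta^{(1)}\cdots\theta^{(p)}$ via the square-zero, pairwise-commuting Green components to obtain the anticommutation. Your version is slightly more explicit about the $\Z_2^p$-degrees (noting that $[\theta_i,\theta_j]$ has degree $0$ and $\theta^p$ has degree ${\bf 1}$), but the argument is the same.
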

\begin{proof}
Using the first relation in \eqref{eq:theta-comp}, one easily obtains
$
{[\theta_i, \theta_j]}= 2 \sum_{\alpha=1}^n \theta_i^{(\alpha)} \theta_j^{(\alpha)}$, for all $i, j.   
$
This then leads to 
$
[\theta_k, [\theta_i, \theta_j]]=0$ for all $i, j, k. 
$ 

Any $\theta=\sum_i c_i \theta_i$ can be expressed as $\theta=\sum_\alpha \theta^{(\alpha)}$ with $\theta^{(\alpha)}=\sum_i c_i \theta_i^{(\alpha)}$.  Again by using the first relation in \eqref{eq:theta-comp}, we obtain $\theta^p= p! \theta^{(1)} \theta^{(2)}\dots \theta^{(p)}$. Then it is clear that $\theta^p \theta_i=-\theta_i\theta^p$ for all $i$, completing the proof of the lemma. 
\end{proof}
Note in particular that
\beq
\theta_j^p \theta_i=-\theta_i\theta_j^p, \quad \forall i, j.
\eeq

\begin{remark}\label{rmk:vanish}
If $d_j(i_1, i_2, \dots, i_\ell)=p$ for all $j=1, 2, \dots, n$, then $\Theta_{i_1, i_2, \dots, i_\ell}$ is a scalar multiple of $\theta_1^p \theta_2^p\dots \theta_n^p= (p!)^n \Theta^{\SE_{max}}$ (see \eqref{eq:Theta-E} for notation), where $\SE_{max}=(1, 1, \dots, 1)$ is of length $np$. 
\end{remark}

\begin{remark}\label{rmk:Z-grade}
Note that $\ILam_n(p)$ admits a $\Z_+$-grading.  Write $\Lam_n(p)=\sum_{\ell\ge 0}\ILam_n(p)_\ell$ with  $deg(\theta_i)=1$ for all $i$, then $\ILam_n(p)_\ell$
is spanned by the elements $\Theta_{i_1, i_2, \dots, i_\ell}$. In particular, 
\[
\baln
&\ILam_n(p)_0=\K, \quad \ILam_n(p)_1=\sum_i \K\theta_i, \\
&\ILam_n(p)_{n p}=\K \Theta^{\SE_{max}}, \quad  \ILam_n(p)_\ell=0, \ \forall \ell>n p.  
\ealn
\]
\end{remark}

Introduce the following inhomogeneous elements in $\W_n(p)$, 
\beq\label{eq:ansatz-Delta}
\Delta_i=\sum_{\alpha=1}^p\frac{\partial}{\partial\theta_i^{(\alpha)}}, 
\quad i=1, 2, \dots, n, 
\eeq
where $\frac{\partial}{\partial\theta_i^{(\alpha)}}$ are $\Z_2^p$-graded derivations of $\Lam_n(p)$ defined by \eqref{eq:grad-der-1}. 

Denote by $\IW_n(p)$ the subalgebra of  $\W_n(p)$ generated by $\theta_i$ and $\Delta_i$ for all $i=1, 2, \dots, n$, and we will also refer to it as a para Clifford algebra.

The algebra $\IW_n(p)$ naturally acts  on $\Lambda_n(p)$. We denote the action of any $Q\in \IW_n(p)$ on any $F\in \Lam_n(p)$ by $Q(F)$.  Then
\beq\label{eq:1}
\Delta_i(1)=0, \quad 
\Delta_i(\theta_j)=p\delta_{i j}, \quad \forall i, j. 
\eeq
It also has the following properties. 
\begin{lemma}\label{lem:IW} The following relations hold in $\IW_n(p)$ for all $i, j, k=1, 2, \dots, n$.
\beq
&&   [\theta_k, [\theta_i, \theta_j]]=0,  \quad 
 [\Delta_k, [\Delta_i, \Delta_j]]=0,   \label{eq:Del-1}\\
&&[\Delta_k, [\theta_i, \Delta_j]]=2\delta_{k i}\Delta_j, 
\quad [\theta_k, [\theta_i, \Delta_j]]= -2\delta_{ k j} \theta_i. \label{eq:Del-2}
\eeq
\begin{proof} 
The first relation in \eqref{eq:Del-1} is \eqref{eq:the-the-2}, which is included here for easy reference. 
To prove the rest of the lemma, we need the following relations for all $i, j$,
\beq
&&[\Delta_i, \Delta_j]=2\sum_{\alpha=1}^p\frac{\partial}{\partial\theta_i^{(\alpha)}} \frac{\partial}{\partial\theta_j^{(\alpha)}}. \label{eq:Del-Del}, \\
&&[\Delta_i, \theta_j]=p \delta_{i j} -2  \sum_{\alpha=1}^p\theta_j^{(\alpha)} \frac{\partial}{\partial\theta_i^{(\alpha)}}. \label{eq:Del-the}
\eeq
The first relation is quite evident, but \eqref{eq:Del-the} needs some computation to verify. Using Lemma \ref{lem:theta-comp}, we obtain
\[
[\Delta_i, \theta_j]= \sum_{\alpha}\left( \frac{\partial}{\partial\theta_i^{(\alpha)}} \theta_j^{(\alpha)} - \theta_j^{(\alpha)} \frac{\partial}{\partial\theta_i^{(\alpha)}}\right) =p \delta_{i j} -2  \sum_{\alpha}\theta_j^{(\alpha)} \frac{\partial}{\partial\theta_i^{(\alpha)}}, 
\]
proving \eqref{eq:Del-the}.

Now the formulae in the lemma can be verified by straightforward calculations using Lemma \ref{lem:theta-comp} and the relations \eqref{eq:Del-Del} and \eqref{eq:Del-the}. 
Consider, for example,  the first relation of \eqref{eq:Del-2}.
Using \eqref{eq:Del-the},  we obtain 
\[
[\Delta_k, [\theta_i, \Delta_j]] = 2\sum_{\alpha, \beta} 
\left[  \frac{\partial}{\partial\theta_k^{(\beta)}}, 
\theta_i^{(\alpha)} \frac{\partial}{\partial\theta_j^{(\alpha)}}
\right].
\]
Using the standard fact that $[C, A B] = [C, A]_{gr}B + (-1)^{(deg(C), \deg(A))} A[C, B]_{gr}$ for all homogeneous elements $A, B, C$, we can re-write the right hand side as 
\[
\baln
2\sum_{\alpha, \beta} 
\left[  \frac{\partial}{\partial\theta_k^{(\beta)}}, 
\theta_i^{(\alpha)}  \right]_{gr} \frac{\partial}{\partial\theta_j^{(\alpha)}}
   + 2\sum_{\alpha, \beta} 
(-1)^{({\bf e}_\alpha, {\bf e}_\beta)} \theta_i^{(\alpha)} \left[  \frac{\partial}{\partial\theta_k^{(\beta)}}, 
 \frac{\partial}{\partial\theta_j^{(\alpha)}}
\right]_{gr}.
\ealn
\]
Now use Lemma \ref{lem:theta-comp} in the expression. The second term vanishes, and the first term is equal to
$
2\sum_{\alpha, \beta} \delta_{k i} \delta_{\alpha \beta}
\frac{\partial}{\partial\theta_j^{(\alpha)}} = 2 \delta_{k i}\Delta_j, 
$
proving the first relation of  \eqref{eq:Del-2}. 
\end{proof}
\end{lemma}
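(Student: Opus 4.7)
The first identity, $[\theta_k, [\theta_i, \theta_j]] = 0$, is already established as \eqref{eq:the-the-2} in Lemma \ref{lem:ILam}, so I would simply cite it. For the remaining three identities, my plan is to reduce all computations to the ambient $\Z_2^p$-graded Clifford algebra $\W_n(p)$, where the graded commutation relations of Lemma \ref{lem:theta-comp} and the graded Leibniz rule for $\Z_2^p$-homogeneous elements are available. First I would compute the auxiliary commutators $[\Delta_i, \Delta_j]$ and $[\Delta_i, \theta_j]$ by expanding $\Delta_i = \sum_\alpha \frac{\partial}{\partial \theta_i^{(\alpha)}}$ and $\theta_j = \sum_\beta \theta_j^{(\beta)}$. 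For every off-diagonal pair $\alpha \ne \beta$ one has $({\bf e}_\alpha, {\bf e}_\beta) = 0$, so the graded commutator coincides with the ordinary one and vanishes by Lemma \ref{lem:theta-comp}; only the diagonal terms survive, yielding
\[
[\Delta_i, \Delta_j] = 2\sum_\alpha \frac{\partial}{\partial \theta_i^{(\alpha)}} \frac{\partial}{\partial \theta_j^{(\alpha)}}, \qquad
[\Delta_i, \theta_j] = p\delta_{ij} - 2\sum_\alpha \theta_j^{(\alpha)} \frac{\partial}{\partial \theta_i^{(\alpha)}}.
\]

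The key observation in the second step is that each summand in these two expressions is $\Z_2^p$-homogeneous of degree $2{\bf e}_\alpha = 0$, that is, $\Z_2^p$-even. Against a $\Z_2^p$-even element the ordinary commutator coincides with the graded commutator, which obeys the graded Leibniz rule. I would therefore decompose the outer commutator $[\Delta_k, \cdot]$ or $[\theta_k, \cdot]$ into $\beta$-summands, apply graded Leibniz to each two-factor product, and evaluate the inner graded commutators using Lemma \ref{lem:theta-comp}. For $[\Delta_k, [\Delta_i, \Delta_j]]$ every inner bracket $[\partial, \partial]_{gr}$ vanishes, giving zero. For $[\Delta_k, [\theta_i, \Delta_j]]$ a single $\delta_{ki}\delta_{\alpha\beta}$ survives and sums to $2\delta_{ki}\Delta_j$. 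For $[\theta_k, [\theta_i, \Delta_j]]$ the analogous computation produces $\delta_{kj}\delta_{\alpha\beta}$ accompanied by an extra sign $(-1)^{({\bf e}_\beta, {\bf e}_\alpha)} = -1$ arising from the graded swap in the Leibniz rule, yielding $-2\delta_{kj}\theta_i$.

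The main obstacle will be the sign bookkeeping. Since $\theta_i$ and $\Delta_i$ are $\Z_2^p$-inhomogeneous, one cannot apply graded Leibniz directly to the outermost \emph{ordinary} commutator. The trick is to first resolve the outer commutator into its $\beta$-summands and then recognise that the inner subexpression has $\Z_2^p$-degree $0$, so that ordinary and graded commutators coincide on it. Once this identification is made and one fixes a consistent sign convention for when $(-1)^{({\bf e}_\alpha, {\bf e}_\beta)}$ appears, the remaining work is a short mechanical computation using only Lemma \ref{lem:theta-comp} and the explicit formulas for $[\Delta_i, \Delta_j]$ and $[\Delta_i, \theta_j]$ derived above.
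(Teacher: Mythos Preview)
Your proposal is correct and follows essentially the same approach as the paper's proof: both cite \eqref{eq:the-the-2} for the first identity, derive the auxiliary formulas $[\Delta_i,\Delta_j]=2\sum_\alpha \partial_{\theta_i^{(\alpha)}}\partial_{\theta_j^{(\alpha)}}$ and $[\Delta_i,\theta_j]=p\delta_{ij}-2\sum_\alpha \theta_j^{(\alpha)}\partial_{\theta_i^{(\alpha)}}$, and then finish by expanding the outer commutator into $\Z_2^p$-homogeneous summands and applying the graded Leibniz rule together with Lemma~\ref{lem:theta-comp}. Your explicit remark that each summand $\theta_i^{(\alpha)}\partial_{\theta_j^{(\alpha)}}$ has $\Z_2^p$-degree $0$, so that the ordinary and graded commutators agree on it, is a useful clarification that the paper leaves implicit.
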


We note that \eqref{eq:Del-2} implies
\beq
&&   [\Delta_k, [\theta_i, \theta_j]]=2\delta_{k i}\theta_j-2\delta_{k j} \theta_i, \label{eq:Del-3}\\
&& [\theta_k, [\Delta_i, \Delta_j]] = 2\delta_{k i} \Delta_j - 2\delta_{k j} \Delta_i.  \label{eq:Del-4}
\eeq
By using the second relation in \eqref{eq:theta-comp}, and similar arguments as those in the proof of Lemma \ref{lem:Xi-ILam}, we can show that  for any $\Delta=\sum_i c_i\Delta_i$ with $c_i\in\K$, 
 \beq \label{eq:Del-p-1}
\Delta^p \Delta_j= -  \Delta_j \Delta^p, \quad \forall  j. 
\eeq


Now we can extract an integral on $\ILam_n(p)$ from the integral $\int_{(p)}$ on the ambient algebra $\Lam_n(p)$.

\begin{definition}\label{def:int-ILam} Denote the restriction of the $\K$-linear map \eqref{eq:int} to $\ILam_n(p)$ by 
\[
\int_{(p)}: \ILam_n(p)\lra \K,
\]
 and call it an integral on the para-algebra $\ILam_n(p)$ of order $p$. 
\end{definition}

The following result immediately follows from Lemma \ref{lem:int-der} and \eqref{eq:diff-int-1}.  
\begin{theorem} \label{thm:int-p} For any $F\in \ILam_n(p)$, 
\[
\int_{(p)} F = \frac{1}{(p!)^n}\Delta_n^p \Delta_{n-1}^p \dots \Delta_1^p(F). 
\]
Hence Definition \ref{def:int-ILam}  is independent of the ambient algebra $\Lam_n(p)$.
\end{theorem}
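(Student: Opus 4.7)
The plan is to reduce the theorem to two ingredients already available in the appendix: \lemref{lem:int-der}, which expresses $\int_{(p)} F$ on the ambient algebra $\Lam_n(p)$ as $D(F)$, and an explicit rewriting of the operator $D$ in terms of the $\Delta_i$. Since $\ILam_n(p)\subset \Lam_n(p)$, the identity $\int_{(p)} F = D(F)$ applies verbatim to $F\in\ILam_n(p)$, so the entire content of the formula lies in establishing the factorisation
\[
D \;=\; \frac{1}{(p!)^n}\,\Delta_n^p \Delta_{n-1}^p \cdots \Delta_1^p
\]
as an identity in $\W_n(p)$.

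To verify this factorisation, I would exploit the commutation relations \eqref{eq:theta-comp} for the partial derivations $\frac{\partial}{\partial \theta_i^{(\alpha)}}$. Two observations drive the computation: firstly, for fixed $i$, the derivations $\frac{\partial}{\partial \theta_i^{(\alpha)}}$ and $\frac{\partial}{\partial \theta_i^{(\beta)}}$ with $\alpha\ne\beta$ carry different $\Z_2^p$-degrees ${\bf e}_\alpha, {\bf e}_\beta$, hence commute, while each squares to zero; secondly, for any $i, j$ and any $\alpha\ne\beta$, the derivations $\frac{\partial}{\partial \theta_i^{(\alpha)}}$ and $\frac{\partial}{\partial \theta_j^{(\beta)}}$ commute. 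Applying the multinomial theorem to $\Delta_i^p = \bigl(\sum_{\alpha=1}^{p} \frac{\partial}{\partial \theta_i^{(\alpha)}}\bigr)^p$, the only surviving term is the one in which each index $\alpha$ occurs exactly once, yielding $\Delta_i^p = p!\,\frac{\partial}{\partial\theta_i^{(1)}}\frac{\partial}{\partial\theta_i^{(2)}}\cdots\frac{\partial}{\partial\theta_i^{(p)}}$. Forming the product $\Delta_n^p\Delta_{n-1}^p\cdots\Delta_1^p$ produces $(p!)^n$ times a string of derivations ordered by $i$ first and $\alpha$ second; regrouping this string by $\alpha$ first and $i$ second — which is legitimate since the transpositions involved are between different $\alpha$'s — reproduces $(p!)^n D$, as required. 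This is precisely the formula \eqref{eq:diff-int-1} anticipated in the main text.

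For the independence assertion, once the displayed formula is in hand, the right-hand side involves only the operators $\Delta_i$ restricted to $\ILam_n(p)$, together with the intrinsic algebra structure of $\ILam_n(p)$. By \thmref{thm:paralg-1} (transported to $\ILam_n(p)$ via the isomorphism $\iota_0$ of \lemref{cor:iso-3}), the action of each $\Delta_i$ on $\ILam_n(p)$ is uniquely characterised by the defining relations \eqref{eq:1}, \eqref{eq:Del-1}, \eqref{eq:Del-2} of the para Clifford algebra, with no reference to the ambient $\Z_2^p$-graded algebra. Consequently $\int_{(p)} F$ depends only on $F$ as an element of the intrinsic para-algebra $\ILam_n(p)$, which is what the statement demands. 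I expect the only nontrivial step to be the careful bookkeeping of signs in the reordering of partial derivations that establishes the factorisation of $D$; everything else is a direct citation of prior results.
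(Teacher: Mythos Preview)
Your proposal is correct and follows the same route as the paper: the proof there simply cites \lemref{lem:int-der} together with the factorisation \eqref{eq:diff-int-1} (established earlier in the proof of \thmref{thm:mod-1} by exactly the multinomial-plus-reordering argument you describe), and your justification of the independence claim via the uniqueness statement in \thmref{thm:paralg-1} is the intended reading of the terse ``Hence'' in the paper. The sign bookkeeping you flag as the only nontrivial step is in fact trivial here, since every transposition needed in the reordering is between derivations with distinct $\Z_2^p$-degrees ${\bf e}_\alpha \ne {\bf e}_\beta$, which commute.
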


We can easily evaluate the integral on the spanning elements $\Theta_{i_1, i_2, \dots, i_\ell}$ of $\ILam_n(p)$. 
\begin{theorem} Retain notation in Theorem \ref{thm:vanish}. Then
\[
\int_{(p)}\Theta_{i_1, i_2, \dots, i_\ell}\ne 0 \ \text{  only if $d_j(i_1, i_2, \dots, i_\ell)=p$ for all $j$}, 
\]
and  in particular, 
$\int_{(p)}
\theta_1^p \theta_2^p\dots \theta_n^p=(p!)^n$.  
\end{theorem}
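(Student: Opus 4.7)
The plan is to reduce the integral to the operator $D$ acting on the ambient algebra $\Lam_n(p)$, exploit the fact that $D$ annihilates every basis vector $\Theta^\SE$ other than $\Theta^{\SE_{max}}$, and then expand $\Theta_{i_1,\dots,i_\ell}$ via the Green ansatz \eqref{eq:ansatz-theta} to see which sequences $(i_1,\dots,i_\ell)$ can possibly contribute to the top component.

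First I would recall Lemma \ref{lem:int-der} (or equivalently Theorem \ref{thm:int-p}) which identifies $\int_{(p)} F$ with $D(F)$, where $D$ is the differential operator in \eqref{eq:diff-int}. By construction, $D$ is homogeneous of total degree $np$ (in the $\Z_+$-grading of $\Lam_n(p)$ assigning degree $1$ to each $\theta_i^{(\alpha)}$) and on the basis $\{\Theta^\SE\}$ we have $D(\Theta^\SE)=0$ unless $\SE=\SE_{max}$, with $D(\Theta^{\SE_{max}})=1$. So only the coefficient of $\Theta^{\SE_{max}}$ in $F$ survives.

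Next I expand $\Theta_{i_1,\dots,i_\ell}=\theta_{i_1}\theta_{i_2}\dots\theta_{i_\ell}$ by substituting $\theta_{i_r}=\sum_{\alpha_r=1}^p\theta_{i_r}^{(\alpha_r)}$. This yields a sum over multi-indices $(\alpha_1,\dots,\alpha_\ell)\in\{1,\dots,p\}^\ell$ of monomials $\theta_{i_1}^{(\alpha_1)}\dots\theta_{i_\ell}^{(\alpha_\ell)}$. Such a monomial is, up to sign from the anticommutation within a fixed level $\alpha$, either zero or a scalar multiple of some basis element $\Theta^\SE$. It is a multiple of $\Theta^{\SE_{max}}$ precisely when each pair $(j,\alpha)\in\{1,\dots,n\}\times\{1,\dots,p\}$ is hit exactly once by $(i_r,\alpha_r)$. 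In particular this forces $\ell=np$ and, for each $j$, forces $d_j(i_1,\dots,i_\ell)=p$. If any $d_j\neq p$, no monomial in the expansion reaches the top, hence $\int_{(p)}\Theta_{i_1,\dots,i_\ell}=0$, establishing the first assertion.

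For the specific evaluation $\int_{(p)}\theta_1^p\theta_2^p\dots\theta_n^p=(p!)^n$, I would first compute $\theta_j^p$. Since the generators $\theta_j^{(\alpha)}$, $\alpha=1,\dots,p$, all commute with each other (different levels commute, as $({\bf e}_\alpha,{\bf e}_\beta)=\delta_{\alpha\beta}$) and each squares to zero, expanding $(\sum_\alpha\theta_j^{(\alpha)})^p$ kills every term in which a level repeats, leaving $\theta_j^p=p!\,\theta_j^{(1)}\theta_j^{(2)}\dots\theta_j^{(p)}$. Multiplying these for $j=1,\dots,n$ and using again that factors at distinct levels commute (so the reordering to bring the product into the form $\prod_{\alpha=1}^p(\theta_1^{(\alpha)}\theta_2^{(\alpha)}\dots\theta_n^{(\alpha)})=\Theta^{\SE_{max}}$ involves only commutations between $\theta_i^{(\alpha)}$ and $\theta_j^{(\beta)}$ with $\alpha\neq\beta$, which pick up no signs) gives $\theta_1^p\theta_2^p\dots\theta_n^p=(p!)^n\,\Theta^{\SE_{max}}$. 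Applying $D$ concludes $\int_{(p)}\theta_1^p\dots\theta_n^p=(p!)^n$.

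The only delicate point, and the one I would verify carefully, is the sign analysis in the last reordering: one must check that all transpositions needed to pass from $\prod_j\prod_\alpha\theta_j^{(\alpha)}$ to $\prod_\alpha\prod_j\theta_j^{(\alpha)}$ occur between generators living in distinct levels $\alpha\neq\beta$, so that \eqref{eq:comps} yields a $+1$ in every step. Given that same-level generators are never swapped, this is routine, and there is no obstacle beyond bookkeeping.
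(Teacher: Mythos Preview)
Your proof is correct and follows essentially the same approach as the paper: expand $\Theta_{i_1,\dots,i_\ell}$ in the basis $\{\Theta^\SE\}$ of $\Lam_n(p)$, observe that only the $\Theta^{\SE_{max}}$ coefficient survives under $D$, and invoke the identity $\theta_1^p\cdots\theta_n^p=(p!)^n\Theta^{\SE_{max}}$ (which the paper records in Remark~\ref{rmk:vanish} and the proof of Lemma~\ref{lem:ILam}). Your explicit sign analysis for the reordering is more detailed than the paper's, but the argument is the same.
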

\begin{proof}
We express $\Theta_{i_1, i_2, \dots, i_\ell}$ as a linear combination of basis elements of $\Lam_n(p)$ given by \eqref{eq:Theta-E}. Then $\Theta^{\SE_{max}}$ appears with non-zero coefficient only if $d_j(i_1, i_2, \dots, i_\ell)=p$ for all $j$. In particular,  $\theta_1^p \theta_2^p\dots \theta_n^p=(p!)^n\Theta^{\SE_{max}}$ and $\int_{(p)}
\theta_1^p \theta_2^p\dots \theta_n^p=(p!)^n$. 
\end{proof}

\begin{remark}
There is a Hopf algebraic construction \cite[Example 1]{SchZ01} of the Berezin integral on any Grassmann algebra. The same construction works for classical and quantum supergroups \cite{SchZ01, SchZ05}, and can also be generalised to para-algebras of any finite degrees and orders. 
\end{remark}

\medskip

\subsection*{Acknowledgements} 
We thank Naruhiko Aizawa, Alan Carey, Mark Gould, Peter Jarvis and Joris Van der Jeugt for comments. 

\medskip

\subsection*{Declaration}
The authors declare that they have no known competing financial interests or personal relationships 
that could have appeared to influence the work reported in this paper.

\medskip





\begin{thebibliography}{9999}

\bibitem{AD} Aizawa, N.; Doi, S.
``Irreducible representations of  $\Z_2^2$-graded  $N=2$  supersymmetry algebra and  $\Z_2^2$-graded supermechanics''.  
{\sl J. Math. Phys. \bf 63} (2022), no. 9, Paper No. 091704, 12 pp.


\bibitem{AI} N. Aizawa and Ren Ito, 
``Integration on minimal $\Z_2^2$-superspace and emergence of space''.
arXiv:2305.07836.


\bibitem{AIS}   N. Aizawa, P. S. Isaac, J. Segar, 
``$\Z_2\times\Z_2$ generalizations of infinite dimensional Lie superalgebra of conformal type with complete classification of central extensions''. {\sl Rep. Math. Phys. \bf 85} (2020) 351.

\bibitem{AKT2} N. Aizawa, Z. Kuznetsova, F. Toppan,
``$\Z_2\times\Z_2$-graded mechanics: the quantization''.
{\sl Nucl. Phys. \bf B 967} (2021) article 115426


\bibitem{AKT1} N. Aizawa, Z. Kuznetsova, F. Toppan,  
``$\Z_2\times\Z_2$-graded mechanics: the classical theory''
{\sl Eur. Phys. J. \bf C} (2020) 80:668.

\bibitem{AG} C. Huerta Alderete, Alaina M. Green, Nhung H. Nguyen, Yingyue Zhu, 
B. M. Rodríguez-Lara, Norbert M. Linke,  
``Experimental realization of para-particle oscillators''. 	arXiv:2108.05471.

\bibitem{B} Marjorie Batchelor, 
``Two approaches to supermanifolds''. 
{\sl Trans. American Math. Society \bf 258} (1980) No. 1, 257 - 270. 

\bibitem{BG}  A. J. Bracken and H. S. Green, ``Algebraic identities for parafermi statistics of given order''.
{\sl Nuovo Cimento, \bf  9 A}(1972), no. 3 1, 349-365.

\bibitem{BDZ} 
L. Brink, S. Deser, B. Zumino, P. Di Vecchia and P. S. Howe, “Local supersymmetry for spinning particles,” 
{\sl Phys. Lett. \bf B 64} (1976) 435 - 438.

\bibitem{BdVP}  Brink, L.; di Vecchia, P.;  Howe, P.,  ``A Lagrangian formulation of the classical and quantum dynamics of spinning particles.''  {\sl Nuclear Phys. \bf B 118} (1977), no. 1 - 2, 76 - 94. 

\bibitem{BIP}
Bruce, Andrew James; Ibarguengoytia, Eduardo; Poncin, Norbert, 
 ``Linear  $\Z^n_2$-manifolds and linear actions''
{\sl SIGMA Symmetry Integrability Geom. Methods Appl. \bf 17} (2021), Paper No. 060, 58 pp.



\bibitem{Ca} A. L. Carey, 
``On the application of parafermion algebras of order $2$.''
{\sl Prog. Theor. Physics, \bf 49} (1973), Issue 2, 658 - 666, 


\bibitem{CTZZ} Chaichian, M.; Tureanu, A.; Zhang, R. B.; Zhang, Xiao, 
``Riemannian geometry of noncommutative surfaces''.
{\sl J. Math. Phys. \bf 49} (2008), no. 7, 073511, 26 pp.

\bibitem{C} Connes, A., {\em Noncommutative geometry}. Academic press, San Diego, 1994. 

\bibitem{CGP-1} Covolo, Tiffany; Grabowski, Janusz; Poncin, Norbert, 
{\sl The category of  $\Z^n_2$ -supermanifolds}. 
{\sl J. Math. Phys. \bf 57} (2016), no. 7, 073503, 16 pp


\bibitem{CD}  J. Cuntz and D. Quillen, ``Algebra extensions and nonsingularity.'' {\sl J. Amer. Math.
Soc. \bf 8} (1995), 251 - 289.

\bibitem{DA} Doi, S.; Aizawa, N.
``Comments of  $\Z_2^2$-supersymmetry in superfield formalism''. 
{\sl Nuclear Phys. \bf B 974} (2022), Paper No. 115641, 17 pp.

\bibitem{D1} Delbourgo, R. ``Grassmann wave functions and intrinsic spin''. {\sl Internat. J. Modern Phys. \bf A 3}
(1988), no. 3, 591 - 602.

\bibitem{D3} Delbourgo, R.  ``Grassmannian duality and the particle spectrum''. 
{\sl Int. J. Mod. Phys. \bf A31 } (2016) no. 26, 1650153, 18 pp.



\bibitem{DS} Delbourgo, Robert; Stack, Paul D. ``General relativity for N properties''. {\sl Modern Phys. Lett. \bf A 31} (2016), no. 3, 1650019, 7 pp.

\bibitem{DJW} Delbourgo, R.; Jarvis, P. D.; Warner, Roland C. ``Schizosymmetry: a new paradigm for superfield expansions''. {\sl Modern Phys. Lett. \bf A 9} (1994), no. 25, 2305 - 23. 

\bibitem{DZ-1} 
R. Delbourgo and R. B. Zhang, 
``Fermionic dimensions and Kaluza-Klein theory''.  {\sl Physics Letters \bf B 202} (1988) 296 - 300.

\bibitem{DZ-2}
R. Delbourgo and R. B. Zhang, 
``Grassmannian Kaluza-Klein theory and the standard model''. 
{\sl Phys. Rev. \bf D 38} (1988) 2490 - 2497. 


\bibitem{dW}  Bryce DeWitt,  {\em Supermanifolds}. Cambridge Monographs on Mathematical Physics, 2nd Edition, 1992. 

\bibitem{DZ} V.K. Dobrev, R.B. Zhang,  
``Positive energy unitary irreducible representations of the superalgebras $\osp(1|,2n, \R)$.''
{\sl Physics of Atomic Nuclei \bf 68} (2005) 1660 - 1669. 

\bibitem{G}  Green, H. S.,  ``A generalized method of field quantization''. {\sl Phys. Rev. \bf 90} (1953) 270 - 273.

\bibitem{GP}  M. D. Gould and J. Paldus, ``Para-Fermi algebras and the many-electron correlation problem''. 
{\sl Phys.  Rev. \bf A 34} (1886) 804 - 814. 


\bibitem{GZ}  Haoyuan Gao, Xiao Zhang,
``Deformation quantization and intrinsic noncommutative differential geometry''.  arXiv:2304.03581.

\bibitem{J} P. D. Jarvis, ``A parafermion generalization of Poincaré supersymmetry''.

\bibitem{L} D. A. Leites, ``Introduction to the theory of supermanifolds''.
{\sl Russian Mathematical Surveys,  \bf 35} (1980) Issue 1, 1 -
64

\bibitem{LSV} Lievens, S.; Stoilova, N. I.; Van der Jeugt, J., 
``The paraboson Fock space and unitary irreducible representations of the Lie superalgebra  $\osp(1|2n)$'',  
{\sl Comm. Math. Phys. \bf 281} (2008), no. 3, 805-826.

\bibitem{MRS} 
Bram Mesland,  Adam Rennie, Walter D. van Suijlekom, 
``Curvature of differentiable Hilbert modules and Kasparov modules.''  
{\sl Adv.  Math. \bf 402} (2022) 108 - 128.

\bibitem{OK80-1}  Ohnuki, Y.; Kamefuchi, S. ``Generalized Grassmann algebras with applications to Fermi systems''. {\sl J. Math. Phys. \bf 21} (1980)  601 - 608.

\bibitem{OK80} Y. Ohnuki and S. Kamefuchi, ``Para-Grassmann algebras with applications to para-Fermi systems''. {\sl J. Math. Phys. \bf 21} (1980) 609 - 616. 

\bibitem{OK} Y. Ohnuki and S. Kamefuchi, {\em Quantum Field Theory and parastatistics}.  Springer, New York, 1982.

\bibitem{OK90}  Ohnuki, Y.; Kamefuchi, S. {\em Groups with g-number parameters}. Group theoretical methods in physics (Moscow, 1990), 548 - 555, Lecture Notes in Phys., 382, Springer, Berlin, 1991. 
 
\bibitem{P} Palev, T.D.: ``Lie superalgebras, infinite-dimensional algebras and quantum statistics''. {\sl Rep. Math.
Phys. \bf 31}  (1992) 241 - 262.

\bibitem{Po} N. Poncin, 
``Towards integration on colored supermanifolds''. 
{\sl Banach Cent. Publ. \bf 110}  (2016) 201 - 217.
 

\bibitem{RW} V. Rittenberg and D. Wyler, 
``Generalized superalgebras''. {\sl  Nucl. Phys. \bf B 139}
(1978)  189 - 202.

\bibitem{Sch} M. Scheunert, ``Generalized Lie algebras''. {\sl J. Math. Phys. \bf 20} (1979) 712 - 720.

\bibitem{SchZ} Scheunert, M.; Zhang, R. B., 
``Cohomology of Lie superalgebras and their generalizations''.
{\sl J. Math. Phys. \bf 39} (1998), no. 9, 5024 - 5061.

\bibitem{SchZ01} Scheunert, M.; Zhang, R. B., 
``Invariant integration on classical and quantum Lie
supergroups.''
{\sl J. Math. Phys., \bf 42}  (2001) No. 8, 3871 - 3897. 

\bibitem{SchZ05} Scheunert, M.; Zhang, R. B., 
``Integration on Lie supergroups:
A Hopf superalgebra approach.''
{\sl J. Algebra \bf 292} (2005) 324 - 342.

\bibitem{SV18} Stoilova, N. I.; Van der Jeugt, J., 
``The  $\Z_2\times \Z_2$-graded Lie superalgebra  $\mathfrak{pso}(2m+1|2n)$  and new parastatistics representations'',  {\sl J. Phys. \bf A 51} (2018), no. 13, 135201, 17 pp.

\bibitem{SV23} Stoilova, N. I.; Van der Jeugt, J.,
``On classical  $\Z_2\times \Z_2$-graded Lie algebras''.
{\sl J. Math. Phys. \bf 64}  (2023), no. 6, Paper No. 061702, 8 pp.

\bibitem{SV08} Stoilova, N. I.; Van der Jeugt, J., ``
The parafermion Fock space and explicit so(2n+1) representations''.
{\sl J. Phys. A: Math. Theor. \bf 41} (2008) 075202



\bibitem{WZZ} Wang, Ding; Zhang, R. B.; Zhang, Xiao, 
``Exact solutions of noncommutative vacuum Einstein field equations and plane-fronted gravitational waves''.
{\sl Eur. Phys. J. C Part. Fields \bf 64} (2009), no. 3, 439 - 444.


\bibitem{W} R. O. Wells, {\em Differential Analysis on Complex Manifolds}. Graduate Texts in Mathematics, Vol {65}, Springer-Verlag, Heidelberg, 1973. 

\bibitem{Z} Zhang, R. B., 
``Orthosymplectic Lie superalgebras in superspace analogues of quantum Kepler problems''. 
{\sl Commun. Math. Phys. \bf 280} (2008) 545 562.

\bibitem{Z23-b} Zhang, R. B., ``Para-manifolds''. In preparation. 

\bibitem{ZZ} Zhang, R. B.; Zhang, Xiao,  
``Projective module description of embedded noncommutative spaces''.
{\sl Rev. Math. Phys. \bf 22} (2010), no. 5, 507 - 531.

\end{thebibliography}
\end{document}